\newcolumntype{L}[1]{>{\raggedright\let\newline\\\arraybackslash\hspace{0pt}}m{#1}}
\newcolumntype{C}[1]{>{\centering\let\newline\\\arraybackslash\hspace{0pt}}m{#1}}
\newcolumntype{R}[1]{>{\raggedleft\let\newline\\\arraybackslash\hspace{0pt}}m{#1}}
\newtheorem{theorem}{Theorem}[section]\theoremstyle{plain}
\newtheorem{corollary}[theorem]{Corollary}\theoremstyle{plain}
\theoremstyle{plain}
\newtheorem{lemma}[theorem]{Lemma}\theoremstyle{plain}
\theoremstyle{plain}
\newtheorem{claim}{Claim}\theoremstyle{plain}
\theoremstyle{plain}
\theoremstyle{plain}
\theoremstyle{plain}
\newtheorem{definition}[theorem]{Definition}\theoremstyle{plain}
\theoremstyle{plain}
\newcommand{\SFS}{\text{\rm SFS}}
\newcommand{\DiSFS}{\text{\rm DiSFS}}
\newcommand{\ignore}[1]{}
\newcommand{\MS}{{\mathcal S}}
\newcommand{\MB}{{\mathcal B}}
\newcommand{\MC}{{\mathcal C}}
\begin{document}
\title{\bf Similarity-First Search: a new algorithm with application to Robinsonian matrix recognition}
\author[1,2]{Monique Laurent}
\author[1]{Matteo Seminaroti}
\date{}

\affil[1]{\small Centrum Wiskunde \& Informatica (CWI), Science Park 123, 1098 XG Amsterdam, The Netherlands}
\affil[2]{\small Tilburg University, P.O. Box 90153, 5000 LE Tilburg, The Netherlands}
\footnotetext{Correspondence to : \texttt{M.Seminaroti@cwi.nl} (M.~Seminaroti), \texttt{M.Laurent@cwi.nl} (M.~Laurent), CWI, Postbus 94079, 1090 GB, Amsterdam. Tel.:+31 (0)20 592 4386.}


\maketitle

\begin{abstract}
We present a new efficient combinatorial algorithm for recognizing if a given symmetric matrix is Robinsonian, i.e., if its rows and columns can be simultaneously reordered so that entries are monotone nondecreasing in rows and columns when moving toward the diagonal.
As main ingredient we introduce a new algorithm, named Similarity-First-Search (SFS), which extends  Lexicographic Breadth-First Search (Lex-BFS) to weighted graphs and which we use in a multisweep algorithm to recognize Robinsonian matrices.
Since Robinsonian binary matrices correspond to unit interval graphs, our algorithm can be seen as a generalization to weighted graphs
of the  3-sweep Lex-BFS algorithm  of Corneil  for recognizing unit interval graphs.
This new recognition algorithm is extremely simple and it exploits new insight on the  combinatorial structure of Robinsonian matrices.
For an $n\times n$  nonnegative matrix with $m$ nonzero entries, it terminates in $n-1$ SFS sweeps,  with overall running time $O(n^2 +nm\log n)$.

\medskip

\noindent
\textbf{Keywords:}
\textit{Robinson (dis)similarity; partition refinement; seriation; Lex-BFS; LBFS; Similarity Search}
\end{abstract}

\graphicspath{{./figures/}}

\section{Introduction} 

The {seriation problem}, introduced by Robinson \cite{Robinson51} for chronological dating, is a classic and well known sequencing problem, where the goal is to order a given set of objects in such a way that similar objects are ordered close to each other (see e.g.~\cite{Innar10} and references therein for details). 
This problem arises in many applications where objects are given through some information about their pairwise similarities (or dissimilarities) (like in data about user ratings, images, sounds, etc.).

The seriation problem can be formalized using a special class of matrices, namely Robinson matrices.
A symmetric matrix $A=(A_{xy})_{x,y=1}^n$ is  a {\em Robinson similarity matrix}  if its entries are monotone nondecreasing  in the rows and columns when moving toward the main diagonal, i.e., if 
$A_{xz}\le \min\{A_{xy},A_{yz}\}$  for all $1\le x < y < z \le n$.
Given a set of $n$ objects to order and a symmetric matrix $A=(A_{xy})$ whose entries  represent their pairwise similarities,
the seriation problem asks to find a permutation~$\pi$ of $[n]$ so that the matrix $A_{\pi}=(A_{\pi(x)\pi(y)})$, obtained by permuting  both the rows and columns of $A$  simultaneously according to $\pi$, is a Robinson matrix.
The matrix~$A$ is said to be a {\em Robinsonian similarity matrix} if such a permutation exists.

\medskip
The Robinsonian structure is a strong property and, even though it might be desired in some problems, the data could be affected by noise, leading to the need to solve seriation in presence of error.
Finding a Robinsonian matrix which is closest in the $\ell_\infty$-norm to a given similarity matrix is an NP-hard problem~\cite{Chepoi09}. We refer to~\cite{Chepoi11} for an approximation algorithm  and  to~\cite{Goulermas15,Hahsler08}  for approaches to this problem. 
Nevertheless, Robinsonian matrices play an important role also when data is affected by noise, as Robinsonian recognition algorithms can be used as core subroutines to design efficient heuristics or approximation algorithms for solving seriation in presence of errors (see, e.g., \cite{Chepoi11,Fogel15}). 
In this paper we consider the problem of recognizing whether a given $n\times n$ matrix is Robinsonian.

In the past years, different recognition algorithms for Robinsonian matrices have been studied.  
The first polynomial algorithm to recognize Robinsonian matrices was introduced by Mirkin and Rodin \cite{Mirkin84}.
It is based on the characterization of Robinsonian matrices in terms of interval hypergraphs, and it uses the PQ-tree algorithm of Booth and Leuker~\cite{Booth76} as core subroutine,  with an overall running time of $O(n^4)$. 
Chepoi and Fichet \cite{Chepoi97} introduced later a simpler algorithm 
using a divide-and-conquer strategy applied to preprocessed data obtained by sorting the entries of $A$, lowering the running time to $O(n^3)$.
Using the same sorting preprocessing, Seston \cite{Seston08} improved the complexity of the recognition algorithm to $O(n^2\log n)$.
Recently, Pr\'ea and Fortin \cite{Prea14} presented an optimal $O(n^2)$  algorithm, using the algorithm from Booth and Leuker~\cite{Booth76} to compute a first PQ-tree which they update throughout the algorithm. While all these algorithms use the connection to interval graphs or hypergraphs, 
in our previous work \cite{Laurent15} we presented a  recursive recognition algorithm exploiting 
a connection to unit interval graphs and with core subroutine 
Lexicographic Breadth-First Search (Lex-BFS or LBFS), a special version of Breadth-First Search (BFS) introduced by Rose and Tarjan \cite{Rose75}.
The algorithm of \cite{Laurent15}  is suitable for sparse matrices and it runs in $O(d(m+n))$ time, where $m$ is the number of nonzero entries of $A$ and $d$ is the depth of the recursion tree computed by the algorithm, which is upper bounded by the number of distinct nonzero entries  of $A$.

While all the above mentioned recognition algorithms are combinatorial, Atkins et al. \cite{Atkins98} presented earlier a numerical spectral algorithm, based on reordering the entries of the second smallest eigenvector of the Laplacian matrix associated to $A$ (aka the Fiedler vector).
Given its simplicity, this algorithm is used in some classification applications (see, e.g., \cite{Fogel14}) as well as in spectral clustering (see, e.g., \cite{Barnard93}), and it runs in $O(n(T(n)+n\log n))$ time, where $T(n)$ is the complexity of computing (approximately) the eigenvalues of an $n\times n$ symmetric matrix. 

Note that  the algorithms in \cite{Atkins98}, \cite{Prea14} and \cite{Laurent15} also return all the possible Robinson orderings of a given Robinsonian matrix $A$, which can be useful in some practical applications.

\medskip
In this paper we introduce a new combinatorial recognition algorithm for Robinsonian matrices.
As a main ingredient, we define a  new exploration algorithm for weighted graphs, named \textit{Similarity-First Search} (SFS), which is a generalization of the classical Lex-BFS algorithm to weighted graphs.
Intuitively, the SFS algorithm explores vertices of a weighted graph in such a way that most similar vertices (i.e., corresponding to largest edge weights) are visited first,
while still respecting the priorities imposed by previously visited vertices.
When applied to an unweighted graph (or equivalently to a binary matrix),  the SFS algorithm reduces to Lex-BFS.
As for Lex-BFS, the SFS algorithm is entirely based on a unique simple task, namely partition refinement,  a basic operation about sets which can be  implemented efficiently (see \cite{Habib00} for  details).

We will use  the SFS algorithm to define our new  Robinsonian recognition algorithm. Specifically, we introduce a multisweep algorithm, where each sweep uses the order returned by the previous sweep to break ties in the (weighted) graph search.
Our main result in this paper is  that our multisweep algorithm can recognize after at most  $n-1$ sweeps whether a given $n\times n$ matrix $A$ is Robinsonian. Namely we will show that the last sweep is a Robinson  ordering of $A$ if and only if the matrix $A$ is Robinsonian. 
Assuming that the matrix $A$ is nonnegative and given as an adjacency list of an undirected weighted graph with $m$ nonzero entries, our algorithm runs in $O( n^2+ m n \log n)$ time.

Multisweep algorithms are well studied approaches to recognize classes of (unweighted) graphs (see, e.g.,~\cite{Corneil16}).
In the literature there exist many results on multisweep algorithms based on Lex-BFS and its variants.
For example, cographs can be recognized in 2 sweeps~\cite{Bretscher08}, unit interval graphs can be recognized in 3 sweeps~\cite{Corneil04} and interval graphs can be recognized in at most 5 sweeps~\cite{Corneil09}.
Very recently, Dusart and Habib \cite{Dusart15} introduced  a multisweep algorithm to recognize in at most $n$ sweeps cocomparability graphs.
For a more exhaustive list of multisweep algorithms please refer to \cite{Corneil05,Corneil09}.

As a graph is a unit interval graph if and only if its adjacency matrix is Robinsonian  \cite{Roberts69},
the 3-sweep recognition algorithm for unit interval graphs  of Corneil \cite{Corneil04} is in fact our main inspiration and motivation to develop a generalization of Lex-BFS for weighted graphs.

To the best of our knowledge, 
the present paper is the first work introducing and studying explicitely the properties of a multisweep search algorithm for weighted graphs.
The only related idea that we could find is about replacing BFS with Dijkstra's algorithm, which is  only briefly mentioned in \cite{Habib13}.

The relevance of this work is twofold.
First, we reduce the Robinsonian recognition problem to  an extremely simple and basic operation, namely to partition refinement.
Hence, even though from a theoretical point of view the algorithm is computationally slower than the optimal one presented in \cite{Prea14},
its simplicity makes it easy to implement and thus hopefully will encourage the use and the study of Robinsonian matrices in more practical problems.
Second, we introduce a  new (weighted) graph search, which we believe is of independent interest and could potentially be used for the recognition of other structured matrices or just as basic operation in the broad field of `Similarity Search'. 
In addition, we introduce some new concepts extending analogous notions in graphs, like the notion of  `path avoiding a vertex' and `anchors' of Robinson orderings, which capture well the combinatorial structure of Robinsonian matrices.
As an example,  we give combinatorial characterizations for the end points (aka anchors) of Robinson orderings.

\subsection*{Contents of the paper} 

The paper is organized as follows.
Section~\ref{sec:2-preliminaries} contains some preliminaries. 
In Subsection~\ref{sec:2-basic facts} we give  basic facts about Robinsonian matrices and Robinson orderings and we 
introduce several  concepts (path avoiding a vertex,
valid vertex, anchor) 
playing a crucial role in the paper.   
 Subsection~\ref{sec:2-anchor characterization} contains   combinatorial characterizations for (opposite) anchors of Robinsonian matrices.

Section \ref{sec:3-SFS algorithm} is devoted to  the SFS algorithm. First, we describe the algorithm in Subsection~\ref{sec:3-description algorithm}  and   we  characterize  SFS orderings  in Subsection \ref{sec:3-characterization SFS}.
Then, in Subsection \ref{sec:3-PAL} we introduce a fundamental lemma which we will use throughout the paper, named the `Path Avoiding Lemma'.
Finally,  in Subsection \ref{sec:3-end points}  we introduce the notion of `good SFS ordering' 
and we show properties of end-vertices of (good) SFS orderings, namely that they are (opposite) anchors of Robinsonian matrices.

In Section \ref{sec:4-SFS+ algorithm} we discuss the  variant $\SFS_+$ of the SFS algorithm,  an extension of Lex-BFS$_+$ to weighted graphs, which differs from SFS in the way ties are broken
The $\SFS_+$ algorithm takes a given ordering as input which it uses to break ties.
In Subsection \ref{sec:4-good SFS and end points} we show a basic property of the $\SFS_+$ algorithm, namely 
that it `flips' the end points of the input ordering.
Then in Subsection \ref{sec:4-similarity layers}  we introduce the  `similarity layers' of a matrix, a strengthened version of BFS layers for unweighted graphs, which are  useful for the correctness proof of the multisweep algorithm. We show in particular that the similarity layers enjoy some compatibility with Robinson  and $\SFS_+$ orderings.

In Section \ref{sec:5-the multisweep algorithm} we present the multisweep algorithm to recognize Robinsonian matrices and we prove its correctness. 
In  Subsection \ref{sec:5-description} we describe  the multisweep algorithm and show that it terminates in 3 sweeps when applied to a binary matrix, thus giving a new proof of the result of Corneil \cite{Corneil04} for unit interval graphs.
In Subsection \ref{sec:5-three-good SFS ordering} we study properties of  `3-good SFS orderings', which are orderings obtained after three $\SFS_+$ sweeps.  In particular we show that they contain classes of Robinson triples and that, after deleting their end points, they induce good SFS orderings, which will enable us to apply induction in the correctness proof.
After that we have all the ingredients needed to  conclude the correctness proof for the multisweep algorithm, 
we show  in Subsection  \ref{sec:5-final proof} that it can recognize  in at most $n-1$ sweeps whether an $n\times n$ matrix is Robinsonian. 
Furthermore, we present in Subsection~\ref{sec:5-worst case instances} a family of $n \times n$ Robinsonian matrices (communicated to us by S. Tanigawa) for which the SFS multisweep algorithm  requires exactly $n-1$ sweeps.

Finally, in Section \ref{sec:6-complexity} we discuss the complexity of the SFS algorithm, and we conclude with remarks and open questions  in  Section \ref{sec:7-conclusions}.

\section{Preliminaries}\label{sec:2-preliminaries}

In this section we  introduce some notation and recall some basic properties and definitions for unit interval graphs and Robinsonian matrices.
In particular, we introduce the concepts of `path avoiding a vertex' and  `valid vertex'  and  we give  combinatorial characterizations for  end points of Robinson orderings (also named `anchors')  and for `opposite anchors',  which will play an important role in the rest of the paper.

\subsection{Basic facts}\label{sec:2-basic facts}

Let  $\pi$ be a linear order of $V=[n]$. For two distinct elements $x,y \in [n]$, the notation  $x <_{\pi} y$ means that $x$ appears before $y$ in $\pi$
and, for  disjoint subsets $U,W\subseteq V$,   $U <_{\pi} W$ means that  $x <_{\pi} y$ for all $x\in U,$ $y\in W$.
The linear order~$\pi$  is a permutation of $[n]$, which can be represented  as a sequence $(x_1,\ldots,x_n)$ with  $x_1<_\pi \ldots <_\pi x_n$, and $\pi^{-1}$ is the  reverse linear order 
$(x_n,x_{n-1},\ldots, x_1)$. 
An ordered partition $\phi=(B_1,\dots,B_r)$ of a ground set $V$ is an ordered collection of disjoint subsets of $V$ whose union is $V$.

Throughout,  $\MS^n$ denotes the set of symmetric $n\times n$ matrices. 
Given $A \in \MS^n$ and a subset $S\subseteq [n]$,  $A[S]=(A_{xy})_{x,y\in S}$ is the principal submatrix of~$A$ indexed by~$S$.
A symmetric matrix $A \in \MS^n$ is called  a {\em Robinson similarity matrix}  if its entries are monotone nondecreasing
 in the rows and columns when moving towards the main diagonal, i.e., if 
\begin{equation}\label{eq:Robinson inequalities}
A_{xz}\le \min\{A_{xy},A_{yz}\} \quad \text{for all} \quad 1 \le  x < y < z \le n.
\end{equation}
Note that the diagonal entries of $A$ do not play a role in the above definition. 
If there exists a permutation $\pi$ of $[n]$ such that the matrix $A_{\pi}:=(A_{\pi(x)\pi(y)})_{x,y=1}^n$, obtained by permuting  both the rows and columns of $A$  simultaneously according to $\pi$, is a Robinson matrix then $A$ is said to be a {\em Robinsonian similarity} and 
 $\pi$ is  called a {\em Robinson ordering} of $A$.
In the literature, a distinction is made between Robinson(ian) similarities and Robinson(ian) dissimilarities. 
A symmetric matrix $A$ is called a {\em Robinson dissimilarity matrix} if its entries are monotone nondecreasing 
in the rows and columns when moving away from the main diagonal.
Hence  $A \in \MS^n$ is a Robinson(ian) similarity precisely when  $-A$ is a Robinson(ian) dissimilarity and thus the properties extend directly from one class to the other one.
For this reason, in this paper we will deal exclusively with Robinson(ian) similarities.
Hence, when speaking of a Robinson(ian) matrix, we mean a Robinson(ian) similarity matrix.
Furthermore, with  $J\in \mathcal S^n$ denoting the all-ones matrix, it is clear that 
 if $A$ is a Robinson(ian) matrix then  $A+ \lambda J$ is also a Robinson(ian) matrix for any scalar $\lambda$.
Hence, we may consider, without loss of generality,  nonnegative similarities $A$ (whose smallest  entry is equal to 0).

\medskip

In order to fully understand Robinsonian matrices and the motivation for our work, it is useful to briefly discuss the special class of binary Robinsonian matrices. 
Any  symmetric matrix $A\in \{0,1\}^{n\times n}$ corresponds to a graph $G=(V=[n],E)$ whose edges are the positions of the nonzero entries of $A$. 
Then it  is well known that $A$ is a Robinsonian similarity if and only if $G$ is a  unit interval graph \cite{Roberts69}.
A graph $G=(V=[n],E)$ is called a {\em unit interval graph} if its vertices can be mapped to unit intervals $I_1,\ldots,I_n$ of the real line such that two distinct vertices $x,y\in V$ are adjacent in $G$ if and only if $I_x\cap I_y\ne \emptyset$.
There exist several equivalent characterizations for unit interval graphs. The following one highlights  the analogy between unit interval graphs and Robinson orderings.

\begin{theorem}[\textbf{3-vertex condition}]\cite{Looges93}
A graph $G=(V,E)$ is a unit interval graph if and only if there exists a linear ordering $\pi$ of $V$ such that, for all $x,y,z\in V$,
\begin{equation}\label{eq:3-vertex condition}
 x<_\pi y<_\pi z, \ \{x,z\}  \in E \Longrightarrow \{x,y\},\{y,z\}  \in E.
 \end{equation}
\end{theorem}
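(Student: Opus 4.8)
The plan is to prove the two implications separately: the forward direction (unit interval graph $\Rightarrow$ ordering) by simply reading the ordering off an interval representation, and the backward direction (ordering $\Rightarrow$ unit interval graph) by constructing a representation inductively on $n=|V|$.

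For the forward direction, suppose $G$ is a unit interval graph with intervals $I_v=[\ell_v,\ell_v+1]$, so that $\{x,y\}\in E$ iff $I_x\cap I_y\ne\emptyset$. I would take $\pi$ to be the ordering of $V$ by nondecreasing left endpoint $\ell_v$, breaking ties arbitrarily (vertices with equal $\ell_v$ have identical intervals and hence identical adjacencies, so the tie-break is immaterial). To check \eqref{eq:3-vertex condition}, let $x<_\pi y<_\pi z$ with $\{x,z\}\in E$; then $\ell_x\le\ell_y\le\ell_z$, and $I_x\cap I_z\ne\emptyset$ together with $\ell_x\le\ell_z$ forces $\ell_z\le\ell_x+1$. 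Now $\ell_x\le\ell_y\le\ell_z\le\ell_x+1$ shows $\ell_y\in I_x$, so $\{x,y\}\in E$, and $\ell_y\le\ell_z\le\ell_x+1\le\ell_y+1$ shows $\ell_z\in I_y$, so $\{y,z\}\in E$. This direction requires nothing beyond this interval arithmetic.

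For the backward direction, suppose $\pi=(v_1,\dots,v_n)$ satisfies \eqref{eq:3-vertex condition}. The structural fact I would extract is that the predecessor-neighbors of the last vertex $v_n$ form a consecutive suffix $\{v_l,v_{l+1},\dots,v_{n-1}\}$ of $\pi$: if $v_j$ is a neighbor of $v_n$ and $j<k<n$, then applying \eqref{eq:3-vertex condition} to $v_j<_\pi v_k<_\pi v_n$ with $\{v_j,v_n\}\in E$ gives $\{v_k,v_n\}\in E$. (The same argument shows that every closed neighborhood is an interval of $\pi$, which is the real combinatorial content of the condition.) I would then induct. The restriction of $\pi$ to $v_1,\dots,v_{n-1}$ still satisfies \eqref{eq:3-vertex condition}, so by induction $G-v_n$ has a unit interval representation with strictly increasing left endpoints $p_1<\cdots<p_{n-1}$ obeying the invariant that $\{v_i,v_j\}\in E$ iff $|p_i-p_j|<1$ and $\{v_i,v_j\}\notin E$ iff $|p_i-p_j|>1$ (a gap representation; this is equivalent to the closed-interval definition up to rescaling). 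To insert $v_n$, I place its left endpoint at some $p_n$ in the open interval $\bigl(\max\{p_{n-1},\,p_{l-1}+1\},\ p_l+1\bigr)$, where $v_l$ is its leftmost predecessor-neighbor (with $p_0=-\infty$ if $v_n$ has no predecessor-neighbor). Any such choice makes $v_n$ adjacent to exactly $v_l,\dots,v_{n-1}$ and preserves the invariant for all pairs.

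The step that needs care, and which I expect to be the main obstacle, is checking that this interval is nonempty, i.e. that the placement is feasible. Nonemptiness amounts to $p_{l-1}<p_l$, which holds since the $p_i$ are strictly increasing, together with $p_{n-1}<p_l+1$; the latter is exactly where \eqref{eq:3-vertex condition} is used a second time. Applying it to $v_l<_\pi v_{n-1}<_\pi v_n$ (with $\{v_l,v_n\}\in E$) yields $\{v_l,v_{n-1}\}\in E$, hence $|p_{n-1}-p_l|<1$ by the inductive invariant, giving $p_{n-1}<p_l+1$ and closing the induction; the base case $n=1$ is a single interval. I would finally remark that the monotonicity keeping the left endpoints sorted along $\pi$ is precisely what forces a representation by \emph{unit} intervals rather than arbitrary ones, since it rules out one interval strictly containing another, the hallmark distinguishing unit interval graphs from general interval graphs.
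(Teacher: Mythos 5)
The paper does not prove this theorem: it is quoted from Looges and Olariu \cite{Looges93} and used as a black box, so there is no in-paper argument to compare against. Your proof is correct and self-contained. The forward direction is the standard interval arithmetic and needs no comment. The backward direction, by induction on the last vertex of $\pi$, is sound: consecutivity of the predecessor-neighbours of $v_n$ follows from one application of \eqref{eq:3-vertex condition}, and the feasibility of the insertion window $\bigl(\max\{p_{n-1},\,p_{l-1}+1\},\ p_l+1\bigr)$ follows from a second application together with the strengthened inductive invariant (left endpoints strictly increasing along $\pi$, no pair at distance exactly $1$), which you correctly identify as the statement that must actually be carried through the induction rather than the bare theorem. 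One small slip: the convention $p_0=-\infty$ handles the case $l=1$ (every predecessor of $v_n$ is a neighbour), not the case where $v_n$ has \emph{no} predecessor-neighbour; for the latter you need the opposite convention (interpret $p_l+1$ as $+\infty$, i.e.\ place $p_n$ anywhere beyond $p_{n-1}+1$). This is cosmetic and does not affect the correctness of the argument.
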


It is clear that, for a binary matrix $A\in \mathcal S^n$, condition~(\ref{eq:Robinson inequalities}) is equivalent to~(\ref{eq:3-vertex condition}).
This equivalence and the fact that unit interval graphs can be recognized with a Lex-BFS multisweep algorithm \cite{Corneil04} motivated us  to find an extension of Lex-BFS to weighted graphs and to use it  to obtain a (simple) multisweep  recognition algorithm for Robinsonian matrices.

\medskip
Given the analogy with unit interval graphs, it will be convenient to view symmetric matrices as weighted graphs. Namely,  
any nonnegative symmetric matrix $A \in \MS^n$ corresponds to the weighted graph  $G=(V=[n],E)$   whose  edges are the pairs $\{x,y\}$ with $A_{xy}> 0$, 
with edge weights $A_{xy}$. 
Again, the assumption of nonnegativity can be made without loss of generality and is for convenience only.
Accordingly  we will often refer to the elements of $V=[n]$ indexing $A$ as  vertices (or nodes). 
For $x\in V$, $N(x)=\{y\in V\setminus \{x\}: A_{xy}> 0\}$ denotes the neighborhood of $x$ in $G$.

In what follows we will extend some graph concepts to the general setting of weighted graphs (Robinsonian matrices). 
Throughout the paper, we will  point out   links between our results and some  corresponding known results  for Lex-BFS applied to graphs and we will mostly  refer to \cite{Corneil09} where more complete  references about Lex-BFS can be found.

\medskip
We now introduce some notions  and simple facts about Robinsonian matrices and orderings. 
Consider a matrix $A\in \mathcal S^n$. Given distinct elements $x,y,z\in V$, the  triple $(x,y,z)$ is said to be  {\em Robinson} if it  satisfies  (\ref{eq:Robinson inequalities}), i.e.,  if
$A_{xz}\le \min\{A_{xy},A_{yz}\}$. Given a set $S\subseteq V$ and $x\in V\setminus S$, we say that  $x$ is \textit{homogeneous} with respect to~$S$ if $A_{xy} = A_{xz}$ for all  $y,z \in S$ (extending the corresponding notion for graphs, see, e.g.,~\cite{Corneil09}).
 The following is  an  easy necessary condition for the Robinson property. 

\begin{lemma}\label{thm:non-Robinson property}
Let $A \in \MS^n$ be a Robinsonian similarity. Assume that  there exists a Robinson ordering $\pi$ such that $x <_{\pi} z <_{\pi} y$. Then $A_{uz} \geq \min\{A_{ux},A_{uy}\}$ for all $u \neq x,y,z \in [n]$.
\end{lemma}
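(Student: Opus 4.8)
The plan is to work directly with the given Robinson ordering $\pi$ and to use the Robinson inequalities \eqref{eq:Robinson inequalities} rephrased in terms of $\pi$: since $A_\pi$ is a Robinson matrix, for any three elements $p <_\pi q <_\pi r$ one has $A_{pr} \le \min\{A_{pq},A_{qr}\}$. Fix an element $u \neq x,y,z$. The key observation is that the only case distinction needed is whether $u$ lies before or after $z$ along $\pi$; the hypothesis $x <_\pi z <_\pi y$ then automatically places $z$ strictly between $u$ and one of $x,y$, which is exactly what forces a lower bound on $A_{uz}$.

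First I would treat the case $u <_\pi z$. Combined with $z <_\pi y$, this gives $u <_\pi z <_\pi y$, so applying the Robinson inequality to the triple $(u,z,y)$ yields $A_{uy} \le \min\{A_{uz},A_{zy}\} \le A_{uz}$, whence $A_{uz} \ge A_{uy} \ge \min\{A_{ux},A_{uy}\}$. In the complementary case $z <_\pi u$, combined with $x <_\pi z$ this gives $x <_\pi z <_\pi u$, so applying the Robinson inequality to the triple $(x,z,u)$ yields $A_{xu} \le \min\{A_{xz},A_{zu}\} \le A_{zu}$. Invoking the symmetry of $A$ (so that $A_{zu}=A_{uz}$ and $A_{xu}=A_{ux}$), this reads $A_{uz} \ge A_{ux} \ge \min\{A_{ux},A_{uy}\}$. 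Since the two cases are exhaustive, the claimed inequality holds for every admissible $u$.

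There is essentially no substantive obstacle here; the entire argument is a short two-case split. The only points requiring care are to check that the triple selected in each case is correctly ordered with respect to $\pi$ so that the Robinson inequality genuinely applies, and to use the symmetry of $A$ when rewriting $A_{zu}$ and $A_{xu}$ in the second case. The hypothesis that $\pi$ lists $x$ before $z$ before $y$ is precisely what guarantees, in both cases, that $z$ occupies the middle position of the relevant triple along $\pi$, which is what produces the desired lower bound on $A_{uz}$.
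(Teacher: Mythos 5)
Your proof is correct and is essentially identical to the paper's own argument: the same two-case split on whether $u <_\pi z$ or $z <_\pi u$, applying the Robinson inequality to the triples $(u,z,y)$ and $(x,z,u)$ respectively. No issues.
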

\begin{proof}
Indeed, $u<_\pi z$ implies  $u<_\pi z<_\pi y$ and thus $A_{uz}\ge A_{uy}$,  and $z<_\pi u$ implies $x<_\pi z<_\pi u$ and thus $A_{uz}\ge A_{ux}$.
$\qquad$
\end{proof}

We now make  a simple observation on how three elements $x,y,z\in V$ may  appear in a Robinson ordering $\pi$ of $A$ depending on their similarities.
Namely, if we have $A_{xz} > \min \{A_{xy},A_{yz}\}$ then, either $y$ comes before both $x$ and $z$ in $\pi$, or $y$ comes after both $x$ and $z$ in $\pi$.
In other words, if $x$ and $z$ are more similar to each other than to~$y$, then $y$ cannot be ordered between $x$ and $z$ in any Robinson ordering $\pi$.
Moreover, if $A_{xz} < \min \{A_{xy},A_{yz}\}$ then, either $x <_{\pi} y<_{\pi} z$, or $z <_{\pi} y<_{\pi} x$.
In other words, if~$x$ and~$z$ are more similar to $y$ than to each other, then $y$ must be ordered between $x$ and $z$ in any Robinson ordering $\pi$.

This observation motivates the following  notion of  `path avoiding a vertex', which will play a central role in our discussion.
Note that this notion is closely  related to the notion of `path missing a vertex'  for Lex-BFS \cite{Corneil09}, although it is not equivalent to it when applied to a  binary matrix.
Note also that in our setting the notion of  path is defined for a matrix and a path is just a sequence of (possibly repeated)  vertices.

\begin{definition}[\textbf{Path avoiding a vertex}]\label{defpath}
Given distinct elements $x,y,z \in V$, a {\em path from $x$ to $z$ avoiding $y$} is a sequence $P=(x=v_0, v_1,\ldots, v_{k-1},v_k=z)$ of (not necessarily distinct) elements of $V$ where each triple $(v_i,y,v_{i+1})$ is not Robinson, i.e., 
\begin{equation*}
A_{v_iv_{i+1}}>\min\{A_{yv_i}, A_{yv_{i+1}}\},  \quad \forall \ i =0,1,\ldots, k-1.
\end{equation*}
We let $|P|=k+1$ denote the length of the path $P$ (i.e., its number of elements).
\end{definition}

\begin{figure}[!h]
\centering
\includegraphics{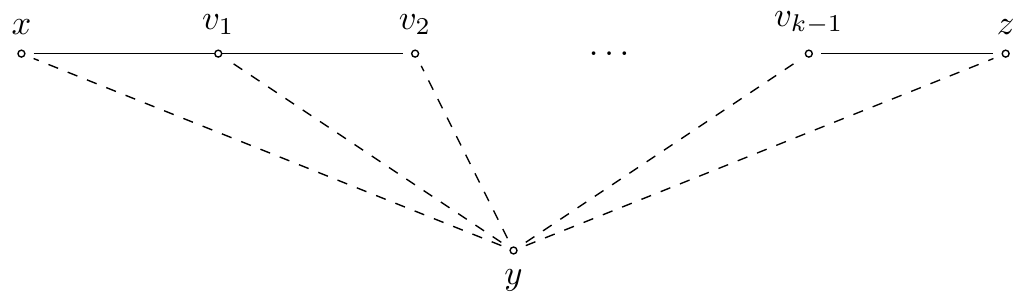}
\label{fig:path}
\caption{A path from $x$ to $z$ avoiding $y$: each continuous line indicates a value which is strictly larger than the minimum of the two adjacent dotted lines}
\end{figure}

The following simple but useful property holds.

\begin{lemma}\label{thm:path avoiding and Robinson ordering}
Let $A \in \MS^n$ be a Robinsonian matrix. If there exists a path from $x$ to $z$ avoiding $y$, then $y$ cannot lie between $x$ and $z$ in any Robinson ordering $\pi$ of $A$.
\end{lemma}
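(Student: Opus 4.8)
The plan is to argue by contradiction, exploiting the observation recorded just before Definition~\ref{defpath}: if two vertices are more similar to each other than to $y$, then $y$ cannot be placed between them in any Robinson ordering. Concretely, suppose some Robinson ordering $\pi$ of $A$ does place $y$ between $x$ and $z$; since the reverse order $\pi^{-1}$ is again a Robinson ordering, up to reversing $\pi$ I may assume $x <_\pi y <_\pi z$. I want to derive a contradiction from the existence of the avoiding path $P=(x=v_0,v_1,\ldots,v_k=z)$.

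First I would establish the key local step: for each edge $(v_i,v_{i+1})$ of $P$, the endpoints $v_i$ and $v_{i+1}$ lie on the same side of $y$ in $\pi$ (both before $y$, or both after $y$). Indeed, the definition of an avoiding path gives $A_{v_iv_{i+1}}>\min\{A_{yv_i},A_{yv_{i+1}}\}$, and in particular $v_i,y,v_{i+1}$ are three distinct vertices so that the triple is well defined. If $y$ were strictly between $v_i$ and $v_{i+1}$ in $\pi$, say $v_i<_\pi y<_\pi v_{i+1}$, then applying the Robinson inequality~(\ref{eq:Robinson inequalities}) to this triple in the Robinson ordering $\pi$ would force $A_{v_iv_{i+1}}\le \min\{A_{v_iy},A_{yv_{i+1}}\}$, contradicting the strict inequality above; the case $v_{i+1}<_\pi y<_\pi v_i$ is identical. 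Hence $y$ cannot separate $v_i$ from $v_{i+1}$.

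From this local step a trivial induction along the path finishes the argument: since $v_0$ and $v_1$ are on the same side of $y$, $v_1$ and $v_2$ are on the same side of $y$, and so on, all of $v_0,v_1,\ldots,v_k$ lie on one and the same side of $y$ in $\pi$. In particular $x=v_0$ and $z=v_k$ are on the same side of $y$, which directly contradicts the assumption $x<_\pi y<_\pi z$. This contradiction shows that $y$ cannot lie between $x$ and $z$ in any Robinson ordering, as claimed.

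There is no real obstacle here: the content is entirely carried by the observation preceding the definition (equivalently, by a single application of the Robinson inequalities to $\pi$), and the rest is just the remark that the relation ``on the same side of $y$'' propagates along the edges of $P$. The only points needing a moment's care are the reduction to the case $x<_\pi y<_\pi z$ via reversal of $\pi$, and the observation that each triple $(v_i,y,v_{i+1})$ consists of distinct vertices so that~(\ref{eq:Robinson inequalities}) is applicable; both are immediate.
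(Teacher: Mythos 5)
Your proof is correct and follows essentially the same route as the paper's: each non-Robinson triple $(v_i,y,v_{i+1})$ forbids $y$ from lying between consecutive path vertices, and this propagates along the path to keep $x$ and $z$ on the same side of $y$. You merely spell out the "same side" induction and the reduction to $x<_\pi y<_\pi z$ more explicitly than the paper does.
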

\begin{proof}
Let $(x=v_0, v_1,\ldots, v_{k-1},v_k=z)$ be a path from~$x$ to~$z$ avoiding~$y$.
Then, by definition, we have $A_{v_iv_{i+1}}>\min\{A_{yv_i}, A_{yv_{i+1}}\} \ \text{ for all } i =0,1,\ldots, k-1$, and thus~$y$ cannot appear between $v_i$ and $v_{i+1}$ in any Robinson ordering $\pi$.
Hence $y$ cannot lie between $x$ and $z$ in any Robinson ordering $\pi$.
$\qquad$
\end{proof}

We now introduce the  notion of `valid vertex' which we will use in throughout the section to characterize end points of Robinson orderings.

\begin{definition}[\textbf{Valid vertex}]\label{defvalid}
Given a matrix $A\in \mathcal S^n$, an element $z\in V$ is said to be {\em valid} if, for any distinct elements $u,v\in V\setminus \{z\}$, there do not exist both a path from $u$ to $z$ avoiding $v$ and a path from $v$ to $z$ avoiding $u$.
\end{definition}

Observe  that, if $z \in V$ is a valid vertex of a 
matrix $A$ and $S \subseteq V$ is a subset containing $z$, then $z$ is also a valid vertex of $A[S]$.
It is easy to see that, for a  $0/1$ matrix, the above definition of valid vertex coincides with the notion of valid vertex for Lex-BFS \cite{Corneil09}.

Consider, for example, the following matrix (already ordered in a Robinson form):
\begin{equation*}
A=
\bordermatrix{
~ & \textbf{a} & \textbf{b} & \textbf{c} & \textbf{d} & \textbf{e} & \textbf{f} & \textbf{g} \cr
\textbf{a}  & * & 7 & 6 & 0 & 0 & 0 & 0 \cr
\textbf{b}  & & * & 7 & 3 & 2 & 1 & 1 \cr
\textbf{c}  & & & * & 7 & 2 & 2 & 1 \cr
\textbf{d}  & & & & * & 3 & 3 & 3 \cr
\textbf{e}  & & & & & * & 7 & 5 \cr
\textbf{f}  & & & & & & * & 6 \cr
\textbf{g}  & & & & & & & * \cr
}
\end{equation*}
Then the vertex $d$ is not valid. Indeed,  for the  two vertices $a$ and $g$,  there exist a path from $a$ to $d$ avoiding $g$ and a path from $g$ to $d$ avoiding $a$; namely the path $(d,b,a)$ avoids $g$ and the path $(d,b,g)$ avoids $a$ (see Figure~\ref{fig:vertex not admissible}). 

\begin{figure}[!h]
\centering
\includegraphics[page=1]{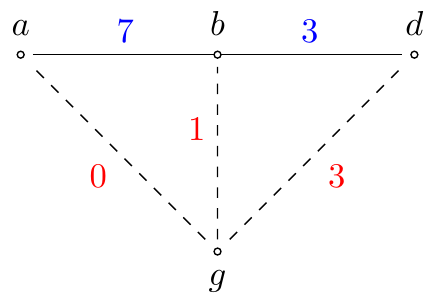}
\qquad
\includegraphics[page=2]{vertex_not_admissible.pdf}
\caption{Element $d$ is not valid}
\label{fig:vertex not admissible}
\end{figure}

\subsection{Characterization of anchors}\label{sec:2-anchor characterization}

In this subsection we introduce the notion of `(opposite) anchors'  of a Robinsonian matrix and then we give characterizations in terms of valid vertices.
The notion of anchor was used  for unit interval graphs  in \cite{Corneil95} (where it refers to an end point of a linear order satisfying the 3-vertex condition~(\ref{eq:3-vertex condition})) and it should not be confused with the notion of end-vertex used for interval graphs in \cite{Corneil09}  (where it  refers to an end point of a Lex-BFS ordering, see~\cite{Corneil10} for more details).

\begin{definition}[\textbf{Anchor}]\label{defanchor}
Given a Robinsonian similarity $A \in \MS^n$, a vertex $a \in [n]$ is called an \emph{anchor} of $A$ if  there exists a Robinson ordering $\pi$ of $A$ whose last vertex is  $a$.
Moreover, two distinct vertices $a,b$ are called {\em opposite anchors} of $A$ if there exists a Robinson ordering $\pi$ of $A$ with $a$ as first vertex and $b$ as last vertex.
\end{definition}

Hence, an anchor is an end point of a Robinson ordering.
Clearly, every Robinsonian matrix has at least one pair of opposite  anchors. 
It is not difficult to see that every anchor must be valid.
We now show that conversely every valid vertex is an anchor.
This is the analogue of  \cite[Lemma 2]{Corneil05} for Lex-BFS over interval graphs.

\begin{theorem}\label{thm:anchor admissible}
Let $A \in \MS^n$ be a Robinsonian matrix. Then a vertex $z \in V$ is an anchor of $A$ if and only if it is valid. 
\end{theorem}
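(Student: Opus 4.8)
=== PROOF PROPOSAL ===

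\textbf{Overall approach.} The plan is to prove the two implications separately. The forward direction (anchor $\Rightarrow$ valid) is the easy one and is essentially a direct consequence of Lemma~\ref{thm:path avoiding and Robinson ordering}. The reverse direction (valid $\Rightarrow$ anchor) is the substantive content and will require an explicit construction of a Robinson ordering that ends at $z$, proceeding by induction on $n$.

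\textbf{Forward direction.} Suppose $z$ is an anchor, so there is a Robinson ordering $\pi$ with $z$ as its last vertex. I want to show $z$ is valid. Suppose for contradiction that it is not: then there exist distinct $u,v \in V\setminus\{z\}$ together with a path from $u$ to $z$ avoiding $v$ and a path from $v$ to $z$ avoiding $u$. By Lemma~\ref{thm:path avoiding and Robinson ordering}, the first path forces that $v$ does not lie between $u$ and $z$ in $\pi$, and the second forces that $u$ does not lie between $v$ and $z$ in $\pi$. Since $z$ is last in $\pi$, ``not between $u$ and $z$'' means $v <_\pi u$, while ``not between $v$ and $z$'' means $u <_\pi v$; these two conclusions contradict each other. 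Hence no such $u,v$ exist and $z$ is valid.

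\textbf{Reverse direction (the main obstacle).} Suppose $z$ is valid; I must produce a Robinson ordering of $A$ ending at $z$. I would argue by induction on $n$, the base case being trivial. For the inductive step, take any Robinson ordering $\sigma = (x_1,\dots,x_n)$ of $A$ (one exists since $A$ is Robinsonian). If $x_n = z$ we are done, so assume $z = x_k$ with $k<n$. The idea is to peel off an endpoint of $\sigma$ other than $z$, invoke the observation (noted just after Definition~\ref{defvalid}) that validity of $z$ is inherited by principal submatrices containing $z$, apply the induction hypothesis to the smaller matrix to get a Robinson ordering ending at $z$, and then reinsert the removed vertex at an endpoint so as to preserve the Robinson property. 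The delicate point is reinsertion: I need to show the peeled vertex can be prepended (or appended at the non-$z$ end) without violating the Robinson inequalities~(\ref{eq:Robinson inequalities}), and this is exactly where validity of $z$ must be used to rule out the obstruction. Concretely, I expect to use that if $z$ could not be moved to the last position, then the two vertices blocking it on either side would yield paths avoiding each other toward $z$ (via Lemma~\ref{thm:non-Robinson property} and the path-avoiding machinery), contradicting validity. The main work, and the likely obstacle, is to formalize this blocking argument: to show that whenever $z$ is not already an endpoint of a given Robinson ordering, either $z$ can be shifted one step closer to the end while remaining Robinson, or else one extracts a pair of mutually avoiding paths to $z$ and contradicts Definition~\ref{defvalid}. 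Iterating the shift then places $z$ last.

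\textbf{Remark on strategy.} An alternative to the shifting argument is a more global construction: characterize the set of valid vertices directly and show each arises as the final element of some Robinson ordering by a symmetric version of the anchor-existence fact. However, I expect the inductive peel-and-reinsert scheme to be cleanest, since it localizes the use of validity to a single reinsertion step and leverages the submatrix-heredity of valid vertices already recorded after Definition~\ref{defvalid}.
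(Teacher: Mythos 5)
Your forward direction is correct and is essentially identical to the paper's argument, so there is nothing to add there.

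The reverse direction, however, has a genuine gap at exactly the step you flag as delicate, and the mechanism you propose does not close it. You peel off a single endpoint $w$ of some Robinson ordering, apply induction to $A[V\setminus\{w\}]$ to get a Robinson ordering $\sigma$ ending at $z$, and then want to reinsert $w$ at the front of $\sigma$. But the induction hypothesis only guarantees the existence of \emph{some} Robinson ordering of the submatrix ending at $z$; it gives you no control over how $\sigma$ relates to $w$. For prepending $w$ to $\sigma$ to preserve~(\ref{eq:Robinson inequalities}) you would need $A_{wu}\ge A_{wu'}$ whenever $u<_\sigma u'$, i.e., $w$'s similarities must be monotone along $\sigma$ --- and this fails in general. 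For a concrete obstruction: if $\pi=(w,x_2,x_3,x_4)$ is Robinson with $z=x_2$, induction may hand you $\sigma=(x_4,x_3,x_2)$, and prepending $w$ requires $A_{wx_4}\ge A_{wx_3}\ge A_{wx_2}$, whereas the original ordering gives the reverse inequalities; unless all are equal, the reinsertion breaks. Your fallback ``shifting'' idea has the same problem in local form: swapping $z$ with its successor in a Robinson ordering forces $z$ and that successor to have equal similarity to every other vertex, so one cannot in general push $z$ to the end by adjacent transpositions.

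What the paper does differently, and what your sketch is missing, is a homogeneity argument: rather than peeling one endpoint, it partitions the vertices before $z$ into $\MB$ (those $v$ for which the segment of $\pi$ from $v$ to $z$ is a path avoiding the last vertex $y$) and $\MC$ (the rest), shows using validity of $z$ that $\MC$ and its symmetric counterpart cannot both be empty, and --- crucially --- proves that every vertex of $\MC$ is \emph{homogeneous} with respect to $V\setminus\MC$ (Claim~\ref{claim1}(iii)). It then peels off the entire block $\MC$, applies induction to $A[V\setminus\MC]$, and concatenates $\pi[\MC]$ in front of the resulting ordering. Homogeneity is precisely what makes the concatenation Robinson regardless of which ordering of the submatrix the induction produces, which is the point your reinsertion step cannot get around. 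Validity of $z$ is used twice in this argument (to show $\MC\cup\MC'\ne\emptyset$ and inside the proof of homogeneity), not merely once at a single blocking step as you anticipate. Without identifying a homogeneous set to remove, the peel-and-reinsert scheme does not go through.
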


\begin{proof}
($\Rightarrow$)
Assume $z$ is an anchor of $A$ and let $\pi$ be a Robinson ordering of $A$ with $z$ as last element.
Suppose for contradiction that, for some elements $u,v\in V$,  there exist both a path $P$ from $u$ to $z$ avoiding $v$ and a path $Q$ from $v$ to $z$ avoiding~$u$.
Using Lemma \ref{thm:path avoiding and Robinson ordering}  and the path $P$, we obtain that  that  $v$ lies before $u$ or after $z$ in $\pi$, and using the path $Q$ we obtain that  $u$ lies before $v$ or after $z$ in~$\pi$. 
As $z$ is the last element of $\pi$, we must have $v<_\pi u$ in the first case and $u<_\pi v$ in the second case, which is impossible. 

($\Leftarrow$)
Conversely, assume that  $z$ is valid; we show that $z$ is an anchor of $A$. The proof is by induction on the size $n$ of the matrix $A$. 
The result holds clearly when $n=2$. So we now assume $n\ge 3$ and that the result holds for any Robinsonian matrix of order at most $n-1$.
We need to construct a Robinson ordering $\pi'$ of $A$ with $z$ as last vertex.
For this we consider a Robinson ordering $\pi$ of $A$. 
We let $x$ denote its first element and $y$ denote its last element. 
If $z=x$ or $z=y$, then we would be done.
Hence we may assume $x<_\pi z <_\pi y$.
For any $v <_{\pi} z$, we denote by $P_{\pi}(v,z)$ the path from $v$ to $z$ consisting of the sequence of vertices appearing consecutively between $v$ and $z$ in $\pi$.

We now define the following two sets:
\begin{align}\label{relBC}
 \MB=\{v <_{\pi} z: P_{\pi}(v,z) \text{ avoids } y\}, \ \
 \MC= \{v <_{\pi} z: v \notin \MB\}. 
\end{align}
Next we show their following properties, which will be useful to conclude the proof.
\begin{claim}\label{claim1}
The following holds:
\begin{itemize}
\item[(i)]
For any $v\in \MB$, $A_{vy}=A_{yz}.$
\item[(ii)]
If  $v\in \MB$ and $v <_\pi u <_{\pi} z$, then $u\in \MB$.
\item[(iii)]
Any element  $v \in \MC$ is homogeneous with respect to $V\setminus \MC$, i.e., $A_{vw}=A_{vw'}$ for all $w,w'\in V\setminus \MC$.
\end{itemize}
\end{claim}
\begin{proof}
(i) 
As $v<_\pi z<_\pi y$, then $A_{vy}\le A_{yz}$. 
We show that equality holds. Suppose not, i.e.,  $A_{vy} < A_{yz}$. Then $Q=(y,z)$ is a path from $y$ to $z$ avoiding $v$.
Since $v \in \MB$,  $P=P_{\pi}(v,z)$ is a path from $v$ to $z$ avoiding $y$, and thus the existence of the paths $P,Q$ contradicts the assumption that $z$ is valid. Hence we must have $A_{vy}=A_{yz}$.

(ii)
If $v \in \MB$ then $P_{\pi}(v,z)$ avoids $y$ and thus the subpath $P_{\pi}(u,z)$ also avoids $y$, which implies $u \in \MB$.

(iii) Let $u\in \MB$ denote the element of $\MB$ appearing first in the Robinson ordering~$\pi$.
Then, by (ii), for any $v\in \MC$, $v <_\pi u<_\pi y$ and thus $A_{vy}\le A_{vu}$ by definition of Robinson ordering.
Hence, in order to show that $v$ is homogeneous with respect to $V\setminus \MC$, it suffices to show that $A_{vu}=A_{vy}$ (as, using the Robinson ordering property, this would in turn imply that $A_{vw}=A_{vw'}$ for all $w,w'\in V\setminus \MC$).
Suppose for contradiction that 
there exists $v\in\MC$ such that $A_{vu}\ne A_{vy}$, and let 
$v$ denote the element of $\MC$ appearing last in $\pi$ with $A_{vu}\ne A_{vy}$.

Then $A_{vu}>A_{vy}$ and the path $(v,u)$ avoids $y$. 
Since $P_\pi(u,z)$ is a path from~$u$ to $z$ avoiding $y$ (because $u\in \MB$),  then the path  $P=\{v\} \cup P_{\pi}(u,z)$ (obtained by concatenating $(v,u)$ and $P_\pi(u,z)$) is a path from $v$ to $z$ avoiding $y$. 
This implies that~$v$ and $u$ cannot be consecutive in $\pi$, as otherwise we would have $v \in \MB$, contradicting the fact that $v\in \MC$.
Hence, there exists $v'\in \MC$ such that $v <_{\pi} v' <_{\pi} u$. 
By the maximality assumption on $v$, it follows that $A_{v'u}=A_{v'y}$.

As  $z$ is valid and  $P=\{v\} \cup P_{\pi}(u,z)$ is a path from $v$ to $z$ avoiding $y$, 
it follows that no path from $y$ to $z$ can avoid $v$. 
In particular, the path $(y,z)$ does not avoid~$v$ and thus it must be $A_{yz} \leq \min \{A_{vy},A_{vz}\}$.
Recall that we assumed $A_{vu}>A_{vy}$. 
As $v<_\pi v' <_\pi u <_\pi z <_\pi y$, combining the above inequalities with the inequalities coming from the Robinson ordering $\pi$, we obtain
$A_{v'y}\le A_{yz}\le A_{vy}<A_{vu}\le A_{v'u}$, 
 which contradicts the equality $A_{v'u}=A_{v'y}$.
$\qquad$
\end{proof}

We now turn to the set of vertices coming after $z$ in $\pi$. Symmetrically with respect to $z$, we can define  the analogues of the sets $\MC, \MB$ defined in (\ref{relBC}), which we denote by~$\MC', \MB'$.
For this replace $\pi$ by its reverse ordering $\pi^{-1}$ and $y$ by $x$ (the first element of $\pi$ and thus the last element of $\pi^{-1}$), i.e., set
\begin{align*}
 \MB'=\{v >_{\pi} z: P_{\pi}(z,v) \text{ avoids } x\}, \ \ 
 \MC'= \{v >_{\pi} z: v \notin \MB'\}.
\end{align*}

To recap, we have that $\pi = (\MC,\MB,z,\MB',\MC')$.
Recall that $x$ and $y$ are respectively the first and the last vertex in $\pi$.
Note that it cannot be that $\MC =\MC'=\emptyset$, as this would imply that $x \in \MB$ and $y \in \MB'$, and thus this would contradict the fact that $z$ is valid (using the definition of the two sets $\MB$ and $\MB'$).
Therefore, we may assume (without loss of generality) that  $\MC \neq \emptyset$. 
Let $v$ be the vertex of $\MC$ appearing last in the Robinson ordering $\pi$. 
By Claim \ref{claim1} (iii),
 $v$ is homogeneous with respect to the set $S=V\setminus \MC$, i.e., 
all entries $A_{vw}$ take the same value for any $w\in S$.

Consider the matrix $A[S]$, the principal submatrix of $A$ with rows and columns in $S$. As $|S|\le n-1$ and $z$ is valid (also with respect to $A[S]$), we can conclude using the induction assumption that $z$ is an anchor of $A[S]$.   Hence, there exists  a Robinson ordering $\sigma$ of $A[S]$ admitting  $z$ as last element.
  
Now, consider the linear order $\pi'=(\pi[\MC],\sigma)$ of $V$ obtained by concatenating first the order $\pi$ restricted to $\MC = V \setminus S$ and second the linear order $\sigma$ of $S$. Using the fact that every vertex in $\MC$ is homogeneous to all elements of $S$, 
we can conclude that the new linear order $\pi'$ is a Robinson ordering of the matrix $A$. As $z$ is the last element of $\pi'$, this shows that $z$ is an anchor of $A$ and thus concludes the proof. 
$\qquad$
\end{proof}

The above proof can  be extended to characterize pairs of opposite anchors.

\begin{theorem}\label{thm:characterization opposite anchors}
Let $A \in \MS^n$ be a Robinsonian matrix. Two distinct vertices $z_1, z_2 \in [n]$ are opposite anchors of  $A$ if and only if they are both valid and there does not exist a path from $z_1$ to $z_2$ avoiding any other vertex. 
\end{theorem}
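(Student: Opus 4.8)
The plan is to prove both implications, leaning on the already-established equivalence between validity and being an anchor (Theorem~\ref{thm:anchor admissible}) together with the path-avoiding machinery of Lemma~\ref{thm:path avoiding and Robinson ordering}.

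For the forward implication, suppose $z_1,z_2$ are opposite anchors and fix a Robinson ordering $\pi$ with $z_1$ first and $z_2$ last. Then $z_2$ is the last element of $\pi$ and $z_1$ is the last element of $\pi^{-1}$, so both are anchors and hence valid by Theorem~\ref{thm:anchor admissible}. For the second condition I would argue by contradiction: if some path from $z_1$ to $z_2$ avoided a vertex $w\neq z_1,z_2$, then by Lemma~\ref{thm:path avoiding and Robinson ordering} the vertex $w$ could not lie between $z_1$ and $z_2$ in any Robinson ordering; but in $\pi$ the vertices $z_1,z_2$ are the two end points, so every other vertex, $w$ included, lies between them, a contradiction. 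This direction is straightforward.

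The substance is in the converse, which I would prove by adapting the block-removal construction from the proof of Theorem~\ref{thm:anchor admissible}. Since $z_1$ is valid it is an anchor, so there is a Robinson ordering $\pi$ with $z_1$ as its \emph{first} element; let $y$ be its last element. If $y=z_2$ we are done, so assume $z_1<_\pi z_2<_\pi y$ (here $z_1,z_2,y$ are distinct, so $n\ge 3$ and $y\neq z_1,z_2$). Now consider the sets $\MB,\MC$ defined relative to the valid vertex $z_2$ and the last element $y$, exactly as in~(\ref{relBC}). The key observation, and the place where the hypothesis enters, is that $z_1$ must belong to $\MC$: if instead $z_1\in\MB$, then $P_\pi(z_1,z_2)$ would be a path from $z_1$ to $z_2$ avoiding $y$, contradicting the assumption that no path from $z_1$ to $z_2$ avoids another vertex. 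Hence $z_1\in\MC$ and, being $\pi$-first overall, $z_1$ is the first element of $\MC$.

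To finish, I would invoke Claim~\ref{claim1}(iii) in this same setting (with $z_2$ playing the role of the valid vertex $z$ and $z_1$ the role of the first element $x$) to conclude that every vertex of $\MC$ is homogeneous with respect to $S:=V\setminus\MC$. Since $z_2\in S$ and $z_2$ remains valid in $A[S]$ (by the remark following Definition~\ref{defvalid}), Theorem~\ref{thm:anchor admissible} yields a Robinson ordering $\sigma$ of $A[S]$ with $z_2$ last. The concatenation $\pi'=(\pi[\MC],\sigma)$ is then a Robinson ordering of $A$ by the homogeneity argument already used at the end of the proof of Theorem~\ref{thm:anchor admissible}; moreover $z_1$ is its first element (being first in $\pi[\MC]$) and $z_2$ is its last element, so $z_1,z_2$ are opposite anchors. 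The main obstacle is conceptual rather than computational: one must realize that the no-path hypothesis is precisely what prevents $z_1$ from landing in the homogeneous block $\MC$ that gets stripped off, so that re-attaching $\MC$ at the front keeps $z_1$ at the head while the anchor step on $A[S]$ places $z_2$ at the tail. The only routine points to verify are that Claim~\ref{claim1}(iii) and the concatenation argument go through verbatim with these role assignments.
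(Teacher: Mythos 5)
Your proof is correct and takes essentially the same route as the paper's: the forward direction combines Theorem~\ref{thm:anchor admissible} with Lemma~\ref{thm:path avoiding and Robinson ordering}, and the converse shows that the no-path hypothesis forces $z_1\in\MC$, after which the block-removal and concatenation argument from the proof of Theorem~\ref{thm:anchor admissible} (via Claim~\ref{claim1}(iii)) goes through with $z_2$ in the role of $z$. The only cosmetic difference is that you invoke the already-established Theorem~\ref{thm:anchor admissible} directly on $A[S]$ to get the ordering $\sigma$ with $z_2$ last, rather than re-running the induction, which is fine.
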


\begin{proof}
($\Rightarrow$)
Assume that  $z_1$ and $z_2$ are opposite anchors. Then they are both anchors and thus, in view of Theorem \ref{thm:anchor admissible}, they are both valid. Let $\pi$ a Robinson ordering starting with $z_1$ and ending with $z_2$. Suppose, for the sake of contradiction, that there exists a vertex $x$ and a path from $z_1$ to $z_2$ avoiding $x$.
Then, by Lemma~\ref{thm:path avoiding and Robinson ordering}, $x$ cannot lie in $\pi$ between $z_1$ and $z_2$, yielding  a contradiction.

($\Leftarrow$)
Assume that $z_1$ and $z_2$ are valid and that there does not exist a path from $z_1$ to $z_2$ avoiding any other vertex.
We show that they are opposite anchors. 
Consider a Robinson ordering $\pi$ of $A$ whose first element is $z_1$ and call $y$ its last element.
If $y=z_2$ then we are done. Hence, we may assume that $z_1 <_\pi z_2 <_\pi y$.
As in the proof of Theorem \ref{thm:anchor admissible}, for any $v <_{\pi} z_2$, we denote by $P_{\pi}(v,z_2)$ the path from $v$ to $z_2$ consisting of the sequence of vertices appearing consecutively between $v$ and $z_2$ in~$\pi$.
Then, we can define the sets as in (\ref{relBC}) in the proof of Theorem \ref{thm:anchor admissible}, where $z$ is replaced by $z_2$, i.e.,:
\begin{align*}
 \MB=\{v <_{\pi} z_2: P_{\pi}(v,z_2) \text{ avoids } y\}, \ \
 \MC= \{v <_{\pi} z_2: v \notin \MB\}. 
\end{align*}
By assumption, $z_1\not\in \MB$, else $P_\pi(z_1,z_2)$ would avoid $y$, contradicting the nonexistence of a path from $z_1$ to $z_2$ avoiding any other vertex.
Therefore $z_1\in\MC$ and thus $\MC \neq \emptyset$. Let $S=V \setminus \MC$.
Using the same reasoning as in the proof of Theorem \ref{thm:anchor admissible}, we can now conclude that one can find a Robinson ordering $\sigma$ of $A[S]$, where $S$ contains all the elements coming after the last element of $\MC$ in $\pi$. 
The new linear order $\pi'=(\pi[\MC],\sigma)$ of $V$ obtained by concatenating first the order $\pi$ restricted to $\MC = V \setminus S$ and second the linear order $\sigma$ of $S$ is then a Robinson ordering of $A$ whose first element is $z_1$ and whose last element is $z_2$, which concludes the proof.
$\qquad$
\end{proof}

\section{The SFS algorithm} \label{sec:3-SFS algorithm}

In this section we introduce our new Similarity-First Search (SFS) algorithm. This algorithm will be  applied to a (nonnegative) matrix $A \in \MS^n$ and return a linear order of $V=[n]$, called a {\em SFS ordering} of $A$.
As mentioned above, one can associate to   $A$  a weighted graph $G=(V=[n],E)$, with edges the pairs $\{x,y\}$ such that  $A_{xy}> 0$ and edge weights $A_{xy}$. 
The SFS algorithm can be thus seen as  a search algorithm for weighted graphs.

We first describe  the algorithm in detail  in Subsection \ref{sec:3-description algorithm} and provide  a 3-point characterization  of  SFS orderings in Subsection 
\ref{sec:3-characterization SFS}.
Then in Subsection \ref{sec:3-PAL} we discuss some  properties of  SFS orderings of  Robinsonian matrices. Specifically, we introduce the fundamental `Path Avoiding Lemma' (Lemma \ref{thm:PAL}) which will be used repeatedly throughout the paper. In particular we use it   in Subsection \ref{sec:3-end points}   to show a  fundamental property of SFS orderings, namely that the last element of the SFS ordering of a Robinsonian matrix $A$ is an anchor of $A$. 

\subsection{Description of the SFS algorithm} \label{sec:3-description algorithm}

The SFS algorithm is a generalization of Lex-BFS for weighted graphs. As we will remark later, when applied to a~$0/1$ matrix, the SFS algorithm coincides with  Lex-BFS.
Roughly speaking, the basic idea is to explore a weighted graph by visiting first vertices which are similar to each other (i.e., corresponding to an edge  with largest weight) but respecting the priorities imposed by previously visited vertices.
The algorithm is based on the implementation of Lex-BFS as a sequence of partition refinement steps as  in \cite{Habib00}.

Partition refinement is a simple technique introduced in \cite{Paige87} to refine a given ordered partition $\phi=(B_1,\dots,B_r)$ of the ground set $V$ 
by a subset $W \subseteq V$.
It produces a new ordered partition of $V$ obtained by splitting each class $B_i$ of $\phi$ in two sets, the intersection $B_i \cap W$ and the difference $B_i \setminus W$. 
If one visualizes an ordered partition as a priority list, the idea behind partition refinement is to modify the classes of the ordered partition while respecting the priorities among the vertices.
 
In our new SFS algorithm, we basically operate a sequence of partition refinements. But instead of splitting into two subsets we will split into several subsets.
Specifically, given two ordered partitions $\phi$ and $\psi$, the output will be a new ordered partition which, roughly speaking, is obtained by splitting each class of $\phi$ into its intersections with the classes of $\psi$. The formal definition is as follows.

\begin{definition}[\textbf{Refine}]\label{def:refine}
Let $\phi=(B_1,\ldots, B_r)$ and  $\psi=(C_1,\ldots,C_s)$ be two ordered partitions of a set  $V$ and a subset $W \subseteq V$, respectively.
{\em Refining $\phi$ by $\psi$} creates the new ordered partition of $V$, denoted by 
\textit{Refine$(\phi,\psi)$}, obtained by replacing in $\phi$ each class $B_i$  by the ordered sequence of classes $(B_i\cap C_1, \ldots, B_i\cap C_s,B_i \setminus (C_1 \cup \dots \cup C_s)=B_i\setminus W)$ and keeping only nonempty classes.
\end{definition}

We will use this partition refinement operation in the case when the partition $\psi$ is obtained by partitioning for decreasing values the elements of the neighborhood $N(p)$ of a given element $p$, according to the following definition.

\begin{definition}[\textbf{Similarity partition}]\label{def:layer partition}
Consider a nonnegative matrix $A\in \mathcal S^n$ and an element $p \in [n]$.
Let $a_1>\ldots> a_s>0$ be the distinct values taken by the entries $A_{px}$ of $A$ for $x \in N(p)=\{y \in [n] : A_{py} >0\}$ and, for $i\in [s]$, set $C_i=\{x \in N(p):  A_{px}=a_i\}$.
Then we define $\psi_p=(C_1,\ldots,C_s)$, which we call the \emph{similarity partition} of $N(p)$ with respect to~$p$.
\end{definition}

We can now describe the SFS algorithm.
The input 
is a nonnegative matrix $A \in \MS^n$ and the output is an ordering $\sigma$ of the set $V=[n]$, that we call a {\em SFS ordering} of~$A$.
As in any general graph search algorithm, the central idea of the SFS algorithm is that, at each iteration, a special vertex (called the \textit{pivot}) is chosen among the subset of unvisited vertices (i.e., the subset of vertices that have not been a pivot in prior iterations).
Such vertices are ordered in a queue which defines the priorities for visiting them. 
Intuitively, the pivot is chosen  as the most similar to the visited vertices, but respecting the visiting priorities imposed by previously visited vertices.

\medskip
\begin{algorithm}[H]\label{alg:SFS}
\caption{\textit{\textit{SFS}}$(A)$}
\SetKwInput {KwIn}{input}
\SetKwInput {KwOut}{output}
\KwIn{a nonnegative matrix $A \in \mathcal{S}^{n}$}
\KwOut{a linear order $\sigma$ of $[n]$}
\vspace{2ex}
$\phi=(V) \leftarrow$ queue of unvisited vertices\\
\For{$i=1,\dots,n$}{
	$S$ is the first class of $\phi$\\
	choose $p$ arbitrarily in $S \leftarrow $ new pivot \label{alg: ties}\\
	$\sigma(p)=i$ $\leftarrow$ let $p$ appear at position $i$ in $\sigma$\\
	remove $p$ from $\phi$\\
	$N(p)$ is the set of vertices $y\in \phi$ with $A_{py}>0$\\
	$\psi_p$ is the similarity partition of $N(p)$ with respect to $p$\\
	$\phi=$\textit{Refine} $(\phi,\psi_p)$
}
\Return: $\sigma$
\end{algorithm}
\medskip

We now discuss in detail how the algorithm works.
In the beginning, all vertices in $V$ are unvisited, i.e., the queue $\phi$ of unvisited vertices is initialized with the unique class $V$.

At the  iteration $i$, we are given an element $p_{i-1}$ (which is the pivot chosen at iteration $i-1$) and a queue $\phi(p_{i-1})=(B_1, \dots, B_r)$, which is an ordered partition of the set of unvisited vertices.
There are two main tasks to perform: the first task is to select the new pivot $p_{i}$, and the second task is to update the queue $\phi(p_{i-1})$ in order to obtain the new queue $\phi(p_{i})$.

The first task is carried out as follows. 
As in the standard {Lex-BFS}, we denote by~$S$ the {\em slice} induced by $p_{i-1}$ (i.e., the last visited vertex), which consists  of the vertices among which to choose the next pivot $p_{i}$. 
The slice $S$ coincides exactly with the first class $B_1$ of $\phi(p_{i-1})$.
We distinguish two cases depending on the size of the slice $S$.
If $|S|=1$, then the new pivot $p_i$ is the unique element of the slice $S$.
If $|S|>1$, we say that we have {\em ties} and, in the general version of the SFS algorithm, we break them arbitrarily. We will see in Section \ref{sec:4-SFS+ algorithm} a variant of SFS (denoted by $\SFS_+$) where such ties are broken using a linear order given as additional input to the algorithm.
Once the new pivot $p_i$ is chosen, we mark it as visited (i.e.,  we remove it from the queue $\phi(p_{i-1})$) and we set $\sigma(p_i)=i$ (i.e., we let $p_i$ appear at position $i$ in~$\sigma$).

The second task is the update of the queue $\phi(p_{i-1})$, which
can be done as follows.
Intuitively, we update $\phi(p_{i-1})$ according to the similarities of $p_i$ with respect to the unvisited vertices and compatibly with the queue order.
Specifically, first we compute the similarity partition $\psi_{p_i}=(C_1,\ldots,C_s)$ of the neighborhood $N(p_i)$ of $p_i$ among the unvisited vertices
(see Definition \ref{def:layer partition}).
Second, we refine the ordered partition $\phi(p_{i-1})~\setminus~p_i=(B_1\setminus \{p_i\}, B_2,\ldots, B_r)$ by the ordered partition $\psi_{p_i}$ (see Definition~\ref{def:refine}). The resulting ordered partition is the ordered partition $\phi(p_i)$.

Note that if the matrix has only $0/1$ entries then the similarity partition $\psi_{p_i}$ has only one class, equal to the neighborhood  
of $p_i$ among the unvisited vertices.
Hence, the refinement procedure defined in Definition \ref{def:refine} simply reduces to the partition refinement operation defined in \cite{Habib00} for Lex-BFS.
This is why Lex-BFS is actually a special case of SFS for $0/1$ matrices. 

Note also that, by construction, each  class of the queue $\phi(p_i)$ is an interval of $\sigma$ (i.e., the elements of the class are consecutive in $\sigma$).
Furthermore, each of the  visited vertices $p_1, \dots, p_i$ is homogeneous to every class of the queue $\phi(p_i)$.

We show a simple example to illustrate how the algorithm works concretely.
Consider the following matrix:
\begin{equation*}
A=
\bordermatrix{
~ & \textbf{1} & \textbf{2} & \textbf{3} & \textbf{5} & \textbf{7} & \textbf{8} & \textbf{9} & \!\! \textbf{11} & \!\!\textbf{13} & \!\!\textbf{14}& \!\!\textbf{17} & \!\!\textbf{19}\cr
\textbf{1}  & * & 0 & 7 & 3 & 3 & 3 & 0 & 3 & 3 & 4 & 0 & 3 \cr
\textbf{2}  & & * & 0 & 7 & 6 & 3 & 8 & 3 & 3 & 0 & 8 & 6 \cr
\textbf{3}  & & & * & 3 & 3 & 3 & 0 & 3 & 3 & 8 & 0 & 3 \cr
\textbf{5}  & & & & * & 6 & 5 & 7 & 5 & 5 & 3 & 7 & 8 \cr
\textbf{7}  & & & & & * & 5 & 6 & 5 & 5 & 3 & 6 & 7 \cr
\textbf{8}  & & & & & & * & 4 & 8 & 6 & 5 & 4 & 5 \cr
\textbf{9}  & & & & & & & * & 4 & 3 & 0 & 8 & 6 \cr
\textbf{11} & & & & & & & & * & 7 & 5 & 4 & 5 \cr
\textbf{13} & & & & & & & & & * & 5 & 3 & 5 \cr
\textbf{14} & & & & & & & & & & * & 0 & 3 \cr
\textbf{17} & & & & & & & & & & & * & 6 \cr
\textbf{19} & & & & & & & & & & & & * \cr
}
\end{equation*}
studied in  \cite{Prea14} (we use also their original names for the vertices). 
In Figure \ref{fig:SFS sweep} are reported all the iterations of the SFS algorithm using as initial order of the vertices the reversal of the original labeling of the matrix.
At each iteration, the vertices in the blocks are the univisited vertices in the queue.

\begin{figure}[!ht]
\resizebox{\textwidth}{!}{
\begin{tabular}{C{\textwidth}}
\includegraphics[scale=0.95]{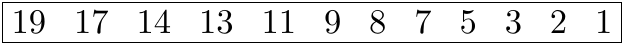} \hfill
\includegraphics[scale=0.95]{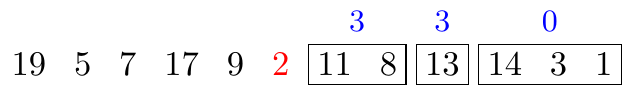} \hfill
\includegraphics[scale=0.95]{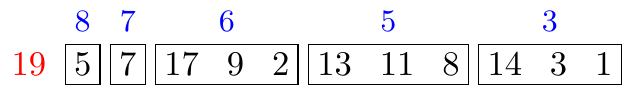} \hfill
\includegraphics[scale=0.95]{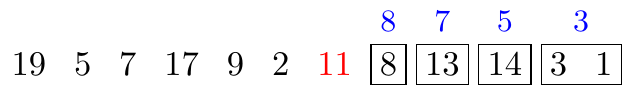} \hfill
\includegraphics[scale=0.95]{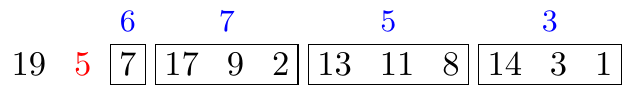} \hfill 
\includegraphics[scale=0.95]{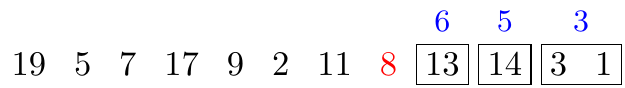} \hfill
\includegraphics[scale=0.95]{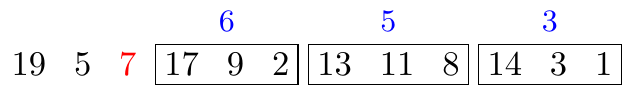} \hfill
\includegraphics[scale=0.95]{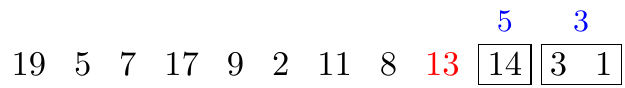} \hfill
\includegraphics[scale=0.95]{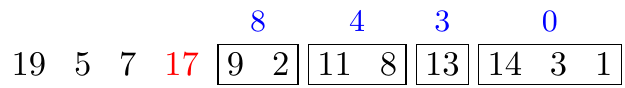} \hfill
\includegraphics[scale=0.95]{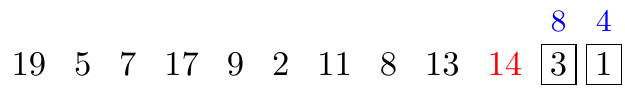} \hfill
\includegraphics[scale=0.95]{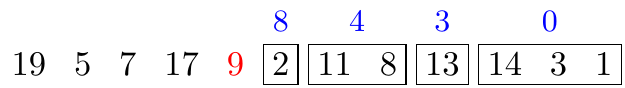} \hfill
\includegraphics[scale=0.95]{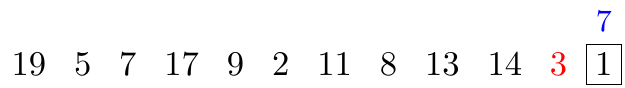} \hfill
\end{tabular} 
}
\caption{Iterations of SFS algorithm: in bold the pivot which is chosen at the current iteration, above the blocks the similarity between the new pivot and the vertices in the classes of the queue. The first line on the left shows the initialization step of the algorithm.}
\label{fig:SFS sweep}
\end{figure}

\subsection{Characterization of SFS orderings} \label{sec:3-characterization SFS}

In this subsection we characterize the linear orders returned by the SFS algorithm in terms of a 3-point condition.
This characterization applies to any (not necessarily Robinsonian) matrix and it is the analogue of  \cite[Thm 3.1]{Corneil09} for  Lex-BFS.

\begin{theorem}\label{thm:SFS ordering characterization}
Given a matrix $A\in \mathcal S^n$, an ordering  $\sigma$ of $[n]$ is a SFS ordering of $A$ if and only if the following condition holds: 
\begin{equation}\label{eq:SFS characterization}
\begin{array}{l}
\text{For all } x,y,z \in [n] \text{ such that } A_{xz}>A_{xy} \text{ and } x <_{\sigma} y <_{\sigma} z,\\
\text{there exists } u \in [n] \text{ such that } u <_{\sigma} x \text{ and }  A_{uy}>A_{uz}.
\end{array}
\end{equation}
\end{theorem}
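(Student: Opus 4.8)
The plan is to prove both implications by tracking how the queue (ordered partition) $\phi$ evolves under the partition-refinement steps, relying on a single structural invariant: refinement only \emph{subdivides} classes and never reorders them, since each class $B_i$ is replaced in place by consecutive subclasses. Consequently the queue grows finer over the run, so that if two vertices $y,z$ lie in a common class at some iteration, they lie in a common class at every earlier iteration; and once $y,z$ are split into distinct classes, the order of those classes is frozen for the rest of the run. I would record this monotonicity first, as it drives both directions. Throughout I fold the zero-similarity case into the similarity comparisons uniformly: a vertex outside $N(p)$ is sent to the trailing remainder block $B_i\setminus N(p)$, so $A_{pu}>A_{pv}$ (allowing value $0$) always means that $u$ lands weakly earlier than $v$ when refining by the pivot $p$.

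For the forward implication ($\Rightarrow$), suppose $\sigma$ is an SFS ordering and, for contradiction, that \eqref{eq:SFS characterization} fails for some $x<_\sigma y<_\sigma z$ with $A_{xz}>A_{xy}$ and with $A_{uy}\le A_{uz}$ for every $u<_\sigma x$. I would follow the pair $\{y,z\}$ up to the iteration at which $x$ becomes the pivot; at that moment the previously chosen pivots are exactly the vertices $u<_\sigma x$. By the invariant and induction over these refinements, the hypothesis $A_{uy}\le A_{uz}$ forces $z$ never to land in a later class than $y$: the first pivot $u$ that separates them must satisfy $A_{uy}<A_{uz}$, placing $z$ in the earlier subclass, after which the order is frozen. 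Hence, just before $x$ is processed, either $y,z$ share a class or $z$ precedes $y$. Refining by $x$, since $A_{xz}>A_{xy}\ge 0$ puts $z$ into a strictly higher-similarity block than $y$, the class of $z$ comes strictly before that of $y$; by frozenness this yields $z<_\sigma y$, contradicting $y<_\sigma z$.

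For the reverse implication ($\Leftarrow$), suppose \eqref{eq:SFS characterization} holds, write $\sigma=(x_1,\dots,x_n)$, and run SFS breaking every tie by choosing the $\sigma$-smallest available vertex. I would prove by induction on $i$ that this is always \emph{legal}, i.e., that at iteration $i$ the vertex $x_i$ lies in the first class of the queue; this at once exhibits a run of SFS that outputs $\sigma$. Assuming the prefix $x_1,\dots,x_{i-1}$ has been produced, suppose for contradiction that $x_i$ is not in the first class, and pick any $w$ in the first class, so $x_i<_\sigma w$. Let $u$ be the first previously chosen pivot that separated $x_i$ from $w$; since $w$ ends up in the earlier class, $A_{uw}>A_{ux_i}$, while $u<_\sigma x_i<_\sigma w$. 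Applying \eqref{eq:SFS characterization} to the triple $(u,x_i,w)$ furnishes a pivot $t<_\sigma u$ with $A_{tx_i}>A_{tw}$. But $x_i$ and $w$ shared a class just before $u$'s refinement, hence, by the invariant, already shared a class just after $t$'s refinement; yet $A_{tx_i}>A_{tw}$ means $t$'s refinement would have split them, a contradiction. Therefore $x_i$ is in the first class.

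The main obstacle is the reverse direction: one must convert the purely order-theoretic hypothesis \eqref{eq:SFS characterization} into a geometric statement about the queue. The decisive idea is to single out the \emph{first} pivot $u$ separating $x_i$ from the offending $w$, feed $(u,x_i,w)$ into \eqref{eq:SFS characterization}, and then use refinement-monotonicity to push the resulting earlier pivot $t$ into a contradiction. The remaining care is bookkeeping with the zero-similarity convention, ensuring that the inequalities $A_{\cdot y}\le A_{\cdot z}$ and $A_{\cdot x_i}>A_{\cdot w}$ govern the class-order exactly as claimed even when one vertex is a non-neighbor of the pivot.
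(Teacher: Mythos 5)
Your proof is correct. The forward direction is essentially the paper's argument: follow the pair $\{y,z\}$, split into the case where some $u<_\sigma x$ separates them (necessarily with $A_{uz}>A_{uy}$, forcing $z<_\sigma y$) and the case where they are still tied when $x$ becomes the pivot (so $x$ itself forces $z<_\sigma y$). The reverse direction, however, takes a genuinely different route. The paper argues by contradiction: it picks an SFS ordering $\tau$ starting at the same vertex with maximal initial overlap with $\sigma$, locates the first disagreement, and then bounces condition \eqref{eq:SFS characterization} back and forth between $\sigma$ and $\tau$ (using the already-established forward direction for $\tau$) to build an infinite strictly descending chain of indices. You instead verify directly, by induction on $i$, that the greedy run of SFS which always picks the $\sigma$-earliest vertex of the current slice is legal, i.e., that $x_i$ always lies in the first class; the key move is to take the \emph{first} pivot $u$ separating $x_i$ from an offending $w$, feed the triple $(u,x_i,w)$ into \eqref{eq:SFS characterization}, and use the monotonicity of refinement (classes only subdivide, never merge or reorder) to contradict the minimality of $u$. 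Your argument is constructive where the paper's is extremal, needs \eqref{eq:SFS characterization} only once per step rather than an infinite-descent iteration, and makes the structural invariant about the queue explicit; the paper's version has the mild advantage of not requiring any bookkeeping about the internal state of the queue in the reverse direction, working purely with the 3-point condition. Both are sound.
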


\begin{proof}
$(\Rightarrow)$ 
Suppose $\sigma$ is a SFS ordering of $A$. Assume $x<_\sigma y<_\sigma z$  and $A_{xz} > A_{xy}$,  but $A_{uz} \geq A_{uy}$ for each $u<_{\sigma} x$. Assume first that  $A_{uz} > A_{uy}$ for some $u<_{\sigma} x$ and let $u$ be the first such vertex in $\sigma$. 
Then $A_{wz}=A_{wy}$ for each $w <_{\sigma} u$, and thus $y,z$ are in the same class of the queue of unvisited vertices when $u$ is chosen as pivot.
Therefore, $z$ would be ordered before $y$ in $\sigma$ when computing the similarity partition of $N(u)$, i.e., we would have  $z <_{\sigma} y$, a contradiction.
Hence,  one has $A_{uz}=A_{uy}$ for each $u<_{\sigma}x$. This implies  that $y,z$ are in the same class of the queue of unvisited vertices before $x$ is chosen as pivot. Hence, when $x$ is chosen as pivot, as $A_{xz} > A_{xy}$, when computing the similarity partition of $N(x)$ we would get $z <_{\sigma} y$, which is again a contradiction.

$(\Leftarrow)$ 
Assume that the condition (\ref{eq:SFS characterization}) of the theorem holds, but  $\sigma$ is not a SFS ordering. Let $a$ denote the first vertex of $\sigma$. Let   $\tau$ be a SFS ordering of $A$ starting at $a$ with  the largest possible initial overlap with $\sigma$. 
Say, $\sigma$ and $\tau$ share the same initial order $(a,a_1,\ldots,a_r)$ and they differ at the next position. Then we have that 
$\sigma=(a,a_1,\ldots,a_r,y, \ldots, z, \ldots,)$ and 
$\tau=(a,a_1,\ldots,a_r,z,\ldots, y,\ldots)$ with $y\ne z$.

\ignore{
\begin{figure}[!ht]
\centering
\includegraphics[page=1]{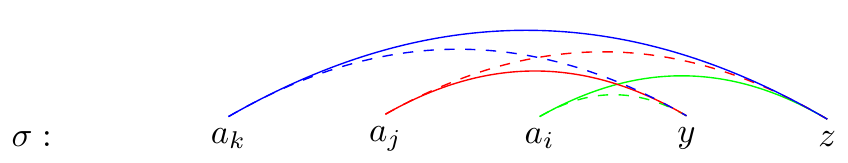}
\\
\includegraphics[page=2]{SFS_characterization.pdf}
\end{figure}
}

In the SFS ordering  $\tau$, the two elements $y,z$ do not lie in the slice of the pivot~$a_r$. Indeed, if $y,z$ would lie in the slice of $a_r$ then one could select $y$ as the next pivot instead of $z$, which would result in another SFS ordering  $\tau'$ starting at $a$ and with a larger overlap with $\sigma$ than $\tau$. 
Hence, there exists $i\le r$ such that $A_{a_iz}>A_{a_iy}$.  
Since $a_i <_{\sigma} y <_{\sigma} z$ then applying the condition (\ref{eq:SFS characterization}) to  $\sigma$, we deduce that there exists $j<i$ such that $A_{a_jy}>A_{a_jz}$.
Now, we have $a_j<_\tau z <_\tau y$ with $A_{a_jy}>A_{a_jz}$. As $\tau $ is a SFS ordering, as we have just shown it must satisfy the condition (\ref{eq:SFS characterization}) and thus there must exist an index $k<j$ such that
$A_{a_kz}>A_{a_ky}$.
Hence, starting from an index $i\le r$ for which $A_{a_iz}>A_{a_iy}$, we have shown the existence of another index $k<i \leq r$ for which $A_{a_kz}>A_{a_ky}$.
Iterating this process, we reach a contradiction. We will use in some other proofs this same type  of infinite chain argument, based on constructing an infinite chain of elements.
$\qquad$
\end{proof}

One can easily show that if $\sigma$ is a SFS ordering of $A$ and $S \subseteq V$ is a subset such that any element $x \notin S$ is homogeneous to~$S$, then the restriction $\sigma[S]$ of $\sigma$ to~$S$ is a SFS ordering of $A[S]$.
Note that, by construction, if we consider a generic slice $S$ encountered during the execution of  the SFS algorithm returning~$\sigma$,
then each vertex coming before $S$ in $\sigma$ is homogeneous to $S$. 
Hence,  a direct consequence of Theorem~\ref{thm:SFS ordering characterization} is that the restriction of $\sigma$ to any slice $S$ encountered throughout  the  SFS algorithm returning $\sigma$  is a SFS ordering of the  submatrix $A[S]$.

\subsection{The Path Avoiding Lemma} \label{sec:3-PAL}

In this subsection we discuss a fundamental lemma which we call the `Path Avoiding Lemma'.
It will play a crucial role throughout the paper and, in particular, for the characterization of anchors. 
Differently from the analysis in the previous subsection, where we did not make any assumption on the structure of the matrix  $A$, the Path Avoiding Lemma states some important properties of  SFS orderings when the input matrix is Robinsonian.

\medskip
Before stating this lemma, we need to investigate in  more detail the refinement step in the SFS algorithm.
An important operation in the Refine task in Algorithm \ref{alg:SFS} is the splitting procedure of each class of the queue $\phi$.
The following notion of `vertex splitting a pair of vertices' is useful to understand it.
Consider an order $\sigma=\SFS(A)$ and vertices $x<_\sigma y<_\sigma z$, where  $x=p_i$ is  the pivot chosen at the $i$th iteration  in Algorithm \ref{alg:SFS}.
We say that $x$ {\em splits} $y$ and $z$ if $x$ is the first pivot for which $y$ and $z$ do not belong to the same class in the queue ordered partition $\phi({p_i})$.
Recall that~$\phi(p_i)$ denotes the queue  of unvisited nodes induced by pivot~$p_i$, i.e., at the end of iteration~$i$ (after the refinement step).
Hence, saying that  $y,z$ are split by $x$ means that   $y,z$ belong to a common class $ B_j$  of $ \phi(p_{i-1})$ and that  they belong to distinct classes $B_h,B_k$ of $\phi(p_i)$,  where $ y \in B_{h},$ $z \in B_k$ and   $B_h$ comes before $ B_k$ in $\phi(p_i)$.
Equivalently, $x=p_i$ splits $y$ and $z$ if $A_{xy} > A_{xz}$ and  $A_{uy}=A_{uz}$ for all  $u<_\sigma p_i$.

Then, we say that two vertices $y<_\sigma z$ are {\em split} in $\sigma$ if they are split by some vertex $x<_\sigma y$. 
When $y$ and $z$ are not split in $\sigma$,  we say that they are {\em tied}. In this case, ties must be broken between $y$ and $z$.
In the SFS algorithm we  assume that ties are broken arbitrarily. In Section \ref{sec:4-SFS+ algorithm} we will see the variation  $\SFS_+$ of SFS  where ties are broken using a linear order $\tau$ given as input together with the matrix $A$.
The following lemma will be used as base case for proving the Path Avoiding Lemma.

\begin{lemma}\label{thm:equality A_xz=A_xy}
Assume that $A \in \MS^n$ is a  Robinsonian matrix and let  $\sigma=\SFS(A)$. Assume  that $x <_{\sigma} y <_{\sigma} z$ and that there exists a Robinson ordering $\pi$ of $A$ such that $x <_{\pi} z <_{\pi} y$. Then $y$ and $z$ are not split in $\sigma$ by any vertex $u \leq_{\sigma} x$. That is,
$A_{uy}=A_{uz}$ for all $u\le_\sigma x$. 
\end{lemma}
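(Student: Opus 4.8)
The plan is to argue by contradiction using the $\sigma$-first vertex that separates $y$ and $z$, and then to manufacture an infinite strictly $\sigma$-decreasing chain of vertices, which is impossible since $[n]$ is finite.

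First I would set up the contradiction and record three preliminary facts. Suppose some $u\le_\sigma x$ has $A_{uy}\ne A_{uz}$, and let $u$ be the $\sigma$-first such vertex, so that $A_{wy}=A_{wz}$ for every $w<_\sigma u$. Since $y<_\sigma z$ and $u$ is the first splitter, the pivot $u$ must give $y$ higher priority, i.e.\ $A_{uy}>A_{uz}$ (otherwise $z$ would precede $y$). Next, $u\ne x$: the Robinson ordering $\pi$ with $x<_\pi z<_\pi y$ forces $A_{xy}\le A_{xz}$ by~(\ref{eq:Robinson inequalities}), which rules out $A_{uy}>A_{uz}$ when $u=x$; hence $u<_\sigma x$ and $u\notin\{x,y,z\}$. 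Finally, Lemma~\ref{thm:non-Robinson property} with middle vertex $z$ gives $A_{uz}\ge\min\{A_{ux},A_{uy}\}$, which together with $A_{uy}>A_{uz}$ forces $A_{ux}\le A_{uz}<A_{uy}$; in particular $A_{ux}<A_{uy}$. I also record two \emph{companion placements} read off from~(\ref{eq:Robinson inequalities}): any $w$ with $A_{wx}>A_{wz}$ satisfies $w<_\pi z$ (since $z<_\pi w$ with $x<_\pi z$ would give $A_{wx}\le A_{wz}$), and, because $x<_\pi z<_\pi u$, any $w$ with $A_{wu}>A_{wx}$ satisfies $x<_\pi w$.

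The engine is the SFS characterization (Theorem~\ref{thm:SFS ordering characterization}): a strict inequality in which a $\sigma$-later vertex is more similar to a common earlier vertex always produces a still-earlier witness with the reversed inequality. I would use it to alternate between two families of vertices, all strictly before $u$ in $\sigma$ (hence with $A_{\cdot y}=A_{\cdot z}$): type-A vertices $t$ with $A_{tx}>A_{ty}$, and type-B vertices $t'$ with $A_{t'u}>A_{t'x}$. Starting from $A_{uy}>A_{ux}$ with $u<_\sigma x<_\sigma y$, Theorem~\ref{thm:SFS ordering characterization} yields a type-A vertex $u_1<_\sigma u$. From a type-A vertex $t$ I first note $A_{tx}>A_{tz}$ (as $A_{ty}=A_{tz}$), so $t<_\pi z$, and then $t<_\pi z<_\pi u$ with~(\ref{eq:Robinson inequalities}) gives $A_{tu}\le A_{tz}<A_{tx}$; applying Theorem~\ref{thm:SFS ordering characterization} to $t<_\sigma u<_\sigma x$ with $A_{tx}>A_{tu}$ produces a type-B vertex $t'<_\sigma t$. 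From a type-B vertex $t'$ the companion placement gives $x<_\pi t'$, and in the generic position $x<_\pi t'<_\pi z$ one has $A_{t'y}=A_{t'z}\ge A_{t'u}>A_{t'x}$, so Theorem~\ref{thm:SFS ordering characterization} applied to $t'<_\sigma x<_\sigma y$ returns a new type-A vertex $t''<_\sigma t'$. Iterating $A\to B\to A\to\cdots$ yields vertices strictly decreasing in $\sigma$, the desired contradiction.

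The main obstacle is the case analysis hidden in the type-B $\to$ type-A step: after placing $x<_\pi t'$ one must also treat the position $z<_\pi t'$, where~(\ref{eq:Robinson inequalities}) yields only $A_{t'x}\le A_{t'y}$. The step goes through cleanly when this is strict, but degenerates when $A_{t'x}=A_{t'y}$, a configuration in which $t'$ is most similar to the $\sigma$-earliest of the relevant vertices and no ``later-more-similar'' triple is available among $x,y,z,u$ to re-trigger Theorem~\ref{thm:SFS ordering characterization}. I expect to eliminate this degenerate equality by feeding back the Robinson inequalities relating $t'$ to the already-constructed chain vertices (which all lie to the left of $z$ in $\pi$), thereby either contradicting~(\ref{eq:Robinson inequalities}) directly or recovering a strict inequality that restarts the descent. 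Everything else is routine bookkeeping with the Robinson inequalities and the first-splitter property.
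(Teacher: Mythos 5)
Your setup is sound: taking $u$ to be the $\sigma$-first vertex with $A_{uy}\ne A_{uz}$, deducing $A_{uy}>A_{uz}$ from the tie-breaking rule, ruling out $u=x$ via the Robinson ordering, and extracting $A_{ux}\le A_{uz}<A_{uy}$ from Lemma~\ref{thm:non-Robinson property} are all correct, as is the type-A step. But the type-B $\to$ type-A step has a genuine gap, which you flag yourself and do not close. When the type-B vertex $t'$ (with $A_{t'u}>A_{t'x}$ and $A_{t'y}=A_{t'z}$) lands at $x<_\pi z<_\pi t'$, the Robinson inequalities only give $A_{t'x}\le A_{t'z}=A_{t'y}$, and in the equality case every inequality available at $t'$ among $\{x,y,z,u\}$ has the $\sigma$-\emph{earlier} vertex more similar to $t'$ (namely $A_{t'u}>A_{t'x}=A_{t'y}=A_{t'z}$ with $t'<_\sigma u<_\sigma x<_\sigma y<_\sigma z$), which is precisely the configuration on which Theorem~\ref{thm:SFS ordering characterization} is silent. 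Your suggested repair does not work as stated: feeding in the chain vertex $t$ gives $A_{t't}\le\min\{A_{tz},A_{t'z}\}\le A_{t'z}=A_{t'x}<A_{t'u}$ (using $t<_\pi z<_\pi t'$), hence $A_{t'u}>A_{t't}$ with $t'<_\sigma t<_\sigma u$, and Theorem~\ref{thm:SFS ordering characterization} then produces a witness $s<_\sigma t'$ with $A_{st}>A_{su}$ --- a \emph{third} type of vertex tied to $t$ rather than a return to type A, so the bookkeeping ramifies instead of closing the cycle.

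The paper's proof avoids this by carrying a stronger invariant. Instead of fixing $x,y,z,u$ and alternating witness types, it recurses on ``bad quadruples'': from $(w,x,y,z)$ with $w$ \emph{the pivot splitting} $y,z$, it takes $u$ to be \emph{the pivot splitting} $x,y$; since $y$ and $z$ still lie in a common class of the queue when $u$ acts, $u$ is simultaneously the pivot splitting $x,z$, which hands over the equality $A_{uy}=A_{uz}$ for free, forces $u<_\pi z<_\pi w$, and makes $(u,w,x,z)$ a bad quadruple with respect to the reversed ordering $\pi^{-1}$. The role shift combined with the reversal of the Robinson ordering keeps every step of the descent in the same non-degenerate case. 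To salvage your argument you would need to replace ``some witness from Theorem~\ref{thm:SFS ordering characterization}'' by ``the splitting pivot'' and let the reference triple move with the recursion; as written, the descent can stall.
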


\begin{proof}
We first show that $y,z$ are not split by any vertex $w$ occurring before $x$ in $\sigma$. 
Suppose, for contradiction, that $y,z$ are split by a vertex $w <_\sigma x$. Hence, $A_{wy} > A_{wz}$.
This implies $z <_\pi w$ for, otherwise, $w<_\pi z<_\pi y$ would imply $A_{wy} \le A_{wz}$, a contradiction.
Hence we have $w <_{\sigma} x <_{\sigma} z$ and $x <_{\pi} z <_{\pi} w$.
Because $\pi$ is a Robinson ordering, we get $A_{wz} \geq A_{wx}$ and thus $A_{wy} > A_{wz} \geq A_{wx}$.
Therefore, the quadruple  $(w,x,y,z)$  satisfies the following properties (a)-(d): 
(a) $w<_\sigma x<_\sigma y<_\sigma z$, 
(b)  $x<_\pi z <_\pi w$  for some Robinson ordering $\pi$, 
(c) $w$ is the pivot splitting $y,z$, 
and (d) $A_{wy}>A_{wx}, A_{wz}$ 
Call any quadruple satisfying (a)-(d)  a {\em bad quadruple}.

\begin{figure}[!ht]
\centering
\includegraphics{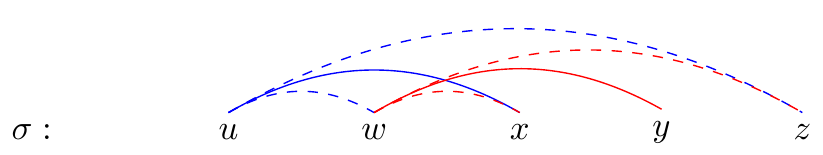}
\caption{Illustrating the proof of Theorem \ref{thm:equality A_xz=A_xy}: $(w,x,y,z)$ and $(u,w,x,z)$ are bad quadruples (the dotted lines indicate similarities that are strictly smaller than the continuous ones of the same thickness)}
\end{figure}

We now show that if $(w,x,y,z)$ is a bad quadruple then there exists $u<_\sigma w$ for which $(u,w,x,z)$ is also a bad quadruple. Hence, iterating we will get a contradiction (so we use here too an infinite chain argument).
We now proceed to show the existence of $u<_\sigma w$ for which $(u,w,x,z)$ is also a bad quadruple.
Since $A_{wx} < A_{wy}$,  the vertices $x,y$ are already split before $w$ becomes a pivot; otherwise, if they would belong  to the same class when $w$ is chosen as new pivot, then we would get $y <_{\sigma} x$.
Let $u=p_i$ the pivot splitting $x,y$, i.e., $u <_{\sigma} w$ and $A_{ux} > A_{uy}$. 
Thus $x,y$ belong to the same class (say) $B \in \phi(p_{i-1})$ when $u$ is chosen as new pivot at iteration $i$, but in different classes of $\phi(p_{i})$.
Since $w$ is the  pivot splitting $y,z$ and $u<_\sigma w$,  it follows that $y,z$ belong to the same class when $u$ is chosen as pivot, and thus  $x,y,z\in B$. 
Therefore $u$ is also the pivot splitting $x$ and $z$ and thus $A_{ux}>A_{uy}=A_{uz}$. 
In turn this implies that $u<_\pi z$ for, otherwise, $x<_\pi z<_\pi u$ would imply $A_{ux}\le A_{uz}$, a contradiction.
Therefore, $u<_\pi z<_\pi w$ and by definition of Robinson ordering we have $A_{uw}\le A_{uz}$  and, as $A_{ux}>A_{uz}$, this implies that $A_{uw} < A_{ux}$.
Summarizing, we have shown that the quadruple $(u,w,x,z)$ is bad since  it satisfies the conditions (a)-(d): (a) $u<_\sigma w<_\sigma x<_\sigma z$, (b) $w<_{\pi^{-1}} z <_{\pi^{-1}}  u$ for the Robinson ordering $\pi^{-1}$,  (c) $u$ splits $x$ and $z$, and (d) $A_{ux}>A_{uw}, A_{uz}$.
Thus we have shown  that  there cannot exist a bad quadruple and therefore that $y,z$ are not split by any vertex $w$ appearing before $x$ in $\sigma$.

We now conclude the proof of the lemma by showing that $y,z$ are also not split by $x$.
For this, we need to show that $A_{xz}=A_{xy}$. 
Suppose for contradiction that $A_{xz} \neq A_{xy}$. As $x<_\pi z<_\pi y$,  it can only be that $A_{xz} > A_{xy}$. 
Let $x=p_i$, i.e., $x$ is the pivot chosen at iteration $i$ of Algorithm \ref{alg:SFS}.
Since we have just shown that $y,z$ are not split before $x$, then at the iteration $i$ when $x$ is chosen as pivot, we would order $z <_{\sigma} y$ as $A_{xz} > A_{xy}$, which is a contradiction because $y<_{\sigma} z$ by assumption.
$\qquad$
\end{proof}

A first direct consequence of Lemma \ref{thm:equality A_xz=A_xy} is the following.

\begin{corollary}\label{thm:anchor simplicial sigma + base case PAL}
Let $A \in \MS^n$ be a Robinsonian matrix, let $\sigma=SFS(A)$, and consider distinct elements $x,y,z \in V$ such that  $x<_\sigma y<_\sigma z$. 
The following holds:
\begin{itemize}
\item[(i)] 
 $A_{xy}\ge \min\{A_{xz},A_{yz}\}$. 
 \item[(ii)] 
 If $x<_\pi z<_\pi y$ for some Robinson ordering $\pi$, then the path $P=(x,z)$ does not avoid $y$.
 \end{itemize}
 \end{corollary}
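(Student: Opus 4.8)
The plan is to derive both items from the single equality furnished by Lemma~\ref{thm:equality A_xz=A_xy}: whenever $x<_\sigma y<_\sigma z$ and some Robinson ordering places $z$ strictly between $x$ and $y$ (i.e.\ reads $x\ldots z\ldots y$), that lemma guarantees $A_{uy}=A_{uz}$ for all $u\le_\sigma x$, and in particular $A_{xy}=A_{xz}$. So in both parts the real task is to manufacture (or recognize) such a configuration and then read off the conclusion using the Robinson inequalities~(\ref{eq:Robinson inequalities}).

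For part~(i) I would argue by contradiction. Suppose $A_{xy}<\min\{A_{xz},A_{yz}\}$. Applying the observation following Lemma~\ref{thm:non-Robinson property} with the roles played by the triple $(x,z,y)$ (so that the strict inequality $A_{xy}<\min\{A_{xz},A_{yz}\}$ is exactly the hypothesis of that observation), we conclude that $z$ must lie strictly between $x$ and $y$ in every Robinson ordering of $A$. Fix a Robinson ordering $\pi$; after possibly replacing $\pi$ by its reverse $\pi^{-1}$, which is again a Robinson ordering, we may assume $x<_\pi z<_\pi y$. Now the hypotheses of Lemma~\ref{thm:equality A_xz=A_xy} are met, so $A_{xy}=A_{xz}$, contradicting $A_{xy}<A_{xz}$. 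This contradiction gives $A_{xy}\ge\min\{A_{xz},A_{yz}\}$.

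For part~(ii) the configuration is handed to us: the hypothesis $x<_\sigma y<_\sigma z$ together with $x<_\pi z<_\pi y$ is precisely what Lemma~\ref{thm:equality A_xz=A_xy} requires, so it directly yields $A_{xy}=A_{xz}$. Combining this with the Robinson inequality for $\pi$ applied to the triple ordered $x<_\pi z<_\pi y$, namely $A_{xy}\le\min\{A_{xz},A_{yz}\}$, I obtain $A_{xz}=A_{xy}\le A_{yz}$, and hence $A_{xz}\le\min\{A_{xy},A_{yz}\}$. By Definition~\ref{defpath}, this says exactly that the defining triple $(x,y,z)$ of the single-edge path $P=(x,z)$ is Robinson, i.e.\ $P$ does not avoid $y$.

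The only delicate point, and the one requiring care, is ensuring that Lemma~\ref{thm:equality A_xz=A_xy} is genuinely applicable, since it needs both the $\sigma$-order $x<_\sigma y<_\sigma z$ and a Robinson ordering reading $x\ldots z\ldots y$. In~(ii) this orientation is given, so nothing is needed. In~(i) all the work lies in producing the correct orientation: this is why I first invoke the observation to pin down $z$ as the element forced into the middle, and then exploit the fact that $\pi$ and $\pi^{-1}$ are simultaneously Robinson orderings to align the reading direction with $\sigma$. Once the lemma applies, the remainder is routine bookkeeping with the Robinson inequalities.
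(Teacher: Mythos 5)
Your proof is correct and follows essentially the same route as the paper: both parts reduce to Lemma~\ref{thm:equality A_xz=A_xy}, with part~(i) obtained by forcing $z$ between $x$ and $y$ in some Robinson ordering (your appeal to the observation after Lemma~\ref{thm:non-Robinson property} is just the paper's explicit case elimination) and part~(ii) by reading off $A_{xy}=A_{xz}$ and the Robinson inequalities for $\pi$. The only cosmetic difference is that you argue part~(ii) directly rather than by contradiction.
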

\begin{proof}
(i) Assume, for contradiction, that $A_{xy} < \min \{A_{xz},A_{yz}\}$. 
Pick a Robinson ordering $\pi$ of $A$ such that $x<_\pi y$. Then we must have $x <_{\pi} z <_{\pi} y$. Indeed,
 if $x <_{\pi} y <_{\pi} z$ then we would have $A_{xy} \geq A_{xz}$, and if $z <_{\pi} x <_{\pi} y$ we would have $A_{xy} \geq A_{yz}$, leading in both cases to a contradiction. 
Applying Lemma \ref{thm:equality A_xz=A_xy}, we conclude that $A_{xy}=A_{xz}$,  contradicting our assumption that $A_{xy}<A_{xz}$.

(ii) If $(x,z)$ avoids $y$ then $A_{xz} > \min \{A_{xy},A_{yz})$, where $\min\{A_{xy},A_{yz})= A_{xy}$ since $x<_\pi z<_\pi y$. Hence this contradicts  Lemma~\ref{thm:equality A_xz=A_xy}.
$\qquad$
\end{proof}

Note that the above result is  the analogue of the `$P_3$-rule'  for chordal graphs 
in~\cite[Thm 3.12]{Corneil09},  
which claims that, for any distinct $x,y,z\in V$ such that  $x<_\sigma y<_\sigma z$ while $x<_\pi z<_\pi y$ for some Robinson ordering $\pi$, the path $(x,z)$ does not avoid $y$. The next lemma strengthens the result of Corollary \ref{thm:anchor simplicial sigma + base case PAL} (ii), by showing  that there cannot exist {\em any} path from $x$ to $z$ avoiding $y$ and appearing fully before $z$ in $\sigma$.
We will refer to Lemma~\ref{thm:PAL}  below as the `Path Avoiding Lemma', also abbreviated as~(PAL) for ease of reference in the rest of the paper.

\begin{lemma}[\textbf{Path Avoiding Lemma (PAL)}]\label{thm:PAL}
Assume $A\in \mathcal S^n$ is a Robinsonian matrix and let $\sigma=\SFS(A)$. Consider distinct elements $x,y,z \in V$ such that  $x<_\sigma y<_\sigma z$. 
If $x<_\pi z<_\pi y$ for some Robinson ordering $\pi$, then there does not exist a path 
$P=(x,u_1,\ldots,u_k,z)$ from $x$ to $z$ avoiding $y$ and such that $u_1,\ldots,u_k<_\sigma z$.
\end{lemma}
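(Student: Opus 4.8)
We want to show that under the SFS hypotheses, no path $(x,u_1,\dots,u_k,z)$ avoiding $y$ can exist with all intermediate vertices $u_1,\dots,u_k$ occurring before $z$ in $\sigma$. The base case $k=0$ (the path $(x,z)$) is exactly Corollary~\ref{thm:anchor simplicial sigma + base case PAL}(ii), which rests on Lemma~\ref{thm:equality A_xz=A_xy}. So the real task is to upgrade that single-edge result to arbitrary path length.

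**The plan.** I would argue by induction on the path length $k$, and at the inductive step I expect to use the same \emph{infinite chain argument} already employed twice in the excerpt (in Theorem~\ref{thm:SFS ordering characterization} and Lemma~\ref{thm:equality A_xz=A_xy}). Suppose for contradiction that such a path $P=(x,u_1,\dots,u_k,z)$ exists. The first idea is to locate a vertex along $P$ that is poorly placed relative to $y$ and then peel it off. Concretely, consider the position of $y$ relative to $\pi$: we have $x<_\pi z<_\pi y$, so $y$ is the rightmost of the three in $\pi$. Along the path each consecutive triple $(u_i,y,u_{i+1})$ is non-Robinson, meaning $A_{u_iu_{i+1}}>\min\{A_{yu_i},A_{yu_{i+1}}\}$. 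The natural strategy is to find the \emph{first} intermediate vertex $u_j$ (in the order $\sigma$) and compare its $\sigma$-position with $x$. If every $u_i$ satisfies $x<_\sigma u_i$, I would try to apply Corollary~\ref{thm:anchor simplicial sigma + base case PAL}(i) repeatedly along the path to derive similarity inequalities forcing a contradiction with the avoiding condition; if some $u_i<_\sigma x$, I would instead want to shorten the path by re-routing around that vertex.

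**The key reduction.** The cleanest route is probably to pick the intermediate vertex $u_j$ that is smallest in $\sigma$, so $u_j \le_\sigma u_i$ for all $i$, and to split into cases according to whether $u_j <_\sigma x$ or $x <_\sigma u_j$. In the case $x<_\sigma u_j$, I would use Corollary~\ref{thm:anchor simplicial sigma + base case PAL}(i) applied to the triple consisting of $x$, $u_j$, and $z$ (after determining their $\sigma$-order) together with the avoiding inequalities to produce a strictly shorter avoiding path, say by showing that one can bypass $u_j$ and connect its two neighbors directly while preserving the ``avoiding $y$'' property — this reduces $k$ and invokes the induction hypothesis. In the case where some intermediate vertex precedes $x$ in $\sigma$, I expect to invoke Lemma~\ref{thm:equality A_xz=A_xy} to force equalities $A_{uy}=A_{uz}$ for vertices $u\le_\sigma x$, which contradict the strict inequalities demanded by the avoiding condition once the appropriate Robinson-order relations among the $\pi$-positions are read off. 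The anchor of both cases is to generate, from a purported counterexample, either a shorter counterexample or an infinite descending $\sigma$-chain.

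**The main obstacle.** The delicate part, and where I expect the most care is needed, is controlling the \emph{relative} $\pi$-positions of the intermediate vertices $u_i$: the hypothesis only pins down $x<_\pi z<_\pi y$, but says nothing a priori about where each $u_i$ falls in $\pi$. To apply Corollary~\ref{thm:anchor simplicial sigma + base case PAL} and Lemma~\ref{thm:equality A_xz=A_xy} I will need, for the relevant triples, a Robinson ordering realizing a specific left-to-right arrangement, and manufacturing such an ordering (or showing one must exist) using the avoiding inequalities and the Robinson property is the technical crux. I anticipate leaning on Lemma~\ref{thm:path avoiding and Robinson ordering}, which already tells us that the existence of the subpath forces $y$ to \emph{not} lie between the endpoints of that subpath in \emph{any} Robinson ordering, to constrain where the $u_i$ can sit relative to $y$; combining this with the $\sigma$-order via the infinite-chain bookkeeping is what makes the contradiction close.
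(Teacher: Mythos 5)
Your skeleton matches the paper's: induction on the path length with base case $P=(x,z)$ given by Corollary~\ref{thm:anchor simplicial sigma + base case PAL}(ii), and an infinite-chain flavour to close the contradiction. But the inductive step as you describe it has a genuine gap, and it sits exactly where you yourself flag ``the main obstacle.'' Your proposed reduction in the case $x<_\sigma u_j$ --- bypassing $u_j$ to ``connect its two neighbors directly while preserving the avoiding property'' --- does not work: the avoiding conditions $A_{u_{j-1}u_j}>\min\{A_{yu_{j-1}},A_{yu_j}\}$ and $A_{u_ju_{j+1}}>\min\{A_{yu_j},A_{yu_{j+1}}\}$ do not imply $A_{u_{j-1}u_{j+1}}>\min\{A_{yu_{j-1}},A_{yu_{j+1}}\}$, so the shortened sequence need not avoid $y$ and the induction hypothesis cannot be invoked on it. Likewise, in your other case the equalities $A_{uy}=A_{uz}$ for $u\le_\sigma x$ (from Lemma~\ref{thm:equality A_xz=A_xy}) do not by themselves contradict the avoiding inequalities, which compare $A_{u_iu_{i+1}}$ against $\min\{A_{yu_i},A_{yu_{i+1}}\}$ rather than $A_{uy}$ against $A_{uz}$.

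What the paper actually proves at the inductive step --- and what your plan never establishes --- is the structural claim that \emph{every} intermediate vertex satisfies $u_i<_\pi x$ and $y<_\sigma u_i$. This is shown by a secondary induction on $i$: assuming $x<_\pi u_i$, one uses Lemma~\ref{thm:path avoiding and Robinson ordering} on the suffix path $(u_i,\dots,u_k,z)$ to force $u_i<_\pi y$, then splits on whether $u_i<_\pi z$ or $z<_\pi u_i$, and in each subcase applies the induction hypothesis (on shorter paths, used as a black box on \emph{sub}paths and prefixes of $P$, not on a rerouted path) together with Lemma~\ref{thm:equality A_xz=A_xy} to extract equalities such as $A_{yu_{i-1}}=A_{yz}=A_{xy}=A_{xu_i}$ that clash with the avoiding inequality for the edge $(u_{i-1},u_i)$. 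Once $u_i<_\pi x$ and $y<_\sigma u_i$ hold for all $i$, Lemma~\ref{thm:equality A_xz=A_xy} gives $A_{yu_i}=A_{yz}=A_{yx}$, and the final avoiding inequality $A_{u_kz}>A_{yu_k}=A_{xz}$ contradicts $A_{u_kz}\le A_{xz}$ from $u_k<_\pi x<_\pi z$. Your proposal correctly identifies all the tools but does not supply this pinning-down of the $\pi$-positions of the $u_i$, which is the actual content of the lemma; as written the argument does not close.
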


\begin{proof}
The proof is by induction on the length $|P|=k+2$ of the path $P$.
The base case is $|P|=2$, i.e., $P=(x,z)$, which  is settled by Corollary \ref{thm:anchor simplicial sigma + base case PAL}.
Assume then, for contradiction,  that there exists a path $P=(x,u_1,\ldots,u_k,z)$ from $x$ to $z$ avoiding~$y$ with $u_1,\ldots,u_k<_\sigma z$ and $|P|\ge 3$, i.e., $k \geq 1$.
Let us call a path $Q$ {\em short} if it is shorter than $P$, i.e., if $|Q|<|P|$.
By the induction assumption, we know that the following holds: 
\begin{equation}\label{eqH}
\begin{array}{l}
\text{If } u<_\sigma v<_\sigma w \text{ and }
u<_\tau w<_\tau v \text{ for some Robinson ordering } \tau,\\
\text{then no short path } Q=(u,v_1,\ldots,v_r,w) \text{ from } u \text{ to } w \text{ avoiding } v\\
\text{and with } v_1,\ldots,v_r<_\sigma w \text{ exists}.
\end{array}
\end{equation}

Set $u_0=x$ and $u_{k+1}=z$.
As $P$ avoids $y$, the following relations hold:
\begin{equation}\label{eq0}
A_{u_{i-1}u_i}>\min\{A_{yu_{i-1}}, A_{yu_i}\} \ \text{ for all } i\in [k+1].
\end{equation}

Recall that since $x <_{\sigma} y <_{\sigma} z$ and $x <_{\pi} z <_{\pi} y$, then in  view of Lemma \ref{thm:equality A_xz=A_xy} we have $A_{xy}=A_{xz}$.
Furthermore, we know that $u_1,\dots,u_k<_{\sigma} z$ by  assumption. 
In order to conclude the proof, we use the following claim.

\begin{claim}\label{claim2}
$u_i<_\pi x$ and $y<_\sigma u_i$ for each $i \in [k]$.
\end{claim}
\begin{proof}
The proof is by induction on $i\ge 1$. For $i=1$  we have to show that 
\begin{equation}\label{eq1}
u_1<_\pi x \text{ and } y<_\sigma u_1.
\end{equation}

We first show that $u_1<_\pi x$. Suppose this is not the case and $x<_\pi u_1$.
Recall that in view of (\ref{eq0}) for $i=1$ we have $A_{xu_1} > \min \{ A_{yx}, A_{yu_1} \}$ and thus the path $(x,u_1)$ avoids $y$.
Hence, since $x <_{\pi} y$ and $y$ cannot appear between $x$ and $u_1$ in any Robinson ordering in view of Lemma \ref{thm:path avoiding and Robinson ordering}, it must also be that $u_1 <_{\pi} y$.
We then have two possibilities, depending whether $u_1$ comes before or after $z$ in $\pi$.

\begin{enumerate}[(i)]
\item 
Assume first that $u_1$ appears before $z$ in $\pi$. Then  we have $x<_\pi u_1<_\pi z<_\pi y$. We discuss where can $u_1$ appear  in $\sigma$.
If $u_1<_\sigma y$  then  we have  $u_1<_\sigma y<_\sigma z$,  $u_1 <_\pi z<_\pi y$, and  
$(u_1,\ldots,u_k,z)$ is a short path from $u_1$ to $z$ avoiding $y$ with $u_2,\dots,u_k <_{\sigma} z$, which contradicts~(\ref{eqH}). 
Hence, $y<_\sigma u_1$ in which case  we have
$x<_\sigma y<_\sigma u_1$, $x<_\pi u_1<_\pi y$, and  $(x,u_1)$ is a short path from $x$ to $u_1$ avoiding~$y$, which contradicts again (\ref{eqH}).
\item 
Assume now that $u_1$ appears after $z$ in $\pi$. Then we have $x<_\pi z<_\pi u_1<_\pi y$. By (\ref{eq0}) applied to $i=1$ and using the Robinson ordering $\pi$, we have that $A_{u_1x}>\min \{A_{yx},A_{yu_1}\}= A_{yx}$. Recall that $A_{xy} = A_{xz}$. Then $A_{u_1x} > A_{xz}$.
On the other hand, by the Robinson property of $\pi$, $A_{xu_1}\le A_{xz}$, yielding a contradiction.
\end{enumerate}
\smallskip
Therefore we have shown that $u_1<_\pi x$. Finally, we show that $y<_\sigma u_1$. Suppose not, i.e., $u_1<_\sigma y$. Then we would have
$u_1<_\sigma y <_\sigma z$ and, as just shown, $u_1<_\pi z <_\pi y$, while $(u_1,\ldots,u_k,z)$ is a short path from $u_1$ to $z$ avoiding $y$ with $u_2,\dots,u_k <_{\sigma} z$. 
This contradicts (\ref{eqH}) and thus shows $y<_\sigma u_1$, which concludes the proof for the base case $i=1$.

\medskip

Assume now that $i \geq 2$ and that $u_j<_\pi x$ and $y<_\sigma u_j$ for all 
$1\le j\le i-1$ by induction. We show that also $u_i<_\pi x$ and $y<_\sigma u_i$. First we show $u_i<_\pi x$. 
Suppose, for the sake of contradiction, that $x<_\pi u_i$.
Recall that in view of (\ref{eq0}) the path $(u_i,\dots,u_k,z)$ is a path from $u_i$ to $z$ avoiding $y$ with $u_{i+1},\dots,u_{k} <_{\sigma} z$.
Hence, since $z <_{\pi} y$ in view of Lemma \ref{thm:path avoiding and Robinson ordering} it must be also $u_i <_{\pi} y$, because $y$ cannot appear between $z$ and $u_i$ in any Robinson ordering.
We then have two possibilities to discuss, depending whether $u_i$ comes before or after $z$ in~$\pi$.

\begin{enumerate}[(i)]
\item 
Assume 
that $u_i$ appears before $z$ in $\pi$.
Then 
$u_1,\ldots, u_{i-1}<_\pi x <_\pi u_i <_\pi z<_\pi y$.
First we claim that $y<_\sigma u_i$. 
Indeed, if by contradiction $u_i<_\sigma y$, then we would have: $u_i<_\sigma y<_\sigma z$ and $u_i<_\pi z<_\pi y$, while $(u_i,\ldots,u_k,z)$ is a short path from $u_i$ to $z$ avoiding $y$ with $u_{i+1},\dots,u_{k} <_{\sigma} z$, contradicting (\ref{eqH}). 

Hence, $y<_\sigma u_i$ holds. Recall that $y <_{\sigma} u_j$ for $j\in[i-1]$ by induction. Hence, for $j=i-1$ we have $y <_{\sigma} u_{i-1}$. 
To recap, we are therefore in the case $u_{i-1} <_{\pi} x <_{\pi} u_{i} <_{\pi} z <_{\pi} y$ and we have shown that $x <_{\sigma} y <_{\sigma} u_{i}, u_{i-1} <_{\sigma} z$.

We thus have $y<_\sigma u_{i-1} <_\sigma z$ and  $y<_{\pi^{-1}} z <_{\pi^{-1}} u_{i-1}$. Then, in view of Lemma \ref{thm:equality A_xz=A_xy}, one must have 
$A_{yu_{i-1}}=A_{yz}$. From the Robinson ordering we obtain $A_{yz} \geq A_{xy} \geq A_{yu_{i-1}} = A_{yz}$ and therefore we get the equality $A_{yz} =A_{xy}$.
Analogously, because $x<_\sigma y<_\sigma u_i $ and $x<_\pi u_i <_\pi y$,  by Lemma \ref{thm:equality A_xz=A_xy} we obtain $A_{xy}=A_{xu_i}$.
Hence, we have 
\begin{equation}\label{eq:2}
A_{yu_{i-1}}=A_{yz}=A_{xy}=A_{xu_i}
\end{equation}
Finally, using relation (\ref{eq0}) we get: 
\begin{equation}\label{eq:3}
A_{u_{i-1}u_i} > \min \{A_{yu_{i-1}},A_{yu_i}\}= A_{yu_{i-1}}.
\end{equation}
In view of (\ref{eq:2}), the right hand side in (\ref{eq:3}) is $A_{yu_{i-1}}=A_{xu_i}$. 
On the other hand, as $u_{i-1}<_\pi x<_\pi u_i$ in the Robinson ordering $\pi$, then 
$A_{yu_{i-1}}=A_{xu_i} \geq A_{u_{i-1}u_i}$, which contradicts (\ref{eq:3}). Hence 
$u_i$ cannot appear before $z$ in $\pi$.

\item
Assume $u_i$ appears after $z$ in $\pi$.
Then $u_1,\ldots, u_{i-1}<_\pi x <_\pi z <_\pi u_i <_\pi y$.
Observe that the path $(x,u_1,\ldots,u_{i-1},z)$ is a short path from $x$ to $z$ with $u_1,\dots,u_{i-1} <_{\sigma} z$  and thus it cannot avoid $y$,  otherwise we would contradict (\ref{eqH}). Since the path $(x,u_1,\dots u_{i-1})$ avoids  $y$ (as it is a subpath of $P$), it follows that the path $(u_{i-1},z)$ does not  avoid $y$.
Hence   $A_{u_{i-1}z}\le \min \{A_{y u_{i-1}}, A_{yz}\}$ which, using the Robinson ordering $\pi$,  in turn implies 
$A_{u_{i-1}z} = A_{y u_{i-1}}$. 
Then, using relation (\ref{eq0}), we get: $A_{u_{i-1}u_i}>\min \{A_{yu_{i-1}},A_{yu_i}\}= A_{yu_{i-1}}$.
Now combining with   $A_{yu_{i-1}} = A_{u_{i-1}z}$, we get $A_{u_{i-1}u_i}> A_{u_{i-1}z}$ which is a contradiction, since 
  from the Robinson ordering $\pi$ one must have  $A_{u_{i-1}u_i} \leq A_{u_{i-1}z}$.
Therefore we have shown also that $u_i$ cannot appear after $z$ in $\pi$.
\end{enumerate}
\smallskip
In summary we have shown that $u_i<_\pi x$ as desired.
Finally we now show that $y<_\sigma u_i$. 
Indeed, if $u_i<_\sigma y$ then we would have:
$u_i<_\sigma y<_\sigma z$ and  $u_i <_\pi z<_\pi y$, while $(u_i,\ldots, u_k,z)$ is a short path from $u_i$ to $z$ avoiding $y$ with $u_{i+1},\dots,u_k <_{\sigma} z$, which contradicts (\ref{eqH}).
This concludes the proof of the claim.
$\qquad$
\end{proof}

We  can now conclude the proof of Lemma \ref{thm:PAL}.
By Claim \ref{claim2}  we have the following relations for any $i\in [k]$:
$x <_\sigma y<_\sigma u_i <_\sigma z$ and $y<_{\pi^{-1}} z <_{\pi^{-1}} x <_{\pi^{-1}} u_i$.  
By Lemma~\ref{thm:equality A_xz=A_xy}, this implies
$A_{yu_i}= A_{yz}$ for all $i\in [k]$ which, using the Robinson ordering~$\pi$,  in turn implies 
$A_{yu_i}=A_{yz}=A_{yx}$.
Now, use relation (\ref{eq0}) for $i=k+1$ to get the inequality 
$A_{u_kz}>\min \{A_{yu_k},A_{yz}\}= A_{yu_k}$.
Recall that in view of Lemma~\ref{thm:equality A_xz=A_xy}, we have that $A_{xy}=A_{xz}$.
Then as $A_{yu_i}=A_{yx}$ for all $i$,  the right hand side is equal to $A_{yu_k}=A_{xz}$ while, using the Robinson ordering $\pi$, the left hand side satisfies
$A_{u_kz}\le A_{xz}$, which yields a contradiction.
This concludes the proof of the lemma.
\hfill
\end{proof}

\subsection{End-vertices of SFS orderings} \label{sec:3-end points}

In this subsection we show some  fundamental properties of SFS orderings, using the results in Subsection \ref{sec:3-PAL}.
First we show that if $A$ is Robinsonian then the last vertex of a SFS ordering of~$A$ is an anchor of~$A$.  We will see later in Corollary  \ref{corSFSRobinson}
that conversely any anchor can be obtained as end-vertex of a  SFS ordering.

\begin{theorem}\label{thm:last vertex of SFS is an anchor}
Let A be a Robinsonian matrix and let $\sigma=\SFS(A)$. Then the last vertex of $\sigma$ is an anchor of $A$.
\end{theorem}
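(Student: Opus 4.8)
The plan is to reduce the statement to the characterization of anchors in Theorem~\ref{thm:anchor admissible}: since that result says a vertex is an anchor precisely when it is valid, it suffices to prove that the last vertex $z$ of $\sigma$ is valid in the sense of Definition~\ref{defvalid}. I would argue by contradiction, assuming $z$ is not valid, so that there exist distinct $u,v \in V \setminus \{z\}$ together with a path $P$ from $u$ to $z$ avoiding $v$ and a path $Q$ from $v$ to $z$ avoiding $u$.

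The first step is to locate $z$ relative to $u$ and $v$ in an arbitrary Robinson ordering. Applying Lemma~\ref{thm:path avoiding and Robinson ordering} to $P$ shows that $v$ cannot lie between $u$ and $z$, and applying it to $Q$ shows that $u$ cannot lie between $v$ and $z$, in any Robinson ordering of $A$. A short case check over the possible relative orders of $u,v,z$ leaves only the two patterns in which $z$ lies strictly between $u$ and $v$; hence in every Robinson ordering of $A$ the vertex $z$ separates $u$ and $v$. Note that here I genuinely use both avoiding paths: a single one would not pin down the betweenness.

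Next I would line up the $\sigma$-order with a suitably oriented Robinson order so as to invoke the Path Avoiding Lemma. Since $z$ is the last vertex of $\sigma$, both $u,v <_\sigma z$, and I may assume without loss of generality that $u <_\sigma v$, so that $u <_\sigma v <_\sigma z$. Fix any Robinson ordering; because $z$ lies between $u$ and $v$, and because the reverse $\pi^{-1}$ of a Robinson ordering is again a Robinson ordering, I can choose a Robinson ordering $\pi$ with $u <_\pi z <_\pi v$. Crucially, as $z$ is $\sigma$-last, every vertex occurring on the path $P$ precedes $z$ in $\sigma$, so $P$ is exactly of the form required by Lemma~\ref{thm:PAL}.

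The contradiction then comes directly from the Path Avoiding Lemma: taking the roles $x=u$, $y=v$ and last vertex $z$, we have $u <_\sigma v <_\sigma z$ and $u <_\pi z <_\pi v$, yet $P$ is a path from $u$ to $z$ avoiding $v$ with all intermediate vertices before $z$ in $\sigma$, which Lemma~\ref{thm:PAL} forbids. This contradiction shows $z$ is valid, and hence an anchor by Theorem~\ref{thm:anchor admissible}. The only delicate points are the bookkeeping that forces $z$ strictly between $u$ and $v$ (using both avoiding paths) and the reversal trick that orients $\pi$ correctly so that the $\sigma$-order $u<_\sigma v$ and the $\pi$-order $u<_\pi z<_\pi v$ are compatible; once these are in place only one of the two paths, namely $P$, is actually fed into (PAL).
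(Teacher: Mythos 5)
Your proposal is correct and follows essentially the same route as the paper's own proof: reduce to validity via Theorem~\ref{thm:anchor admissible}, use Lemma~\ref{thm:path avoiding and Robinson ordering} on both avoiding paths to force $z$ strictly between $u$ and $v$ in every Robinson ordering, orient $\pi$ so that $u <_\pi z <_\pi v$, and invoke Lemma~\ref{thm:PAL} on the path $P$ (all of whose vertices precede $z$ in $\sigma$ since $z$ is last) to reach a contradiction. No gaps.
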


\begin{proof}
Let $z$ be the last vertex of $\sigma$; we show that $z$ is an anchor of $A$. In view of Theorem \ref{thm:anchor admissible}  it suffices to show that $z$ is valid. 
Suppose for contradiction that, for some $x\ne y\in V\setminus \{z\}$, there exist a path $P$ from $x$ to $z$ avoiding $y$ and a path $Q$ from $y$ to $z$ avoiding $x$.
We may assume without loss of generality that $x<_\sigma y<_\sigma z$.
Moreover, let $\pi$ be a Robinson ordering of $A$ such that $x<_\pi z$. 
Then, in view of Lemma \ref{thm:path avoiding and Robinson ordering}, we must have $x<_\pi z <_\pi y$, since $y$ must come either before or after both $x$ and $z$ (because of the path $P$) and $x$ must come before or after both~$y$ and~$z$ (because of the path $Q$). 
As $z$ is the last vertex, then $P <_{\sigma} z$ and thus we get a contradiction with Lemma~\ref{thm:PAL} (PAL).
\hfill
\end{proof}

The above result is the analogue of 
 \cite[Thm 4.5]{Corneil09} for Lex-BFS applied to interval graphs.
We now introduce the concept of `good SFS'.

\begin{definition}[{\bf Good SFS ordering}]\label{defgoodSFS}
We say that a SFS ordering $\sigma$ of $A$ is {\rm good}  if $\sigma$  starts with a vertex which is the end-vertex of some SFS ordering. 
\end{definition}

Note that the analogous definition in \cite{Corneil09} for Lex-BFS is stronger, as it requires the first vertex of each slice to be an end-vertex of the slice itself.
However, in our discussion  we do not need such a strong definition and the above notion of good SFS will suffice to show the overall correctness of the multisweep algorithm. In the case when $A$ is Robinsonian,  in view of Theorem \ref{thm:last vertex of SFS is an anchor} (and Corollary \ref{corSFSRobinson} below),
   $\sigma$ is a good SFS ordering precisely when it starts with  an anchor of $A$. 
For good SFS orderings we have the following stronger result for their end-vertices.

\begin{theorem}\label{thm:end points of good SFS are opposite anchor}
Let $A \in \MS^n $ be a Robinsonian matrix and let $\sigma$ be a good SFS ordering whose first vertex is $a$ and whose last vertex is $b$. Then $a,b$ are opposite anchors of $A$.
\end{theorem}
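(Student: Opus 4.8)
The plan is to reduce everything to the characterization of opposite anchors in Theorem \ref{thm:characterization opposite anchors}: it suffices to show that $a$ and $b$ are both valid and that no path from $a$ to $b$ avoids any other vertex. The validity of the two endpoints is the easy part. Since $\sigma$ is a \emph{good} SFS ordering, its first vertex $a$ is by definition the end-vertex of some SFS ordering, so because $A$ is Robinsonian, Theorem \ref{thm:last vertex of SFS is an anchor} makes $a$ an anchor, and then Theorem \ref{thm:anchor admissible} makes it valid. The same two theorems applied to $\sigma$ itself show that its last vertex $b$ is valid as well.

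The substance is in ruling out a path from $a$ to $b$ avoiding some third vertex $y$. I would argue by contradiction: suppose such a path $P=(a=v_0,v_1,\ldots,v_k=b)$ exists. The key preparatory step is to choose the \emph{right} Robinson ordering. Since $a$ is an anchor there is a Robinson ordering ending at $a$, and its reversal is a Robinson ordering $\pi$ having $a$ as its first element. Then, because $P$ avoids $y$, Lemma \ref{thm:path avoiding and Robinson ordering} forbids $y$ from lying between $a$ and $b$ in $\pi$; as $a$ is first in $\pi$ we have $a<_\pi y$, so the only possibility left is $b<_\pi y$, giving $a<_\pi b<_\pi y$.

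Now I would feed this into the Path Avoiding Lemma (Lemma \ref{thm:PAL}). Since $a$ is first and $b$ is last in $\sigma$, every other vertex lies strictly between them, so in particular $a<_\sigma y<_\sigma b$, and all vertices of $P$ other than $b$ satisfy $v_i<_\sigma b$. Applying (PAL) with $x=a$, $z=b$, avoided vertex $y$, and the Robinson ordering $\pi$ for which $a<_\pi b<_\pi y$, we conclude that no path from $a$ to $b$ avoiding $y$ with intermediate vertices preceding $b$ in $\sigma$ can exist, contradicting the existence of $P$. Hence no such $y$ exists, and Theorem \ref{thm:characterization opposite anchors} concludes that $a$ and $b$ are opposite anchors.

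The main obstacle, and the only genuinely delicate point, is the selection of the Robinson ordering: one must exploit the anchorness of $a$ to produce a Robinson ordering with $a$ \emph{first}, so that the avoiding hypothesis forces exactly the configuration $a<_\pi b<_\pi y$ matching the hypothesis $x<_\pi z<_\pi y$ of (PAL). Once this alignment is arranged, everything else follows mechanically from the placement of $a$ and $b$ as the endpoints of $\sigma$.
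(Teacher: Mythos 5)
Your proposal is correct and follows essentially the same route as the paper: both arguments reduce to Theorem \ref{thm:characterization opposite anchors}, take a Robinson ordering $\pi$ starting at the anchor $a$ to force the configuration $a<_\pi b<_\pi y$ via Lemma \ref{thm:path avoiding and Robinson ordering}, and then invoke the Path Avoiding Lemma with $x=a$, $z=b$ (using that $b$ is last in $\sigma$) to rule out the offending path. The only difference is that you verify the characterization's conditions directly while the paper phrases it as a contradiction, which is purely cosmetic.
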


\begin{proof}
By assumption, $\sigma$ is a  good SFS ordering and thus its first vertex $a$ is an anchor of $A$. In  view of Theorem \ref{thm:last vertex of SFS is an anchor}, its last vertex $b$ is also an  anchor of $A$. 
Suppose, for the sake of contradiction, that $a$ and $b$  are not opposite anchors of~$A$. 
Then, in view of Theorem~\ref{thm:characterization opposite anchors}, there exists a vertex~$x \in V$ and a path~$P$ from~$a$ to~$b$ such that~$P$ avoids~$x$. 
Let $\pi$ be a Robinson ordering of $A$ starting with $a$ (which exists since $a$ is an anchor of $A$).
Using Lemma \ref{thm:path avoiding and Robinson ordering}  applied to the path $P$, we can conclude  that~$x$ cannot appear between $a$ and $b$ in any Robinson ordering, and thus we must have $a <_{\pi} b <_{\pi} x$. But then,  using Lemma \ref{thm:PAL} (PAL), there cannot exist a path from~$a$ to~$b$ avoiding~$x$ and appearing before~$b$ in~$\sigma$, which contradicts the existence of~$P$.
\hfill
\end{proof}

\section{The SFS$_+$ algorithm} \label{sec:4-SFS+ algorithm}

In this section we introduce the $\SFS_+$ algorithm. 
This is a variant of the standard SFS algorithm, 
and it is the analogue of the variant Lex-BFS+ of Lex-BFS introduced by Simon \cite{Simon91} in  the study of multisweep algorithms for interval graphs 
(although the multisweep algorithm itself in~\cite{Simon91} is actually flawed, see \cite{Corneil09} for more details).
The algorithm $\SFS_+$ will be the main ingredient in our multisweep algorithm for the recognition of Robinsonian matrices. 
It takes as input a matrix $A$ and a linear  order $\sigma$, and it returns another linear order $\sigma_+$.
After describing  $\SFS_+$, we will first present  its main properties, most importantly the fact that the $\SFS_+$ algorithm `flips anchors' when applied to a Robinsonian matrix $A$ and a good SFS order $\sigma$: if $\sigma$ starts at $a$ and ends at $b$, then $\sigma_+$ starts at $b$ and ends at $a$. 
We will also introduce the useful concept of `similarity layers' of a matrix, which will play a crucial role in the correctness analysis of our multisweep SFS-based  algorithm.

\subsection{Description of the SFS$_+$ algorithm} \label{sec:4-good SFS and end points}%


Consider again the SFS algorithm as described in Algorithm \ref{alg:SFS}  in Section \ref{sec:3-SFS algorithm}. The first main task is selecting the new pivot. In case of ties,   as done at Line \ref{alg: ties} of Algorithm \ref{alg:SFS}, the ties are broken arbitrarily (choosing any vertex  in the slice $S$). 
We now introduce a variant of $\SFS(A)$, which we denote by $\SFS_+ (A,\sigma)$. 
It takes as input a matrix $A\in \mathcal S^n$ and a linear order $\sigma$ of $V$, and it returns a new linear order~$\sigma_+$ of~$V$. 
In the $\SFS_+$ algorithm, the input linear order $\sigma$ is used  to break ties at Line \ref{alg: ties} in Algorithm \ref{alg:SFS}.
Specifically, among the vertices in the slice $S$ of the current iteration, we choose as new pivot the vertex appearing last in $\sigma$. 
Notice that a $\SFS_+$ ordering is still a SFS ordering and thus it satisfies all the properties discussed in Section \ref{sec:3-SFS algorithm}.

\ignore{
\medskip
\begin{algorithm}[H]\label{alg:SFS+}
\caption{\textit{\textit{}}$\SFS_+(A, \sigma)$}
\SetKwInput {KwIn}{input}
\SetKwInput {KwOut}{output}
\KwIn{a matrix $A \in \mathcal{S}^{n}$ and a linear order $\sigma$ of $[n]$}
\KwOut{a linear order $\sigma_+$ of $[n]$}
\vspace{2ex}
$\phi=(V)$\\
\For{$i=1,\dots,n$}{
	$S$ is the first class of $\phi$\\	
	choose $p$ as the vertex in $S$ appearing last in $\sigma$\\
	$\sigma_+(p)=i$ \\ 
	remove $p$ from $\phi$ \\ 
	$N(p)$ is the set of vertices $y\in \phi$ with $A_{py}>0$\\
	$\psi_p$ is the similarity partition of $N(p)$ with respect to $p$\\
	$\phi=$\textit{Refine} $(\phi,\psi_p)$
}
\Return: $\sigma_+$
\end{algorithm}
\medskip
}

If $A$ is a Robinsonian matrix and the input linear order~$\sigma$ is a SFS ordering, then the $\SFS_+$ ordering $\sigma_+$ has some important additional properties.
In fact, since in the beginning of the SFS algorithm all the vertices are contained in the  `universal' slice (i.e., the full ground set $V$),
 the order
$\sigma_+$ starts with the last vertex of $\sigma$, which in view of Lemma \ref{thm:last vertex of SFS is an anchor} is an anchor of $A$.
Therefore, in this case, $\sigma_+$  is a good SFS ordering by construction.
Furthermore, in view of Theorem \ref{thm:end points of good SFS are opposite anchor}, when $A$ is Robinsonian then the first and  last vertices of $\sigma_+$ are opposite anchors of $A$.
If the input linear order $\sigma$ is a good SFS ordering, then we have an even stronger property: the end-vertices of~$\sigma_+$ are the end-vertices of $\sigma$ but in reversed order. We call this the `anchors flipping property',   which is shown in the next theorem. This property  will be crucial in Section \ref{sec:5-the multisweep algorithm} when studying the properties of the multisweep algorithm.

\begin{theorem}[\textbf{Anchors flipping property}]\label{thm:flipping theorem}
Let $A \in \MS^n$ be a Robinsonian similarity, let $\sigma$ be a good SFS ordering of $A$ and $\sigma_+= \SFS_+(A,\sigma)$.
Suppose that $\sigma$ starts with $a$ and ends with $b$. Then $\sigma_{+}$ starts with $b$ and ends with $a$.
\end{theorem}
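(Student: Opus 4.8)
The plan is to first dispatch the two endpoints and then reduce everything to a single equality of vertices. Since the SFS algorithm is initialized with the universal slice $\phi=(V)$, the very first pivot of $\sigma_+$ is the vertex of $V$ appearing last in $\sigma$, which is exactly $b$; hence $\sigma_+$ starts with $b$. Moreover $b$ is the end-vertex of the SFS ordering $\sigma$ itself, so $\sigma_+$ starts at an end-vertex of an SFS ordering and is therefore a good SFS ordering in the sense of Definition \ref{defgoodSFS}. Writing $c$ for the last vertex of $\sigma_+$, Theorem \ref{thm:end points of good SFS are opposite anchor} gives that $b$ and $c$ are opposite anchors of $A$, while the same theorem applied to the good ordering $\sigma$ shows that $a$ and $b$ are opposite anchors. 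Thus the whole statement reduces to proving that $c=a$.

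To prove $c=a$ I would argue by contradiction, assuming $c\neq a$. Fix a Robinson ordering $\pi$ with $a$ first and $b$ last (it exists because $a,b$ are opposite anchors); then $a<_\pi c<_\pi b$, so $b<_{\pi^{-1}}c<_{\pi^{-1}}a$. In $\sigma_+$ the vertex $b$ is first and $c$ is last, so $a$ lies strictly in between, $b<_{\sigma_+}a<_{\sigma_+}c$. The key point is that, because $a$ is the first vertex of $\sigma$, it carries the lowest tie-breaking priority in $\SFS_+(A,\sigma)$; consequently $a$ and $c$ cannot be tied in $\sigma_+$ (a tie would force $c<_{\sigma_+}a$, contradicting that $c$ is last), so they are split by some pivot $w<_{\sigma_+}a$ with $A_{wa}>A_{wc}$. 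This inequality makes $(w,a)$ a one-edge path avoiding $c$, since $A_{wa}>A_{wc}\ge\min\{A_{cw},A_{ca}\}$. The aim is then to propagate this splitting backwards along $\sigma_+$: whenever a vertex $w'$ on the path is split from $c$ by an earlier pivot $w''$, one again has $A_{w''w'}>A_{w''c}$, so $(w'',w')$ is an edge avoiding $c$ that can be prepended. This builds a chain of vertices, strictly decreasing in $\sigma_+$, carrying a path avoiding $c$ that climbs from $a$ down towards the start of $\sigma_+$.

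If this chain reaches $b$, I obtain a full path from $a$ to $b$ avoiding $c$, contradicting Theorem \ref{thm:characterization opposite anchors} (opposite anchors admit no avoiding path), which finishes the argument. The main obstacle is precisely controlling where the chain terminates: unlike the pair $(a,c)$, an intermediate pair $(w',c)$ may be \emph{tied} rather than split (this occurs exactly when $w'$ appears after $c$ in $\sigma$), and then the naive edge $(b,w')$ fails to avoid $c$ because the relevant similarities collapse to equalities. Resolving these tied steps is where the Robinsonian structure must enter: I would locate the tied vertices relative to $\pi$ via the Path Avoiding Lemma (Lemma \ref{thm:PAL}) and its base case Lemma \ref{thm:equality A_xz=A_xy} (which already yields $A_{ba}=A_{bc}$), and exploit that the input $\sigma$ is itself a good SFS ordering, so that its three-point characterization (Theorem \ref{thm:SFS ordering characterization}) together with Corollary \ref{thm:anchor simplicial sigma + base case PAL} pin down the similarities among $a$, $c$ and the tied vertices and force the chain to extend all the way to $b$. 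Should this bookkeeping prove too delicate, an alternative is to mimic the inductive homogeneous-set reduction used in Theorems \ref{thm:anchor admissible} and \ref{thm:characterization opposite anchors}: peel off a homogeneous set at the $a$-end, apply induction to the smaller principal submatrix, and verify that the reduction is compatible with the $\SFS_+$ tie-breaking rule. I expect the tie-handling inside the chain argument to be the crux of the proof.
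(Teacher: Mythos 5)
Your reduction is sound up to the final step: identifying the first vertex of $\sigma_+$ with $b$, observing that $\sigma_+$ is a good SFS ordering, invoking Theorem \ref{thm:end points of good SFS are opposite anchor} to get that $a,b$ are opposite anchors, placing $a<_\pi c<_\pi b$ and $b<_{\sigma_+}a<_{\sigma_+}c$, and showing that $a$ and $c$ must be split in $\sigma_+$ by some pivot $w<_{\sigma_+}a$ with $A_{wa}>A_{wc}$ --- all of this coincides with the paper's proof. But your argument then stops exactly at its crux. You correctly identify that the backward propagation breaks down if an intermediate pair $(w',c)$ is tied rather than split, and you only gesture at Lemma \ref{thm:PAL} and Theorem \ref{thm:SFS ordering characterization} as tools that ``would'' resolve this; no actual argument is given, and the fallback suggestion (an inductive homogeneous-set reduction) is not developed either. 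This is a genuine gap: without ruling out the tied case the chain need not extend, and no contradiction materializes.

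The paper closes precisely this gap, and the mechanism is worth recording. The invariant carried through the iteration is not merely that the chain decreases in $\sigma_+$, but that each chain vertex $x'$ satisfies \emph{both} $x'<_{\sigma_+}c$ and $x'<_{\sigma}c$; together these force $(x',c)$ to be split, since a tie would make $\SFS_+$ place $c$ before $x'$ (ties are broken in favour of the vertex appearing later in $\sigma$). The condition $x'<_\sigma c$ is re-established at each step as follows: the accumulated edges form a path from $a$ to $x'$ avoiding $c$, so by Lemma \ref{thm:path avoiding and Robinson ordering} and the fact that $\pi$ starts at $a$ one gets $a<_\pi x'<_\pi c$; if one had $c<_\sigma x'$, then $a<_\sigma c<_\sigma x'$ together with $a<_\pi x'<_\pi c$ and that accumulated path (whose interior vertices all precede $c$, hence $x'$, in $\sigma$ by the invariant) would contradict Lemma \ref{thm:PAL}. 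Hence $x'<_\sigma c$, the pair is split again, and the splitting pivots strictly decrease in $\sigma_+$, which is impossible --- an infinite-descent contradiction rather than ``reaching $b$.'' (Indeed the chain can never literally reach $b$: every chain vertex precedes $c$ in $\sigma$, while $b$ is last in $\sigma$, so your proposed termination at Theorem \ref{thm:characterization opposite anchors} would not fire.) Your worry about ties is legitimate, but the resolution is that the tied case simply cannot occur, and proving that via the Path Avoiding Lemma is the heart of the theorem.
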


\begin{proof}
By definition of the $\SFS_+$ algorithm, the returned order  $\sigma_{+}$ starts with the last vertex $b$ of $\sigma$.
Hence, we only have to show that $a$ appears last in $\sigma_{+}$. Suppose, for the sake of contradiction, that $a$ is not last in $\sigma_{+}$ and let instead $y$ be the vertex appearing last in $\sigma_{+}$. Then we have $a<_{\sigma} y <_{\sigma} b$ and $b<_{\sigma_+} a <_{\sigma_+} y$. This implies that 
 $y$ and $a$ must be split in~$\sigma_{+}$. 
Indeed,  if $y$ and $a$ would be  tied in $\sigma_{+}$ then, as we use $\sigma$ to break ties and as $a<_{\sigma}y$, the vertex $y$ would be placed before $a$ in~$\sigma_{+}$, a contradiction.
Let thus $x<_{\sigma_+} a $ be the pivot splitting $a$ and $y$ in $\sigma_{+}$, so that  $A_{xa} > A_{xy}$. Then we have:
\begin{equation}\label{eq:path}
A_{xa} > \min \{A_{xy},A_{ya}\}.
\end{equation}
Hence the path $P=(x,a)$ avoids $y$. As $b$ is the first vertex of $\sigma_+$, we have:
\begin{equation*}
b <_{\sigma_+} x <_{\sigma_+} a <_{\sigma_+} y.
\end{equation*}
In view of Theorem \ref{thm:end points of good SFS are opposite anchor} applied to $\sigma$,  we know that $a$ and $b$ are opposite anchors of $A$. Therefore, there exists a Robinson ordering  $\pi$  starting with $a$ and ending with~$b$. 
In view of (\ref{eq:path}) and using Lemma \ref{thm:path avoiding and Robinson ordering},  $y$ cannot appear between $a$ and $x$ in any Robinson ordering and therefore we can conclude:
\begin{equation}
a <_{\pi} x <_{\pi} y <_{\pi} b.
\end{equation}
Consider now $\sigma$. We have that $a <_{\sigma} y <_{\sigma} b$. Where can $x$ appear in $\sigma$? Suppose $y <_{\sigma} x$. Then we would have  $a <_{\sigma} y <_{\sigma} x$ and $a <_{\pi} x <_{\pi} y$, and in view of Lemma~\ref{thm:PAL}~(PAL) there cannot exist a path from $a$ to $x$ avoiding $y$ and appearing before $x$ in $\sigma$, which is a contradiction as the path $P=(x,a)$ avoids $y$ in view of ($\ref{eq:path}$).
Hence, we must have:
\begin{equation}
a <_{\sigma} x <_{\sigma} y <_{\sigma} b.
\end{equation}
Therefore, starting from the pair $(a,y)$ satisfying $a<_{\sigma} y$ and $a<_{\sigma_+} y$, we have constructed a new pair $(x,y)$ satisfying $x<_{\sigma} y$ and $x<_{\sigma_+} y$, with  $x<_{\sigma_+} a$. Iterating this construction we  get an infinite sequence of such pairs, yielding a contradiction. (Here too we have used  an infinite chain  argument.)$\qquad$
\end{proof}

The flipping property of anchors is the analogue of  \cite[Thm 4.6]{Corneil09} for Lex-BFS.
An important consequence of this  property is that, if the linear order $\sigma$ given as input  is a Robinson ordering of $A$, then $\sigma_+= \SFS_+(a,\sigma)$ is equal to $\sigma^{-1}$,  i.e., the reversed order of~$\sigma$.

\begin{lemma}\label{thm:SFS+(a, pi) and reversed Robinson triple}
Let $A \in \MS^{n}$ be a Robinsonian matrix and let $\sigma, \tau$ be two SFS orderings of $A$. The following holds:
\begin{enumerate}[(\textit{i})]
\item[(i)] If $x<_{\tau} y <_{\tau} z$ and $z<_{\sigma} y <_{\sigma} x$ then the triple $(x,y,z)$ is Robinson.
\item[(ii)]  If $\tau$ is a Robinson ordering of $A$ and $\sigma= \SFS_+(A,\tau)$, then $\sigma=\tau^{-1}$.
\end{enumerate}
\end{lemma}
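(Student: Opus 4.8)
The plan is to prove (i) by a case analysis on the relative order of $x,y,z$ in a Robinson ordering of $A$, and to prove (ii) by an infinite-descent argument exploiting the tie-breaking rule of $\SFS_+$; the two parts are in fact independent, as (ii) does not invoke (i).

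For (i), I would fix any Robinson ordering $\pi$ of $A$. The elements $x,y,z$ occur in $\pi$ in one of six relative orders, which group into three classes up to reversal of $\pi$ (the reverse of a Robinson ordering is again Robinson). If $y$ lies between $x$ and $z$ in $\pi$, i.e.\ $x<_\pi y<_\pi z$ or $z<_\pi y<_\pi x$, then applying the Robinson inequalities~(\ref{eq:Robinson inequalities}) to this ordered triple gives directly $A_{xz}\le\min\{A_{xy},A_{yz}\}$, so $(x,y,z)$ is Robinson. Otherwise $y$ is an endpoint of the triple in $\pi$, and after possibly replacing $\pi$ by $\pi^{-1}$ I may assume $y$ comes last, leaving two cases. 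If $x<_\pi z<_\pi y$, I apply Lemma~\ref{thm:equality A_xz=A_xy} to $\tau$ (using $x<_\tau y<_\tau z$ together with the Robinson ordering $\pi$ satisfying $x<_\pi z<_\pi y$) to obtain $A_{xy}=A_{xz}$; combined with $A_{xy}\le A_{yz}$, which follows from the Robinson property of $\pi$, this yields $A_{xz}\le\min\{A_{xy},A_{yz}\}$. Symmetrically, if $z<_\pi x<_\pi y$, I apply Lemma~\ref{thm:equality A_xz=A_xy} to $\sigma$ (using $z<_\sigma y<_\sigma x$ and $z<_\pi x<_\pi y$) to get $A_{yz}=A_{xz}$, and conclude again with the Robinson property. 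This is precisely where the hypothesis that \emph{both} $\sigma$ and $\tau$ are SFS orderings is used, one ordering for each endpoint case.

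For (ii), note that $\sigma=\SFS_+(A,\tau)$ is in particular an SFS ordering, so every pair of vertices is either split in $\sigma$ or tied; and for a tied pair the relative order in $\sigma$ is the reverse of the input order $\tau$, since two vertices that are never split stay in a common slice until one becomes pivot, at which moment $\SFS_+$ selects the one appearing last in $\tau$. Call a pair $(p,q)$ \emph{aligned} if $p<_\sigma q$ and $p<_\tau q$; showing that no aligned pair exists is equivalent to $\sigma=\tau^{-1}$. Suppose an aligned pair exists and choose one, $(y,z)$, minimizing the $\sigma$-position of its first element $y$. Being aligned, $(y,z)$ cannot be tied, so it is split by some pivot $x<_\sigma y$ with $A_{xy}>A_{xz}$. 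Since $\tau$ is Robinson and $y<_\tau z$, the order $y<_\tau z<_\tau x$ would force $A_{xy}\le A_{xz}$ by~(\ref{eq:Robinson inequalities}), a contradiction; hence $x<_\tau z$. If moreover $x<_\tau y$ then $(x,y)$ is aligned, and otherwise $y<_\tau x<_\tau z$ and $(x,z)$ is aligned. In either case the first element of the new aligned pair is $x$, whose $\sigma$-position is strictly smaller than that of $y$, contradicting minimality. Hence no aligned pair exists and $\sigma=\tau^{-1}$.

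The routine parts are the bookkeeping over the relative orders in (i) and the verification of the tie-break reversal in (ii). The genuine content, and the point I would treat most carefully, is matching the roles of $x,y,z$ to the hypotheses of Lemma~\ref{thm:equality A_xz=A_xy} in the two endpoint cases of (i), since that lemma is not symmetric in its three arguments; and in (ii), confirming that an aligned pair must be split rather than tied, so that the descent on the $\sigma$-position of the first element can be initiated and always produces a strictly earlier first element.
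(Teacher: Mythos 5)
Your proposal is correct; both parts go through. Part (i) is essentially the paper's argument in a different wrapper: the paper assumes $(x,y,z)$ is not Robinson, notes that the path $(x,z)$ then avoids $y$, uses Lemma~\ref{thm:path avoiding and Robinson ordering} to force $y$ to be an endpoint of the triple in $\pi$, and invokes Lemma~\ref{thm:PAL} once for $\tau$ and once for $\sigma$; since the offending path has length two, this is exactly the base case of (PAL), i.e.\ Lemma~\ref{thm:equality A_xz=A_xy}, which you apply directly. Your direct case analysis is marginally more self-contained, and your matching of the roles of $x,y,z$ to the (asymmetric) hypotheses of that lemma in the two endpoint cases is the right place to be careful and is done correctly. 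Part (ii) is where you genuinely diverge: the paper takes the longest initial segment of $\sigma$ that agrees with $\tau^{-1}$ and looks at the first deviation, which forces the splitting pivot $x$ to lie in that common segment and hence \emph{after} $z$ in $\tau$, so the Robinson property of $\tau$ gives $A_{xy}\le A_{xz}$ and an immediate one-step contradiction. Your descent over aligned pairs is needed precisely because your splitting pivot may land before $z$ in $\tau$, so you must iterate; the well-foundedness of the $\sigma$-position makes this sound, but the paper's choice of breakpoint eliminates the iteration entirely. Both proofs rest on the same two facts — a tied pair is reversed by the $\SFS_+$ tie-break, and a split pair's pivot conflicts with the Robinson property of $\tau$ — so the difference is one of bookkeeping economy rather than substance.
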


\begin{proof}
\textit{(i)}
Suppose for contradiction that the triple $(x,y,z)$ is not Robinson. 
Then we have $A_{xz} > \min \{A_{xy},A_{yz}\}$, and thus the path $(x,z)$ avoids $y$.
Let $\pi$ be a Robinson ordering of $A$ with (say) $x<_\pi y$. 
In view of Lemma \ref{thm:path avoiding and Robinson ordering},  $y$ cannot appear between $x$ and $z$ in any Robinson ordering and  
therefore  we have  $x<_\pi z<_\pi y$ or $z <_\pi x <_\pi y$. In both cases we get a contradiction with 
Lemma \ref{thm:PAL} (PAL) since $x<_\tau y<_\tau z$ and $z<_\sigma y <_\sigma x$.

\textit{(ii)}
Say $\tau$ starts at $b$ and ends at $a$. Then 
 $\sigma$ starts at $a$. 
Assume that $\sigma \ne \tau^{-1}$. Let $(a=x_0,x_1,\ldots,x_k)$ be the longest initial segment of $\sigma$ whose reverse $(x_k,\ldots,x_1,a)$ is the final segment of $\tau$, with $k\ge 0$.
Let $y$ be the successor of $x_k$ in $\sigma$. Then $y$ is not the predecessor of $x_k$ in $\tau$ (by maximality of $k$). 
Let $z$ be the predecessor of $x_k$ in $\tau$.
Then $a<_\sigma x_1<_\sigma \ldots <_\sigma x_k<_\sigma y <_\sigma z$ and 
$y <_\tau z< _\tau x_k<_\tau \ldots <_\tau x_1 <_\tau a$.
Hence, $y,z$ cannot be tied in $\sigma$ (otherwise we would choose $z $ before $y$ in $\sigma$  as $y <_{\tau} z$). 
Therefore, there must exist a vertex $x<_{\sigma} y$ such that $A_{xy} > A_{xz}$. Hence, $x=x_i$ for some $0\le i\le k$ and thus 
 $y<_\tau z<_\tau x$. As $\tau$ is a Robinson ordering this implies $A_{xy}\le A_{xz}$, a contradiction.
$\qquad$
\end{proof}
 
In other words, in a multisweep algorithm applied to a Robinsonian matrix, every triple of vertices appearing in reversed order in two distinct sweeps is Robinson. Moreover,
once a given sweep is a Robinson ordering, the next sweep will  remain a Robinson ordering (precisely the reversed order).
As direct application of Lemma~\ref{thm:SFS+(a, pi) and reversed Robinson triple}, we have the following characterization for Robinsonian matrices.

\begin{corollary}\label{thm:corollary alternative Robinson check}
Let $A \in \MS^n$, let $\tau$ be a SFS ordering of $A$ and let $\sigma =$\SFS$_+(A,\tau)$. 
Assume that $\sigma = \tau^{-1}$. Then $A$ is Robinsonian if and only if $\sigma$ is Robinson. 
\end{corollary}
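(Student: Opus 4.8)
The plan is to prove the two directions separately, with the forward implication resting almost entirely on Lemma \ref{thm:SFS+(a, pi) and reversed Robinson triple}(i). The backward direction ($\Leftarrow$) is immediate: if $\sigma$ is a Robinson ordering of $A$, then by definition a permutation bringing $A$ into Robinson form exists, so $A$ is Robinsonian, with no use of the hypothesis $\sigma=\tau^{-1}$.

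For the forward direction ($\Rightarrow$), I would assume $A$ is Robinsonian and aim to show $\tau$ is already a Robinson ordering, after which reversal symmetry finishes the job. First I would record that $\sigma=\SFS_+(A,\tau)$ is itself a SFS ordering (as noted just before the description of $\SFS_+$), so that both $\tau$ and $\sigma$ qualify as SFS orderings and Lemma \ref{thm:SFS+(a, pi) and reversed Robinson triple}(i) is applicable. The key observation is that the hypothesis $\sigma=\tau^{-1}$ makes the premise of part (i) automatic: for any triple with $x<_\tau y<_\tau z$, reversing gives exactly $z<_\sigma y<_\sigma x$. Hence Lemma \ref{thm:SFS+(a, pi) and reversed Robinson triple}(i) applies to \emph{every} $\tau$-increasing triple $(x,y,z)$ and yields that each such triple is Robinson, i.e.\ $A_{xz}\le\min\{A_{xy},A_{yz}\}$.

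Since this inequality holds for all $x<_\tau y<_\tau z$, this is precisely condition~(\ref{eq:Robinson inequalities}) read along $\tau$, so $\tau$ is a Robinson ordering of $A$. The final step is to transfer this to $\sigma$: the Robinson property is invariant under reversing the ordering, because for $z<_\tau y<_\tau x$ condition~(\ref{eq:Robinson inequalities}) applied to the triple $(z,y,x)$ gives $A_{zx}\le\min\{A_{zy},A_{yx}\}$, which is the same inequality as for $(x,y,z)$ by symmetry of $A$. Therefore $\sigma=\tau^{-1}$ is a Robinson ordering as well, completing the proof.

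There is no real obstacle here: Lemma \ref{thm:SFS+(a, pi) and reversed Robinson triple}(i) already packages the hard combinatorial work (via the Path Avoiding Lemma). The only point requiring a moment's care is recognizing that $\sigma=\tau^{-1}$ turns the conditional statement of part (i) into an unconditional one about every $\tau$-increasing triple, and then noting the elementary reversal symmetry of the Robinson condition; neither step involves further case analysis or an infinite-chain argument.
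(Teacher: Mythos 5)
Your proof is correct and follows exactly the route the paper intends: the paper presents this corollary as a direct application of Lemma \ref{thm:SFS+(a, pi) and reversed Robinson triple}, and your observation that $\sigma=\tau^{-1}$ makes the hypothesis of part (i) hold for every $\tau$-increasing triple is precisely the point. The remaining steps (the trivial backward direction and the reversal symmetry of the Robinson condition) are handled correctly.
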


We will see in Section~\ref{sec:6-complexity} how to exploit the above result to check if a given SFS ordering is a Robinson ordering during a multisweep algorithm.
Furthermore, combining Lemma \ref{thm:SFS+(a, pi) and reversed Robinson triple} with Theorem \ref{thm:last vertex of SFS is an anchor},
we obtain the following characterization for anchors.

\begin{corollary}\label{corSFSRobinson}
Let $A\in \mathcal S^n$ be a Robinsonian matrix. A vertex is an anchor of~$A$ if and only if it is the end-vertex of a SFS ordering of $A$.
\end{corollary}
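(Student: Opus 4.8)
The plan is to prove the stated equivalence by treating its two directions separately, since each follows from results already available. One direction asserts that every end-vertex of a SFS ordering is an anchor, and the other that every anchor arises as such an end-vertex. I would dispatch the first direction in one line and concentrate the argument on the second.

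For the direction that an end-vertex of a SFS ordering is an anchor, I would simply invoke Theorem~\ref{thm:last vertex of SFS is an anchor}: by definition a SFS ordering of $A$ is an output of $\SFS(A)$ under some tie-breaking, and that theorem states precisely that the last vertex of any such ordering is an anchor of the Robinsonian matrix $A$. So this direction requires nothing new.

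For the converse, I would start from an anchor $a$ of $A$ and exhibit a SFS ordering whose last vertex is $a$. By Definition~\ref{defanchor} there is a Robinson ordering $\pi$ of $A$ ending with $a$. The key move is to feed the \emph{reverse} order $\pi^{-1}$ into the $\SFS_+$ algorithm rather than $\pi$ itself. Since the Robinson inequalities~(\ref{eq:Robinson inequalities}) are symmetric under reversing the index order, $\pi^{-1}$ is again a Robinson ordering of $A$ (now starting with $a$), so Lemma~\ref{thm:SFS+(a, pi) and reversed Robinson triple}(ii) applies with input $\pi^{-1}$ and yields $\SFS_+(A,\pi^{-1}) = (\pi^{-1})^{-1} = \pi$. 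Because every $\SFS_+$ ordering is in particular a SFS ordering (as noted in the text), this displays $\pi$ itself as a SFS ordering, and $\pi$ ends with $a$; hence $a$ is the end-vertex of a SFS ordering, completing the equivalence.

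I expect the proof to be short, being essentially an assembly of earlier statements. The only genuinely non-routine point, and thus the ``hard part,'' is the realization that one must supply $\pi^{-1}$ (not $\pi$) to $\SFS_+$: applying $\SFS_+$ directly to $\pi$ would, via the same lemma, return $\pi^{-1}$, a SFS ordering \emph{starting} at $a$, which is not what the corollary demands. The two auxiliary facts to record along the way---that the reverse of a Robinson ordering is again Robinson, and that $\SFS_+$ orderings inherit all properties of SFS orderings---are both already available, so no further work is needed there.
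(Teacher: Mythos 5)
Your proof is correct and follows exactly the route the paper indicates (the paper gives no written proof, only the remark that the corollary follows by ``combining Lemma~\ref{thm:SFS+(a, pi) and reversed Robinson triple} with Theorem~\ref{thm:last vertex of SFS is an anchor}''), including the key point of feeding $\pi^{-1}$ rather than $\pi$ into $\SFS_+$ so that the output ends, rather than starts, at the anchor. As a minor aside, the converse direction can be shortened further by observing that a Robinson ordering already satisfies the $3$-point condition of Theorem~\ref{thm:SFS ordering characterization} vacuously and hence is itself a SFS ordering ending at $a$, but your argument is equally valid.
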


\subsection{Similarity layers} \label{sec:4-similarity layers}

In this subsection we introduce the notion of `similarity layer structure' for a matrix $A\in \mathcal S^n$ and an element $a\in V$ (then called the {\em root}), 
which we will use later to analyze properties of the multisweep algorithm. 

Specifically, we define the following  collection   $\mathcal L=(L_0, L_1, \ldots, L_r)$ of subsets of~$V$, whose members are called the {\em (similarity) layers of $A$ rooted at $a$}, where $L_0=\{a\}$ and the next layers  $L_i$ are the subsets of $V$ defined recursively as follows:
\begin{equation}\label{eq:layers}
L_{i}=\{y \notin L_0\cup  \dots\cup  L_{i-1}: A_{xy} \ge  A_{xz}\  \forall x \in L_0\cup \dots \cup L_{i-1}, \ \forall  z \notin L_0\cup  \dots\cup L_{i-1}\}.
\end{equation}

Note that this notion of  similarity layers  can be seen as a refinement of the notion of BFS layers for graphs, which are obtained by layering the nodes according to their distance to the root.
Hence, the two concepts are similar but different.
We first show that this  layer structure defines  a partition of $V$  when  $A$ is a Robinsonian matrix and the root $a$ is an anchor of $A$.

\begin{lemma}\label{thm:layers structure}
Assume that  $A \in \MS^n$ is a Robinsonian matrix and that $a \in V$ is an anchor of $A$.
Consider  the similarity layer structure $\mathcal L=(L_{0}=\{a\}, L_1, \dots, L_r)$ of~$A$ rooted at~$a$, as defined  in (\ref{eq:layers}), where $r$ is the smallest index such that $L_{r+1}= \emptyset$.
The following holds:
\begin{enumerate}[(1)]
\item[(i)] If $y \in L_{i}, z \not\in L_0 \cup \ldots \cup L_i$ with $i\ge 1$, then there exists a path $P$ from $a$ to $y$ avoiding $z$. Moreover, any path  of the form 
$P=(a,a_1,\ldots, a_i=y)$, where $a_l \in L_l$ for  $1\le l\le i$, avoids $z$. 
\item[(ii)]   $V=L_0\cup L_1 \cup \ldots \cup L_r$.
\end{enumerate} 
\end{lemma}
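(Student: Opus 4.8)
The plan is to anchor everything on a fixed Robinson ordering. Since $a$ is an anchor of $A$, there is a Robinson ordering ending at $a$; reversing it (the reverse of a Robinson ordering is Robinson, as (\ref{eq:Robinson inequalities}) is symmetric under reversal) we obtain a Robinson ordering which we write as $\pi=(a=w_1,w_2,\ldots,w_n)$. The backbone of the whole proof is the claim that for every $i$ the union $L_0\cup\cdots\cup L_i$ is an initial segment of $\pi$, i.e.\ equals $\{w_1,\ldots,w_{m_i}\}$ for some index $m_i$; in particular the similarity layers are consecutive intervals of $\pi$. I would prove this by induction on $i$, the base case $L_0=\{w_1\}$ being immediate.

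For the inductive step, suppose $L_0\cup\cdots\cup L_i=\{w_1,\ldots,w_m\}$ and let $R=\{w_{m+1},\ldots,w_n\}$ be the leftover suffix. First I would show $w_{m+1}\in L_{i+1}$: for any $w_j$ with $j\le m$ and any $w_k$ with $k\ge m+1$, the Robinson inequality applied to the indices $j<m+1\le k$ gives $A_{w_jw_{m+1}}\ge A_{w_jw_k}$, which is exactly the defining condition (\ref{eq:layers}) for $w_{m+1}$. Next I would show that $L_{i+1}$ is an initial block $\{w_{m+1},\ldots,w_{m'}\}$ of $R$: if $w_t\in L_{i+1}$ and $m<s<t$, then for every $w_j$ with $j\le m$ and every $w_k\in R$ the two cases $k\ge s$ and $k<s$ both yield $A_{w_jw_s}\ge A_{w_jw_k}$ (in the second case by sandwiching $A_{w_jw_t}\le A_{w_jw_s}\le A_{w_jw_k}$ against the reverse Robinson inequalities and using $w_t\in L_{i+1}$ to force equalities), so $w_s\in L_{i+1}$. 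This completes the induction and yields (ii) at once: whenever $R\neq\emptyset$ the next layer contains $w_{m+1}$ and is therefore nonempty, so the layers keep consuming vertices until $L_0\cup\cdots\cup L_r=V$.

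For part (i) I would first isolate the key strictness fact: if $y\in L_i$ and $z\notin L_0\cup\cdots\cup L_i$, then there is $x^{*}\in L_0\cup\cdots\cup L_{i-1}$ with $A_{x^{*}y}>A_{x^{*}z}$. Indeed $z$ is eligible for $L_i$ (it lies outside $L_0\cup\cdots\cup L_{i-1}$) yet is excluded, so by (\ref{eq:layers}) some pair $x^{*},z'$ witnesses $A_{x^{*}z}<A_{x^{*}z'}$, while $y\in L_i$ gives $A_{x^{*}y}\ge A_{x^{*}z'}$; chaining these yields $A_{x^{*}y}>A_{x^{*}z}$. The single edge $(x^{*},y)$ then avoids $z$, since $A_{x^{*}y}>A_{x^{*}z}\ge\min\{A_{x^{*}z},A_{yz}\}$. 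Existence of a path from $a$ to $y$ avoiding $z$ now follows by induction on the layer index of the target: $x^{*}$ lies in some $L_j$ with $j<i$ and $z\notin L_0\cup\cdots\cup L_j$, so inductively there is a path from $a$ to $x^{*}$ avoiding $z$, and concatenating it with the edge $(x^{*},y)$ gives the desired path (the base case $j=0$ being the trivial path $(a)$).

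It remains to treat the ``moreover'' statement about layered paths $P=(a,a_1,\ldots,a_i=y)$ with $a_l\in L_l$. Here the prefix structure from the first two paragraphs does the geometric work: it forces $a_{l-1}<_\pi a_l<_\pi z$ for each $l$, so the Robinson inequality gives $A_{a_{l-1}z}\le A_{a_lz}$, whence $\min\{A_{a_{l-1}z},A_{a_lz}\}=A_{a_{l-1}z}$, while the defining condition (\ref{eq:layers}) for $a_l$ gives $A_{a_{l-1}a_l}\ge A_{a_{l-1}z}$. The step I expect to be the main obstacle is upgrading this last inequality to the \emph{strict} inequality $A_{a_{l-1}a_l}>A_{a_{l-1}z}$ needed to make the triple $(a_{l-1},z,a_l)$ non-Robinson in the sense of Definition~\ref{defpath}. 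This is the delicate heart of the argument, since the layer definition only delivers a non-strict bound along a single edge; ruling out the equality case requires exploiting that $z$ sits in a layer \emph{strictly later} than $a_l$ (not merely outside $L_0\cup\cdots\cup L_{l-1}$) together with the simultaneous-maximizer structure of the layers, and I would expect this to be where the bulk of the careful case analysis is spent.
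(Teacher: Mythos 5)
Your proof of part (ii) and of the existence half of part (i) is correct, and it takes a genuinely different route from the paper's. The paper proves (i) first, by asserting the chain of strict inequalities $A_{aa_1}>A_{az}$, $A_{a_1a_2}>A_{a_1z},\dots$ directly ``from the definition of the layers'', and then deduces (ii) by an infinite-chain argument: if some $z_1$ were uncovered, one iteratively produces $x_i\in U$, $z_{i+1}\notin U$ with $A_{x_iz_i}<A_{x_iz_{i+1}}$, and part (i) together with Lemma~\ref{thm:path avoiding and Robinson ordering} forces $\cdots<_\pi z_3<_\pi z_2<_\pi z_1$ in a Robinson ordering $\pi$ rooted at $a$, contradicting finiteness. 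Your prefix-structure induction replaces all of this with a stronger and cleaner structural fact (each $L_0\cup\cdots\cup L_i$ is an initial segment of $\pi$, which also gives Lemma~\ref{thm:layers and SFS/Robinson ordering} for free), and your existence argument for (i) --- pick a witness $x^{*}$ in an earlier layer that \emph{strictly} separates $y$ from $z$, then recurse on the layer index of $x^{*}$ --- is sound and does not rely on an arbitrary representative of each layer doing the job.

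The step you flag as the obstacle is a genuine gap in your write-up, but your suspicion about it is exactly right, and the news is worse than you suggest: the strict inequality $A_{a_{l-1}a_l}>A_{a_{l-1}z}$ for an \emph{arbitrary} $a_{l-1}\in L_{l-1}$ does not hold in general, so no case analysis will rescue the ``moreover'' clause as literally stated. The exclusion of $z$ from $L_l$ only produces a strict separator $x$ \emph{somewhere} in $L_0\cup\cdots\cup L_{l-1}$, not at the particular element $a_{l-1}$ you happen to have chosen. Concretely, take $V=\{1,\dots,5\}$ with
\begin{equation*}
A=\begin{pmatrix} * & 2 & 2 & 1 & 1\\ 2 & * & 5 & 3 & 2\\ 2 & 5 & * & 3 & 3\\ 1 & 3 & 3 & * & 4\\ 1 & 2 & 3 & 4 & *\end{pmatrix},
\end{equation*}
which is Robinson in the order $(1,2,3,4,5)$, so $1$ is an anchor. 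The layers rooted at $1$ are $L_0=\{1\}$, $L_1=\{2,3\}$, $L_2=\{4\}$, $L_3=\{5\}$ (note $4\in L_2$ because $A_{34}=3\ge A_{35}=3$, while $5\notin L_2$ because $A_{25}=2<A_{24}=3$). The layered path $(1,3,4)$ does \emph{not} avoid $z=5$: the triple $(3,5,4)$ is Robinson since $A_{34}=3=\min\{A_{53},A_{54}\}$. The other layered path $(1,2,4)$ does avoid $5$, which is exactly your existence argument with $x^{*}=2$. So only the existential statement in (i) is true, and that is also the only part the paper uses downstream (in the proof of (ii) and in Lemma~\ref{thm:layers and SFS/Robinson ordering}). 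Your proposal proves everything that is both true and needed; you should simply record that the universally quantified ``moreover'' clause has to be weakened to the existential one rather than hoping to close it.
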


\begin{proof}
(i)  Using the definition of the layers in (\ref{eq:layers}) we obtain that $A_{aa_1}>A_{az}$ and $A_{a_1a_2}>A_{a_1z}$, $\ldots, A_{a_{i-1}y}>A_{a_{i-1}z},$ which shows that the path $(a,a_1,\ldots,a_{i-1},a_i=~y)$ avoids $z$.

(ii)
Suppose $L_0,L_1,\ldots, L_r \ne \emptyset$,  $L_{r+1} =\emptyset$, but $V\ne U:=L_0 \cup \ldots \cup L_r$.
Consider an element $z_1\in V\setminus U$. As $z_1\not\in L_r$ (since this set is empty) there exist elements $x_1\in U$ and $z_2\not\in U$ such that $A_{x_1z_1}<A_{x_1z_2}$. Analogously, as $z_2\not\in L_r$ there exist elements $x_2\in U,z_3\not\in U$ such that 
$A_{x_2z_2}<A_{x_2z_3}$.
Iterating we find elements $x_i\in U$, $z_i\not\in U$ for all $i\ge 1$ such that $A_{x_iz_i}<A_{x_iz_{i+1}}$ for all $i$. 
At some step one must find one of the previously selected elements $z_i$, i.e., $z_j=z_i$ for some $i<j$.

 As $a$ is an anchor of $A$, there exists a Robinson ordering $\pi$ of $A$ starting at $a$. 
We first claim that $x_i<_\pi z_j$ for all $i,j$. This is clear if $x_i=a$. Otherwise, as $x_i\in U$ and  $z_j\not\in U$, it follows from (i) that there is a path from $a$ to $x_i$ avoiding $z_j$, which in view of Lemma \ref{thm:path avoiding and Robinson ordering} implies that $a\le _\pi x_i < _\pi z_j$. 
Next we claim that $z_{i+1}<_\pi z_i$. 
Since $x_i<_\pi z_i$ and $A_{x_iz_i}<A_{x_iz_{i+1}}$, then $(x_i,z_{i+i})$ avoids $z_i$ and in view of Lemma \ref{thm:path avoiding and Robinson ordering} it must be indeed $z_{i+1}<_\pi z_i$.
Summarizing we have shown that $z_{i+1}<_\pi z_i<_\pi \ldots <_\pi z_1$ for all $i$, which contradicts the fact that  two of the $z_i$'s should coincide.
$\qquad$
\end{proof}

Intuitively, each layer $L_{i}$ will correspond to some slices of a SFS algorithm starting at $a$.
As we see below, there is some compatibility between the layer structure $\mathcal L$ rooted at $a$ with any Robinson ordering  $\pi$ and  any good SFS  ordering  $\sigma$ starting at $a$.

\begin{lemma}\label{thm:layers and SFS/Robinson ordering}
Assume  $A \in \MS^n$ is a Robinsonian matrix and $a$ is an anchor of $A$. 
Let $\sigma$ be a good SFS ordering of $A$ starting at $a$ and let $\pi$ be a  Robinson ordering of $A$ starting at $a$.
Then the similarity layer structure $\mathcal L=(L_0=\{a\}, \dots, L_r)$ of $A$ rooted at  $a$ is compatible with both $\pi$ and $\sigma$. That is,
\begin{align*}
L_0<_\pi L_1 <_\pi \ldots <_\pi L_r, \\ 
L_0<_{\sigma} L_1 <_{\sigma} \ldots <_{\sigma} L_r.
\end{align*}
\vspace{-2ex}
\end{lemma}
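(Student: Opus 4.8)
The statement asserts that the similarity layers $L_0<L_1<\cdots<L_r$ appear as consecutive blocks in that order, in both a Robinson ordering $\pi$ starting at $a$ and a good SFS ordering $\sigma$ starting at $a$. The plan is to prove compatibility with each ordering separately, in both cases by exhibiting, for any $y\in L_i$ and $z\notin L_0\cup\cdots\cup L_i$, an appropriate path-avoiding argument that forces $y$ before $z$. The key tool is Lemma~\ref{thm:layers structure}(i), which gives, for such $y$ and $z$, a path from $a$ to $y$ avoiding $z$; combined with Lemma~\ref{thm:path avoiding and Robinson ordering} this immediately pins down the relative order of $y$ and $z$.

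First I would handle $\pi$. Fix $i$, take $y\in L_i$ and any $z\notin L_0\cup\cdots\cup L_i$. By Lemma~\ref{thm:layers structure}(i) there is a path from $a$ to $y$ avoiding $z$, so by Lemma~\ref{thm:path avoiding and Robinson ordering} the element $z$ cannot lie between $a$ and $y$ in $\pi$. Since $\pi$ starts at $a$, we have $a\le_\pi y$, and the only possibility consistent with ``$z$ not between $a$ and $y$'' is $y<_\pi z$. As $y$ and $z$ were arbitrary in their respective sets, this gives $L_i<_\pi z$ for every $z$ in a later layer, which is exactly $L_0<_\pi L_1<_\pi\cdots<_\pi L_r$.

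The argument for $\sigma$ is structurally identical but uses (PAL) in place of the elementary Lemma~\ref{thm:path avoiding and Robinson ordering}, since $\sigma$ is not assumed Robinson. Again fix $y\in L_i$ and $z\notin L_0\cup\cdots\cup L_i$, and suppose for contradiction that $z<_\sigma y$. Since $\sigma$ starts at the anchor $a$ and is a good SFS ordering, Theorem~\ref{thm:end points of good SFS are opposite anchor} (or simply that $a$ is an anchor) lets me invoke a Robinson ordering $\pi$ with $a$ first; by the first part we know $y<_\pi z$, hence $a<_\pi y<_\pi z$ while $a\le_\sigma z<_\sigma y$. This places the triple $(a,z,y)$ in the configuration $a<_\sigma z<_\sigma y$ with $a<_\pi y<_\pi z$, to which Lemma~\ref{thm:PAL} (PAL) applies and forbids any path from $a$ to $y$ avoiding $z$ that lies before $y$ in $\sigma$. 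But Lemma~\ref{thm:layers structure}(i) supplies exactly such a path, and since all its intermediate vertices lie in earlier layers $L_1,\dots,L_{i-1}$, they precede $y$ in $\sigma$ by the already-established ordering of the lower layers; this contradiction yields $y<_\sigma z$.

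The main obstacle is the bookkeeping in the $\sigma$ case: to apply (PAL) I must verify that the avoiding path from Lemma~\ref{thm:layers structure}(i) has all its internal vertices occurring before $y$ in $\sigma$, which I would secure by an induction on $i$ — proving $L_0<_\sigma\cdots<_\sigma L_{i-1}$ first, so that the intermediate vertices $a_1\in L_1,\dots,a_{i-1}\in L_{i-1}$ of the chosen path $(a,a_1,\dots,a_{i-1},y)$ are guaranteed to satisfy $a_l<_\sigma y$. This inductive setup is the only delicate point; everything else reduces to the two path-avoiding lemmas.
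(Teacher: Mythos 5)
Your proposal is correct and follows essentially the same route as the paper: Lemma~\ref{thm:layers structure}(i) plus Lemma~\ref{thm:path avoiding and Robinson ordering} for $\pi$, and (PAL) for $\sigma$. Your explicit induction on $i$ to guarantee that the internal vertices of the avoiding path precede $y$ in $\sigma$ is a point the paper's proof leaves implicit, so that extra care is welcome rather than a deviation.
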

\begin{proof}
Let $x\in L_i$ and $y\in L_j$ with $i<j$; we show that $x<_\pi y$ and $x<_{\sigma} y$.
This is clear if $i=0$, i.e., if $x=a$. 
Suppose now $i\ge 1$. 
Then, by Lemma \ref{thm:layers structure}, there exists a path from $a$ to $x$ avoiding $y$.
This implies that $a<_\pi x<_\pi y$, as $y$ cannot appear between $a$ and $x$ in any Robinson ordering in view of Lemma \ref{thm:path avoiding and Robinson ordering} and since $\pi$ starts with $a$.
Furthermore, if $a<_{\sigma} y<_{\sigma} x$ then we would get a contradiction with Lemma~\ref{thm:PAL}~(PAL). Hence  $a<_{\sigma} x <_{\sigma} y$ holds, as desired.
$\qquad$
\end{proof}

Furthermore, the following inequalities hold among the entries of $A$ indexed by  elements in different layers.

\begin{lemma}\label{thm:inequalities among layers}
Assume  $A \in \MS^n$ is a Robinsonian matrix and~$a$ is an anchor of~$A$.
Let  $\mathcal L=(L_0=\{a\}, L_1,\dots, L_r)$ be  the similarity layer structure of $A$ rooted at  $a$.
For each $u\in L_i, x,y\in L_j$ and $z\not\in L_0\cup L_1\cup  \ldots \cup L_j$ with $0 \leq i<j$ the following inequalities hold:
\begin{equation*}
A_{xy} \geq A_{ux}=A_{uy} \geq A_{uz}.
\end{equation*}
Furthermore, if $x \in L_j, z \notin L_0 \cup L_1\cup  \dots \cup L_j$, then there exists $u \in L_0 \cup L_1\cup  \dots \cup L_{j-1}$ such that $A_{ux}>A_{uz}$.
\end{lemma}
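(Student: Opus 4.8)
The plan is to split the chain $A_{xy} \geq A_{ux} = A_{uy} \geq A_{uz}$ into three parts and prove each separately, deriving the equality and the outer inequality $A_{ux}\geq A_{uz}$ purely from the defining property (\ref{eq:layers}) of the layers, while the middle inequality $A_{xy}\geq A_{ux}$ will require the Robinson structure through a compatible Robinson ordering.

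First I would establish $A_{ux}=A_{uy}$ and $A_{ux}\geq A_{uz}$ directly from (\ref{eq:layers}). Since $u\in L_i$ with $i<j$, we have $u\in L_0\cup\dots\cup L_{j-1}$, whereas $x,y\in L_j$ and $z\notin L_0\cup\dots\cup L_j$ all lie outside $L_0\cup\dots\cup L_{j-1}$. Applying the definition of $L_j$ to the element $y$ with the choice $x'=u$ and $z'=x$ gives $A_{uy}\geq A_{ux}$; applying it to $x$ with $z'=y$ gives the reverse inequality, so $A_{ux}=A_{uy}$. Likewise, applying the definition of $L_j$ to $x$ with $z'=z$ (legitimate since $z\notin L_0\cup\dots\cup L_{j-1}$) yields $A_{ux}\geq A_{uz}$. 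This part is pure bookkeeping with (\ref{eq:layers}).

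Next I would prove the middle inequality $A_{xy}\geq A_{ux}$. Since $a$ is an anchor, there is a Robinson ordering $\pi$ starting at $a$, and by Lemma \ref{thm:layers and SFS/Robinson ordering} the layers are compatible with $\pi$, so $L_i<_\pi L_j$ forces $u<_\pi x$ and $u<_\pi y$. I then split on the relative order of $x,y$ in $\pi$. If $x<_\pi y$, then $u<_\pi x<_\pi y$ and the Robinson property (\ref{eq:Robinson inequalities}) gives $A_{uy}\leq A_{xy}$, which combined with the already-proven equality $A_{ux}=A_{uy}$ yields $A_{ux}\leq A_{xy}$. If instead $y<_\pi x$, then $u<_\pi y<_\pi x$ and (\ref{eq:Robinson inequalities}) gives directly $A_{ux}\leq A_{xy}$. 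Either way $A_{xy}\geq A_{ux}$, closing the chain. For the "furthermore" statement I would argue by contradiction: suppose $A_{ux}\leq A_{uz}$ for every $u\in L_0\cup\dots\cup L_{j-1}$; combined with $A_{ux}\geq A_{uz}$ (just proved) this forces $A_{ux}=A_{uz}$ for all such $u$. But $z\notin L_j$ means, by (\ref{eq:layers}), that there are $x'\in L_0\cup\dots\cup L_{j-1}$ and $z'\notin L_0\cup\dots\cup L_{j-1}$ with $A_{x'z}<A_{x'z'}$, while $x\in L_j$ gives $A_{x'x}\geq A_{x'z'}$. Taking $u=x'$ then produces $A_{x'z}=A_{x'x}\geq A_{x'z'}>A_{x'z}$, a contradiction, which establishes the existence of the required $u$.

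I expect the only step that is not routine bookkeeping to be the middle inequality, whose crux is that two elements of the same layer are ordered consecutively after $u$ in a compatible Robinson ordering; invoking Lemma \ref{thm:layers and SFS/Robinson ordering} to pin down $u<_\pi x,y$ and then handling the two relative orders of $x,y$ is exactly what lets the Robinson inequality (\ref{eq:Robinson inequalities}) deliver $A_{xy}\geq A_{ux}$. Everything else, including the final "furthermore" claim, follows from the layer definition (\ref{eq:layers}) alone.
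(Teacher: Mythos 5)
Your proof is correct and takes essentially the same route as the paper: the equality $A_{ux}=A_{uy}$ and the bound $A_{ux}\geq A_{uz}$ are read off from the layer definition~(\ref{eq:layers}), and the middle inequality $A_{xy}\geq A_{ux}$ comes from Lemma~\ref{thm:layers and SFS/Robinson ordering} forcing $u<_\pi x,y$ in a Robinson ordering rooted at $a$ (the paper phrases this as a contradiction, you argue the contrapositive directly, which is the same argument). Your explicit treatment of the ``furthermore'' claim is also correct; the paper leaves that step implicit.
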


\begin{proof}
The inequalities $A_{ux}=A_{uy}>A_{uz}$ follow from the definition of the layers in (\ref{eq:layers}). 
Suppose now that $A_{xy} < A_{ux}=A_{uy}$. Then $u$ must appear between $x$ and $y$ in any Robinson ordering $\pi$, since $x <_{\pi} y <_{\pi} u$ implies $A_{xy} \geq A_{ux}$ and $y <_{\pi} x <_{\pi} u$ implies $A_{xy} \geq A_{uy}$.
But in view of Lemma \ref{thm:layers and SFS/Robinson ordering},  if $\pi$ is a  Robinson ordering starting at $a$ then $u <_{\pi} x$ and $u <_{\pi} y$, so we get a contradiction. 
$\qquad$
\end{proof}

As an application of  Lemma \ref{thm:inequalities among layers}, it is easy to verify  that if $A$ is the  adjacency matrix of a connected graph $G$, then each layer is a clique of $G$. 

We now show a `flipping property' of 
 the similarity layers with respect  to  a good SFS ordering $\sigma$ starting at the root and the next sweep $\sigma_+=\SFS_+(A,\sigma)$.
Namely we show that  the orders of the layers are reversed beween $\sigma$ and $\sigma_+$, i.e., $L_i<_\sigma L_j$ and $L_j<_{\sigma_+} L_i$ for all $i<j$.

\begin{theorem}[\textbf{Layers flipping property}]\label{thm:flipping layers theorem}
Let  $A \in \MS^n$ be a Robinsonian~matrix and $a\in V$ be an anchor of $A$.
Let $\mathcal L=(L_0=\{a\}, \dots, L_r)$ be the similarity layer structure of $A$ rooted at $a$, 
let $\sigma$ be a good SFS ordering of $A$ starting at $a$  and let $\sigma_+=\SFS_+(A,\sigma)$.
If  $x\in L_i$, $y\in L_j$ with $0\le i<j\le r$ then $y<_{\sigma_+} x$.
\end{theorem}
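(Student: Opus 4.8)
The plan is to argue by contradiction through an infinite descent on positions in $\sigma_+$, using only the intra-/inter-layer inequalities of Lemma \ref{thm:inequalities among layers}, the compatibility of the layer structure with $\pi$ and $\sigma$ from Lemma \ref{thm:layers and SFS/Robinson ordering}, and the tie-breaking rule of $\SFS_+$. First I would fix a Robinson ordering $\pi$ of $A$ starting at $a$ (which exists since $a$ is an anchor), so that by Lemma \ref{thm:layers and SFS/Robinson ordering} the layers are increasing in both orderings: $L_0 <_\pi \ldots <_\pi L_r$ and $L_0 <_\sigma \ldots <_\sigma L_r$. Call $(p,q)$ a \emph{violating pair} if $p \in L_a$, $q \in L_b$ with $a<b$ but $p <_{\sigma_+} q$; the theorem asserts that no violating pair exists. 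Assuming one does, I would pick a violating pair $(x,y)$ whose first coordinate $x$ has smallest possible position in $\sigma_+$ (equivalently, keep $y$ fixed and set up an infinite descent on the $\sigma_+$-position of the first coordinate).

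The first key step is to show that $x$ and $y$ are \emph{split} in $\sigma_+$, not tied. Since the layer of $x$ is smaller than the layer of $y$, the compatibility with $\sigma$ gives $x <_\sigma y$. Were $x,y$ tied in $\sigma_+$, the rule of choosing as pivot the $\sigma$-latest vertex of the current slice would place the $\sigma$-later vertex $y$ before $x$, yielding $y <_{\sigma_+} x$ and contradicting $x <_{\sigma_+} y$. Hence there is a splitter $w <_{\sigma_+} x$ with $A_{wx} > A_{wy}$. The second step is to prove that $w$ lies in a layer strictly earlier than that of $y$; granting this, $(w,y)$ is again a violating pair with $w <_{\sigma_+} x$, contradicting the minimality of $x$ (or, iterating with $y$ fixed, producing an infinite strictly $\sigma_+$-decreasing chain of first coordinates, which is the desired contradiction).

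The hard part will be pinning down the layer of the splitter $w$; this is the one place where the Robinson ordering $\pi$ and the layer inequalities must be combined. Writing $j$ for the layer of $y$, I would rule out the two offending possibilities. If $w$ were in a layer larger than $j$, then layers being increasing in $\pi$ would force $x <_\pi y <_\pi w$, and the Robinson inequalities \eqref{eq:Robinson inequalities} would give $A_{wx} \le A_{wy}$, contradicting $A_{wx} > A_{wy}$. If $w$ were in the same layer $L_j$ as $y$, then applying Lemma \ref{thm:inequalities among layers} with the earlier-layer observer $x$ (whose layer is smaller than $j$) and the pair $w,y \in L_j$ would give $A_{wy} \ge A_{xw} = A_{wx}$, again contradicting $A_{wx} > A_{wy}$. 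Therefore the layer of $w$ is strictly smaller than $j$, exactly as required.

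Once this layer-determination of the splitter is in place, the tie-breaking observation and the descent close the argument. I note that, somewhat pleasantly, this route needs neither (PAL) nor the anchors flipping property directly: it rests only on the two structural lemmas about similarity layers together with the explicit $\SFS_+$ tie-break. I would keep the write-up organized around the single invariant ``the $\sigma_+$-earliest first coordinate of a violating pair cannot exist'', since that makes both the minimality phrasing and the infinite-chain phrasing interchangeable.
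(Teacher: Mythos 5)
Your proposal is correct and follows essentially the same route as the paper's proof: establish $x<_\sigma y$ via layer compatibility, use the $\SFS_+$ tie-break to force a splitter $w<_{\sigma_+}x$ with $A_{wx}>A_{wy}$, rule out $w$ lying in layer $j$ (via Lemma \ref{thm:inequalities among layers}) or beyond (via the $\pi$-compatibility of layers and the Robinson inequalities), and then descend on the $\sigma_+$-position of the first coordinate. The paper phrases the descent as an infinite-chain argument rather than minimality, but as you note these are interchangeable.
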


\begin{proof}
Let $x\in L_i$, $y\in L_j$ with $i<j$. Assume for contradiction that $x<_{\sigma_+} y$.
By Lemma~\ref{thm:layers and SFS/Robinson ordering}, we know that $\mathcal{L}$ is compatible with $\sigma$ and thus $x<_{\sigma} y$.
As $x<_{\sigma_+} y$ and $x<_{\sigma} y$, we deduce that $x,y$ are not tied in $\sigma_{+}$. 
Hence there exists $x_1<_{\sigma_+} x$ such that $A_{x_1x}>A_{x_1y}$.
Let $L_\ell$ denote the layer of $\mathcal{L}$ containing $x_1$. We claim that $\ell<j$.
Indeed, if $\ell=j$ then $x_1,y$ are in the same layer and, by Lemma \ref{thm:inequalities among layers}, it must be $A_{x_1y} \geq A_{x_1x}=A_{xy}$ which is impossible, because $A_{x_1x} > A_{x_1y}$. Assume now that  $\ell>j$. By Lemma~\ref{thm:layers and SFS/Robinson ordering}, if $\pi$ is a Robinson ordering starting at $a$,
then  we would get $x <_{\pi} y <_{\pi} x_1$, which implies $A_{x_1y} \geq A_{x_1x}$, again a contradiction.
Therefore, we have  $x_1 \in L_\ell$ with $\ell < j$. Recall that $x_1 <_{\sigma_+} y$.
Hence, starting with the pair $(x,y)$ which satisfies $x\in L_i$, $y\in L_j$ with $i<j$ and $x<_{\sigma_+} y$, we have constructed another pair $(x_1,y)$ 
satisfying $x_1\in L_l$, $y\in L_j$ with $l<j$ and $ x_1<_{\sigma_+} y$. As $x_1<_{+} x$,  iterating this construction we will reach a contradiction.
$\qquad$
\end{proof}

\section{The multisweep algorithm}\label{sec:5-the multisweep algorithm}

We now  introduce our new SFS-based  multisweep  algorithm and we show that in at most $n-1$ sweeps it permits to recognize whether a given matrix of size $n$ is Robinsonian. This  is the  main result of our paper,  which we will prove in this section.
First in  Subsection \ref{sec:5-description} we will describe the algorithm and its main features. Then in Subsection \ref{sec:5-three-good SFS ordering} we introduce the notion of `3-good sweep'   which plays a crucial role   in the correctness proof and we investigate its properties. 
In Subsection~\ref{sec:5-final proof} we complete the proof of correctness of the multisweep algorithm.
Finally, in Subsection~\ref{sec:5-worst case instances} we present an infinite family of $n \times n$ Robinsonian matrices  whose recognition needs exactly $n-1$ sweeps.

\subsection{Description of the multisweep algorithm} \label{sec:5-description}

Our multisweep algorithm  consists of computing successive SFS orderings of a given nonnegative matrix $A\in \mathcal S^n$.
The first sweep is $\SFS(A)$, whose aim is to find an anchor of $A$.  Each subsequent sweep  is computed with the $\SFS_+$ algorithm using the linear order returned by the preceding sweep to break ties.
\ignore{(as in Algorithm \ref{alg:SFS+}).}  
As it starts with the end-vertex of the preceding sweep which is an anchor of $A$, each subsequent sweep is therefore a good SFS ordering of $A$ (in the case when $A$ is Robinsonian).
The algorithm terminates either if a Robinson ordering has been found (in which case it certifies that $A$ is Robinsonian), 
 or if the $(n-1)$th sweep  is not Robinson (in which case  it certifies that $A$ is not Robinsonian). 
The complete algorithm is reported below.

\medskip
\begin{algorithm}[H]\label{alg:Robinson_recognition}
\caption{\textit{\textit{Robinson}}$(A)$}
\SetKwInput {KwIn}{input}
\SetKwInput {KwOut}{output}
\KwIn{a matrix $A \in \mathcal{S}^{n}$}
\KwOut{a Robinson ordering $\pi$ of $A$, or stating that $A$ is not Robinsonian}
\vspace{2ex}
$\sigma_0=\SFS(A)$\\
\For{$i=1 ,\dots n-2$}{
	$\sigma_i = \SFS_+(A, \sigma_{i-1})$\\
	\If{$\sigma_i$ is Robinson}{
		\Return: $\pi = \sigma_i$ 
	}
}
\Return: `$A$ is NOT Robinsonian'
\end{algorithm}
\medskip

\ignore{
Consider the following (Robinson) matrices:

\begin{equation*}
A_4=
\bordermatrix{
~ & \textcolor{red}{a} & \textcolor{red}{b} & \textcolor{red}{c} & \textcolor{red}{d} \cr
\textcolor{red}{a}  & * & 1 & 1 & 0 \cr
\textcolor{red}{b}  & & * & 2 & 1 \cr
\textcolor{red}{c}  & & & * & 2  \cr
\textcolor{red}{d}  & & & & * \cr
},
\
A_5=
\bordermatrix{
~ & \textcolor{red}{a} & \textcolor{red}{b} & \textcolor{red}{c} & \textcolor{red}{d} & \textcolor{red}{e} \cr
\textcolor{red}{a}  & * & 2 & 2 & 0 & 0  \cr
\textcolor{red}{b}  & & * & 2 & 1 & 1  \cr
\textcolor{red}{c}  & & & * & 2 & 1  \cr
\textcolor{red}{d}  & & & & * & 1  \cr
\textcolor{red}{e}  & & & & & *  \cr
},
\
A_6=
\bordermatrix{
~ & \textcolor{red}{a} & \textcolor{red}{b} & \textcolor{red}{c} & \textcolor{red}{d} & \textcolor{red}{e} & \textcolor{red}{f}\cr
\textcolor{red}{a}  & * & 1 & 1 & 1 & 1 & 0  \cr
\textcolor{red}{b}  & & * & 2 & 2 & 1 & 1 \cr
\textcolor{red}{c}  & & & * & 2 & 2 & 2  \cr
\textcolor{red}{d}  & & & & * & 3 & 2  \cr
\textcolor{red}{e}  & & & & & *  & 2  \cr
\textcolor{red}{f}  & & & & & & *  \cr
}.
\end{equation*}
One can easily check that if we start with the ordering 
$\sigma_0=(b,c,d,a)$   the multisweep algorithm applied to  $A_4$   needs $3$ sweeps; analogously one needs 4  sweeps when starting with $\sigma_0= (c,d,b,a,e)$ for $A_5$, and 5 sweeps when starting with $\sigma_0=(b,d,c,e,f,a)$ for $A_6$. 
We will present  in Subsection~\ref{sec:5-worst case instances} 
a  family of Robinsonian matrices   (extending the above matrix $A_6$) which requires exactly $n-1$ sweeps to be recognized.

We will present  in Subsection~\ref{sec:5-worst case instances} 
a  family of Robinsonian matrices which requires exactly $n-1$ sweeps to be recognized.
}
As already mentioned earlier, the SFS algorithm applied to binary matrices reduces to Lex-BFS. 
As a warm-up we now show that our SFS  multisweep algorithm  terminates in  three  sweeps to recognize  whether a binary matrix $A$ is Robinsonian.
As a binary matrix $A$ is Robinsonian if and only if the corresponding graph is a unit interval graph  \cite{Roberts69}, this is coherent 
with the fact that one can recognize unit interval graphs  in three sweeps of Lex-BFS  
\cite[Thm~9]{Corneil04}. Hence we have a new proof for this result, which has similarities but yet differs  from the original proof in \cite{Corneil04}.

\begin{theorem}\label{thm: our proof for 3sweep algorithm uig}
Let $G$ be a  connected graph  and  let $A$ be its adjacency matrix. 
Consider  the orders   $\sigma_0=\SFS(A)$, $\sigma_1=\SFS_+(A, \sigma_0)$ and  $\sigma_2=\SFS_+(A, \sigma_1)$.
Then~$G$ is a unit interval graph (i.e., $A$ is Robinsonian) if and only if $\sigma_2$ is a Robinson ordering of $A$.
\end{theorem}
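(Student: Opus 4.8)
The easy direction is immediate: if $\sigma_2$ is a Robinson ordering then $A$ is Robinsonian by definition, hence $G$ is a unit interval graph by \cite{Roberts69}. For the converse I assume $A$ is Robinsonian and follow the three sweeps. By Theorem~\ref{thm:last vertex of SFS is an anchor} the last vertex $a$ of $\sigma_0$ is an anchor, so $\sigma_1=\SFS_+(A,\sigma_0)$ is a good SFS ordering starting at $a$; its last vertex $b$ is, together with $a$, a pair of opposite anchors by Theorem~\ref{thm:end points of good SFS are opposite anchor}. Applying the anchors flipping property (Theorem~\ref{thm:flipping theorem}) to the good ordering $\sigma_1$, the sweep $\sigma_2=\SFS_+(A,\sigma_1)$ starts at $b$, ends at $a$, and is again a good SFS ordering. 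Everything thus reduces to showing that this particular $\sigma_2$ satisfies the $3$-vertex condition~(\ref{eq:3-vertex condition}).

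The plan is to run the verification through the similarity layers $\mathcal L=(L_0=\{a\},L_1,\dots,L_r)$ rooted at $a$. Since $G$ is connected and $A$ is binary, each layer is a clique and, by Lemma~\ref{thm:inequalities among layers}, any vertex of a strictly lower layer is homogeneous with respect to a higher layer. By Lemma~\ref{thm:layers and SFS/Robinson ordering} the layers appear in increasing order in $\sigma_1$ and in any Robinson ordering $\pi$ starting at $a$, while by the layers flipping property (Theorem~\ref{thm:flipping layers theorem}) they appear in $\sigma_2$ in the reverse order $L_r<_{\sigma_2}\cdots<_{\sigma_2}L_0$. I would then take $x<_{\sigma_2}y<_{\sigma_2}z$ with $\{x,z\}\in E$ and split into cases according to the layer indices $i_x\ge i_y\ge i_z$. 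If these are pairwise distinct then $z<_{\sigma_1}y<_{\sigma_1}x$, so the triple is reversed between the two SFS orderings and is Robinson by Lemma~\ref{thm:SFS+(a, pi) and reversed Robinson triple}(i). If all three indices coincide the triple is a clique; and if $x,y$ lie in the higher layer and $z$ in the lower one, homogeneity of $z$ yields $A_{xz}=A_{yz}$, so no violation can occur. In each of these cases condition~(\ref{eq:3-vertex condition}) holds.

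This isolates one remaining configuration, which I expect to be the heart of the argument: $x$ in a higher layer $L_j$ and $y,z$ in a common lower layer $L_{j'}$, so that $A_{yz}=1$, a violation meaning $A_{xz}=1$ and $A_{xy}=0$. I would rule it out as follows. First, $y,z$ are not reversed between $\sigma_1$ and $\sigma_2$ (otherwise Lemma~\ref{thm:SFS+(a, pi) and reversed Robinson triple}(i) again makes the triple Robinson), so $y<_{\sigma_1}z$; consequently $y,z$ cannot be tied in $\sigma_2$, since the tie-break by $\sigma_1$ would then place $z$ before $y$. Hence some pivot $w<_{\sigma_2}y$ splits $y$ and $z$, and the refinement order forces $A_{wy}=1$ and $A_{wz}=0$. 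Now fix a Robinson ordering $\pi$ starting at $a$ and compatible with $\mathcal L$: from $A_{xy}=0<1=A_{xz}$ and $y,z<_\pi x$ we get $y<_\pi z<_\pi x$, and then $A_{wy}=1,A_{wz}=0,A_{yz}=1$ force $w<_\pi y$ (placing $w$ between or after $y,z$ violates a Robinson inequality). By layer-compatibility this puts $w$ in a layer of index $\le j'$; but a vertex distinguishing two elements of $L_{j'}$ cannot lie in such a layer (a strictly lower layer is homogeneous to $L_{j'}$ by Lemma~\ref{thm:inequalities among layers}, and $L_{j'}$ itself is a clique, whence $A_{wz}=1$), a contradiction. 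Thus no violating triple exists and $\sigma_2$ is Robinson.

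The main obstacle is precisely this last case: two vertices of the same lower layer can in general be distinguished by a higher-layer vertex, so the impossibility of the splitter $w$ is not a formal consequence of the layer structure alone. It genuinely relies on combining the $\sigma_2$-position of $w$ (namely $w<_{\sigma_2}y$, which forces layer index $\ge j'$) with its $\pi$-position (namely $w<_\pi y$, which forces layer index $\le j'$); together these pin $w$ to $L_{j'}$ and then the clique property $A_{wz}=1$ contradicts $A_{wz}=0$. Binariness enters only through ``each layer is a clique'' and through the homogeneity values being $0/1$, which is exactly what collapses the general $(n-1)$-sweep bound to three sweeps and recovers the result of Corneil~\cite{Corneil04}. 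Once the conflict above is established the proof is complete, the easy direction having already been handled.
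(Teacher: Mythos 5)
Your proof is correct, but it is organized quite differently from the paper's. The paper starts from a non-Robinson triple $x<_{\sigma_2}y<_{\sigma_2}z$, applies the Path Avoiding Lemma (and reversal of $\pi$) to normalize to $z<_\pi x<_\pi y$ in some Robinson ordering, and from this immediately deduces $A_{xz}=1$, $A_{yz}=0$; the layer structure is then used only once, to force $z,y,x$ into three strictly increasing layers and hence into the order $z<_{\sigma_1}y<_{\sigma_1}x$, which contradicts (PAL) again. In particular the configuration you single out as the heart of the matter ($y,z$ in a common layer, so $A_{yz}=1$ and $A_{xy}=0$) never arises in the paper's proof: once $z<_\pi x<_\pi y$ is available, the Robinson inequality gives $A_{yz}\le A_{xy}=0$, killing that case with no layer or pivot analysis at all. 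Your route instead does an exhaustive case split on the layer indices, dispatches three of the four cases by the clique/homogeneity properties and by Lemma~\ref{thm:SFS+(a, pi) and reversed Robinson triple}(i), and then resolves the residual case by locating the pivot $w$ that splits $y$ and $z$ in $\sigma_2$, pinning its $\pi$-position and its layer, and deriving $A_{wz}=1$ versus $A_{wz}=0$. That argument is sound (the tie-break step correctly forces $y<_{\sigma_1}z$, and the placement of $w$ before $y$ in $\pi$ is forced by the Robinson inequalities), and it has the merit of making explicit exactly which layer configurations could in principle carry a violation; the cost is that the hardest part of your proof is spent on a case the paper's normalization shows to be vacuous for elementary reasons. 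Both proofs draw on the same toolkit (anchors, layer compatibility and flipping, PAL), so this is a difference of decomposition rather than of underlying machinery.
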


\begin{proof}
Clearly, if $\sigma_2$ is Robinson then $A$ is Robinsonian.
Assume now  that $A$ is Robinsonian; we show that  $\sigma_2$ is Robinson.
Suppose, for contradiction, that there exists a triple $x <_{\sigma_2} y <_{\sigma_2} z$ which is not Robinson, i.e., $A_{xz} > \min \{A_{xy}, A_{yz}\}$. 
Then the path $(x,z)$ avoids $y$ and thus, in view of Lemma \ref{thm:PAL} (PAL), in any Robinson ordering $\pi$ one cannot have $x <_{\pi} z <_{\pi} y$.
We may assume without loss of generality  that  $z <_{\pi} x <_{\pi} y$ in some Robinson ordering $\pi$.
Because $A$ is a binary matrix,
then $A_{xz}=1$, $A_{yz}=0$ and thus 
 $\{x,z\} \in E,  \{y,z\} \notin E$.
By construction, $\sigma_1$ is a good SFS ordering of $A$ starting (say) at the anchor $a$. Let $\mathcal{L}=\{L_0, L_1 \dots, L_r\}$ be the similarity layer structure of $A$ rooted at $a$.  By Lemma~\ref{thm:layers and SFS/Robinson ordering}, we know that $\mathcal L$ is compatible with $\sigma_1$, i.e., $a<_{\sigma_1} L_1 <_{\sigma_1} \ldots
<_{\sigma_1} L_r$. Using Theorem~\ref{thm:flipping layers theorem} 
we obtain that 
$L_r <_{\sigma_2}L_{r-1} <_{\sigma_2} \ldots <_{\sigma_2} L_1 <_{\sigma_2} a.$
Moreover,  using Lemma \ref{thm:inequalities among layers} and the fact that $G$ is connected, it is easy to see that each layer $L_i$ is a clique of $G$.
Hence, $y,z$ cannot be in the same layer of $\mathcal{L}$, as $\{y,z\} \notin E$.
Since  $y <_{\sigma_2} z$, it  follows that $z \in L_i, y \in L_j$ with $i < j$ and thus $z<_{\sigma_1} y$. 
Say $x\in L_h$. One cannot have  $h<j$ since this would contradict $x<_{\sigma_2} y$. 
If   $h=j$ then $x,y\in L_j$ and thus $A_{zx}=A_{zy}$ by definition of the layers, contradicting the fact that $A_{xz}=1$, $A_{yz}=0$.
Hence one must have $j<h$. Then $z\in L_i$, $y\in L_j$, $x\in L_h$ with $i<j<h$ and thus  $z<_{\sigma_1} y<_{\sigma_1} x$.
Now we get a contradiction with Lemma \ref{thm:PAL} (PAL), as $z<_\pi x<_\pi y$ and the path $(x,z)$ avoids~$y$.
\hfill
\end{proof}

The proof of Theorem~\ref{thm: our proof for 3sweep algorithm uig} outlines a fundamental difference between unit interval graphs and Robinsonian matrices.
Indeed, using Lemma~\ref{thm:inequalities among layers}, it is easy to see that, for $0/1$ Robinsonian matrices, each layer~$L_i$ of the similarity layer structure $\mathcal{L}$ rooted at an anchor $a$ is a clique of $G$.
This property  in fact  permits  to bound by three the number of sweeps neded to recognize $0/1$ Robinsonian matrices.
However,  for Robinsonian matrices with at least three distinct values we do not have any analogous structural property for the vertices lying in a common layer, which explains why we might need $n-1$ sweeps in the worst case.

We now formulate our main result, namely that the SFS multisweep algorithm terminates in at most $n-1$ steps to recognize whether an  $n \times n$ matrix is Robinsonian.

\begin{theorem}\label{thm:final theorem}
Let $A \in \MS^n$ and let $\sigma_0=\SFS(A)$, $\sigma_i=\SFS_+(A,\sigma_{i-1})$ for $i \geq 1$ be  the successive sweeps returned by Algorithm \ref{alg:Robinson_recognition}.
Then $A$ is a Robinsonian matrix if and only if $\sigma_{n-2}$ is a Robinson ordering of $A$.
\end{theorem}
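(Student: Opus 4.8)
The plan is to prove the nontrivial implication by induction on $n$, the easy direction being immediate: if $\sigma_{n-2}$ is Robinson then $A$ is Robinsonian by definition. So assume throughout that $A$ is Robinsonian, and let $\sigma_0=\SFS(A)$, $\sigma_i=\SFS_+(A,\sigma_{i-1})$. The first step is to lock the two end-vertices in place. By Theorem~\ref{thm:last vertex of SFS is an anchor} the last vertex of $\sigma_0$ is an anchor; hence $\sigma_1$ is a good SFS ordering and, by Theorem~\ref{thm:end points of good SFS are opposite anchor}, its end-vertices form a pair of opposite anchors $a,b$. The anchors flipping property (Theorem~\ref{thm:flipping theorem}) then shows that for every $i\ge 1$ the ordering $\sigma_i$ is a good SFS ordering whose end-vertices are exactly $a$ and $b$, alternating sides from sweep to sweep. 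I fix this pair $a,b$ and set $S=V\setminus\{a,b\}$; since a principal submatrix of a Robinson matrix is again Robinson, $A[S]$ is Robinsonian of size $n-2$.

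The heart of the argument is a reduction of the multisweep on $A$ to the multisweep on $A[S]$. I would establish, as the content of the `$3$-good' analysis, that once three sweeps have been performed a whole class of triples is already Robinson and the two fixed anchors no longer interfere with the interior: for $i$ beyond a fixed small constant $c$, the restriction $\sigma_i[S]$ is a good SFS ordering of $A[S]$ and, crucially, these restrictions form a genuine $\SFS_+$ multisweep of $A[S]$, i.e.\ restriction commutes with the sweep, $\SFS_+(A,\sigma_i)[S]=\SFS_+(A[S],\sigma_i[S])$. The tools for this are the similarity layers rooted at $a$ (and symmetrically at $b$): by Lemma~\ref{thm:layers and SFS/Robinson ordering} every good SFS ordering and every Robinson ordering starting at $a$ is compatible with these layers, by Theorem~\ref{thm:flipping layers theorem} the layers are flipped at each sweep, and Lemma~\ref{thm:inequalities among layers} controls the entries across layers. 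The layer inequalities are precisely what is needed to show that the initial refinement performed by the pivot $a$ at the start of each $\SFS_+(A,\sigma_i)$ is already forced by $\sigma_i$ itself, so that deleting $a$ (and $b$) does not change how $S$ gets reordered.

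Granting this reduction, I would strengthen the induction hypothesis so that it applies to the multisweep of $A[S]$ \emph{started from an arbitrary good SFS ordering} (the only property of $\sigma_0$ used downstream is that every later sweep is a good SFS ordering). The restricted orderings $\sigma_c[S],\sigma_{c+1}[S],\dots$ then form exactly such a multisweep on the size-$(n-2)$ matrix $A[S]$, so by induction a Robinson ordering of $A[S]$ is reached within $(n-2)-2$ of its sweeps; chasing the constant offset $c$ coming from the three initial sweeps yields that $\sigma_{n-2}[S]$ is a Robinson ordering of $A[S]$ (using Lemma~\ref{thm:SFS+(a, pi) and reversed Robinson triple}(ii) to absorb the final flip). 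It then remains to reassemble: $\sigma_{n-2}$ is the concatenation $(a,\ \sigma_{n-2}[S],\ b)$ with $a,b$ opposite anchors, and I would verify the Robinson inequalities for the triples involving $a$ or $b$ directly from the compatibility of $\sigma_{n-2}$ with the similarity layers rooted at $a$ and at $b$ (Lemmas~\ref{thm:layers and SFS/Robinson ordering} and~\ref{thm:inequalities among layers}) together with Lemma~\ref{thm:path avoiding and Robinson ordering}, concluding that $\sigma_{n-2}$ itself is Robinson. The base cases $n\le 3$ are checked directly.

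The main obstacle is the reduction step: proving that restriction to $S$ commutes with $\SFS_+$, and that three sweeps are exactly what is needed to enter this regime. The delicate point is that each sweep begins by refining the interior according to the similarity partition of the starting anchor, an operation with no counterpart in a sweep of $A[S]$; one must use the layer structure to argue that this refinement agrees with the one induced by the previous ordering, which is subtle precisely for vertices lying in the first layer, where the anchor can be the unique splitting witness. This is also where the extra sweep in the worst case enters, reflected in the $n-1$ (rather than $n-2$) total sweeps and in the tightness of the family in Subsection~\ref{sec:5-worst case instances}. A secondary but real difficulty is the careful index bookkeeping around the offset $c$ and the final flip, which must be pinned down so that the bound lands exactly on $\sigma_{n-2}$.
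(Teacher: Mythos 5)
Your plan is essentially the paper's own proof: the easy direction, then induction on $n$ after stripping the two opposite anchors $a,b$ fixed by the flipping property, with the key reduction being that for $3$-good sweeps the restriction to $S=V\setminus\{a,b\}$ commutes with $\SFS_+$ (Lemmas~\ref{thm:3-good-sweep without endpoints} and~\ref{thm:3-good-sweep and successive sweep}), the induction hypothesis stated for an arbitrary good SFS ordering (Theorem~\ref{thm:good sweep converges in n-2}), and the triples containing $a$ or $b$ handled via the similarity layers (Lemma~\ref{thm:3-good-sweep basic fact not-Robinson triple}). The index bookkeeping also works out as you hope: the offset of three initial sweeps combined with the $(n-2)-2$ sweeps on $A[S]$ lands exactly on $\sigma_{n-2}$, and you correctly isolate the genuinely delicate point, namely that the anchor's initial refinement must be shown redundant for the interior order.
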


We will  give  the full proof of Theorem \ref{thm:final theorem}  in Subsection \ref{sec:5-final proof} below.
What we need to show is that if $A$ is Robinsonian then the order $\sigma_{n-2}$ in Algorithm \ref{alg:Robinson_recognition} is a Robinson ordering of $A$. 
We now give a  rough sketch of the strategy which we will use to prove this result. 
The proof will  use induction on the size $n$ of the matrix $A$.

As was shown  earlier,  the sweep $\sigma_1$ is a  good SFS ordering of $A$  with  end-vertices  (say) $a$ and $b$, and all subsequent sweeps have  the same end-vertices (flipping their order at each sweep) in view of Theorem \ref{thm:flipping theorem}.
A first key ingredient will be to show that if we delete both end-vertices $a$ and $b$ and set $S=V\setminus \{a,b\}$, then the induced  order $\sigma_3[S]$ is  a good SFS ordering of the principal  submatrix $A[S]$. A second crucial ingredient will be to show that the induced order $\sigma_{n-2}[S]$ can be obtained with the multisweep algorithm applied to $A[S]$ starting from  
$\sigma_3[S]$. This will enable us to apply the induction  assumption and  to conclude that $\sigma_{n-2}[S]$ is a Robinson ordering of $A[S]$.
Hence all triples $(x,y,z)$ in $\sigma_{n-2}$ that are contained in $S=V\setminus \{a,b\}$ are Robinson. 
The last step is to show that all triples $(x,y,z)$ in $\sigma_{n-2}$ that contain $a$ or $b$ are also Robinson.

As we see in the above sketch, the sweep $\sigma_3$ plays a special role. It is obtained by applying three sweeps of $\SFS_+$ starting from the good SFS ordering $\sigma_1$. For this reason  we call it a {\em 3-good SFS ordering}. We  introduce and investigate in detail this 
 notion of `3-good SFS ordering' in Subsection \ref{sec:5-three-good SFS ordering} below.

\subsection{3-good SFS orderings} \label{sec:5-three-good SFS ordering}

Consider a Robinsonian matrix $A\in \mathcal S^n$. Recall that a SFS ordering $\tau$ of $A$ is said to be  {\em good}  if its first vertex is an anchor of $A$ (see Subsection \ref{sec:4-good SFS and end points}).
We now introduce the notion of 3-good SFS ordering. A linear order~$\tau$  is  called a {\em 3-good SFS  ordering} of $A$ 
if there exists a  good SFS ordering $\tau'$ of $A$ such that, if we set  $\tau''=\SFS_+(A,\tau')$, then $\tau=\SFS_+(A,\tau'')$ holds. 
In other words, a 3-good SFS ordering is obtained by performing three consecutive good sweeps.
Of course any 3-good SFS ordering is also a good SFS ordering. Furthermore, if we consider Algorithm \ref{alg:Robinson_recognition}, then any sweep $\sigma_i$ with $i \geq 3$ is a 3-good SFS ordering by construction. 
First we report the following flipping property of layers which follows 
as a direct application of Theorem \ref{thm:flipping layers theorem}.

\begin{corollary}\label{thm:flipping layers for 3-good SFS}
Assume  $A \in \MS^n$ is a Robinsonian matrix.
Let $\tau'$ be a good SFS ordering of $A$,
 $\tau''=\SFS_+(A,\tau')$ and $\tau=\SFS_+(A,\tau'')$.
Let $\mathcal{L}=\{L_0,\dots,L_r\}$ be the similarity layer structure of $A$ rooted at the first vertex of $\tau$.
If $x \in L_i$, $y\in L_j$ with $i<j$ then $y<_{\tau''} x$.
\end{corollary}

We now show some important properties of  3-good SFS orderings, 
that we will use in  the proof of correctness of the multisweep algorithm.
First we show that  some triples in a 3-good SFS ordering can  be shown to be Robinson.

\begin{lemma}\label{thm:3-good-sweep basic fact not-Robinson triple}
Assume  $A \in \MS^n$ is a Robinsonian matrix.
Let $\tau$ be a 3-good SFS ordering starting at $a$ and ending at $b$.
Let $\mathcal{L}=\{L_0=\{a\}, L_1,\dots,L_r \}$ be the similarity layer structure of $A$ rooted at $a$.
Then the following holds:
\begin{enumerate}[(i)]
\item If $x <_{\tau} y <_{\tau} z$ and $(x,y,z)$ is not Robinson, then $x,y,z \in L_i$ with $1 \leq i \leq r$.
\item Every triple $(a,x,y)$ with $x<_{\tau} y$ is Robinson.
\item Every triple $(x,y,b)$ with $x<_{\tau} y$ is Robinson.
\end{enumerate}
\end{lemma}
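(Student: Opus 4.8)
The three parts all assert that certain triples in a 3-good SFS ordering $\tau$ are Robinson, and the natural engine for all of them is the Path Avoiding Lemma (PAL) combined with the layer-flipping property for 3-good orderings (Corollary~\ref{thm:flipping layers for 3-good SFS}). The plan is to suppose a triple is \emph{not} Robinson, exploit that a non-Robinson triple $(x,y,z)$ with $A_{xz}>\min\{A_{xy},A_{yz}\}$ yields a path $(x,z)$ avoiding $y$, and then derive a contradiction with (PAL) by controlling how $x,y,z$ are positioned relative to a Robinson ordering $\pi$ and relative to the intermediate sweeps $\tau''$.

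\textbf{Part (i).} I would start from $x<_\tau y<_\tau z$ with $(x,y,z)$ not Robinson, so the path $(x,z)$ avoids $y$. By Lemma~\ref{thm:path avoiding and Robinson ordering}, $y$ cannot lie between $x$ and $z$ in any Robinson ordering $\pi$; since $\tau$ is good and starts at the anchor $a$, Lemma~\ref{thm:layers and SFS/Robinson ordering} tells me $\mathcal L$ is compatible with $\tau$, so the layer indices of $x,y,z$ are nondecreasing along $\tau$. The goal is to force $x,y,z$ into a \emph{common} layer $L_i$ with $i\ge 1$ (the index is $\ge 1$ because $L_0=\{a\}$ is a singleton and $a$ is homogeneous to everything after it, so $a$ can never be the apex of a non-Robinson triple). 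The key tool is the flipping property: in the intermediate sweep $\tau''$ the layers appear in reversed order, i.e.\ $L_j<_{\tau''}L_i$ for $i<j$. If the three vertices occupied \emph{distinct} layers, or split $2$--$1$ across layers, I would combine the order of $x,y,z$ in $\tau''$ (read off from Corollary~\ref{thm:flipping layers for 3-good SFS}) with the position of $y$ forced by $\pi$, and apply (PAL) to the path $(x,z)$ avoiding $y$ to reach a contradiction. Concretely, since $\tau=\SFS_+(A,\tau'')$, a vertex strictly between layers will have its relative order with the others flipped between $\tau''$ and $\tau$, and feeding the triple together with $\pi$ into (PAL) rules out every layer configuration except the all-equal one.

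\textbf{Parts (ii) and (iii).} For (ii), the triple $(a,x,y)$ with $a$ the root is immediate: $a$ is an anchor, so it lies first in some Robinson ordering $\pi$, and by Lemma~\ref{thm:inequalities among layers} (or directly, since $a=L_0$) one checks $A_{ax}=A_{ay}$ whenever $x,y$ share a layer and the layer inequalities give $A_{xy}\ge A_{ax}=A_{ay}$ across layers; in all cases $A_{ay}\le\min\{A_{ax},A_{xy}\}$ need only be arranged, which follows from $a$ being an anchor and $x<_\pi y$ in a $\pi$ starting at $a$. Part (iii) is the mirror image about the other end-vertex $b$: since $\tau$ is 3-good it is in particular a good SFS ordering, its end-vertices $a,b$ are opposite anchors by Theorem~\ref{thm:end points of good SFS are opposite anchor}, so there is a Robinson ordering $\pi$ ending at $b$; then for $x<_\tau y$ I want $A_{xb}\le\min\{A_{xy},A_{yb}\}$, which I would get by locating $x,y$ relative to $\pi$ and invoking (PAL) to forbid the offending path, exactly as in (i).

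\textbf{Main obstacle.} The delicate step is Part (i): ruling out \emph{all} mixed-layer configurations for a non-Robinson triple. The subtle case is when two of the three vertices lie in the same layer and the third in a different one, because then the layer inequalities of Lemma~\ref{thm:inequalities among layers} and the flipping of Corollary~\ref{thm:flipping layers for 3-good SFS} must be balanced carefully against the direction forced by $\pi$; I expect this to require splitting on whether the odd vertex out sits in a lower or higher layer and a short (PAL) argument in each subcase. The reliance on the intermediate sweep $\tau''$ — rather than just $\tau$ itself — is what makes ``3-good'' (as opposed to merely ``good'') essential here, and getting the interplay between $\tau$, $\tau''$, $\pi$ and the layer structure right is the crux of the whole lemma.
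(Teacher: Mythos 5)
Your overall framing (non-Robinson triple gives a path $(x,z)$ avoiding $y$, locate $y$ via Lemma~\ref{thm:path avoiding and Robinson ordering}, use layer compatibility and the flipping property of $\tau''$, finish with (PAL)) matches the paper's strategy, and your parts (ii) and (iii) are essentially on track ((ii) is in fact an immediate corollary of (i), since a triple containing $a$ cannot lie in a single layer). But there is a genuine gap in part (i), precisely in the case you flag as delicate. After the easy eliminations one is left with $x,y\in L_i$, $z\in L_j$, $i<j$, together with $y<_\pi x<_\pi z$ and $z<_{\tau''}x<_{\tau''}y$. In this configuration (PAL) does \emph{not} apply to the triple: (PAL) needs the avoided vertex to come \emph{after} the endpoints in the Robinson ordering (the pattern $u<_\sigma v<_\sigma w$ with $u<_\pi w<_\pi v$), whereas here $y$ precedes both $x$ and $z$ in $\pi$, and in $\tau''$ the order is $z<_{\tau''}x<_{\tau''}y$, again not of the required shape. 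So no ``short (PAL) argument'' closes this subcase, and your claim that feeding the triple into (PAL) ``rules out every layer configuration except the all-equal one'' is false as stated.

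What the paper actually does here is an infinite-descent argument that uses the SFS splitting structure, not just (PAL): since $x<_\tau y$ and $x<_{\tau''}y$, the pair $x,y$ cannot be tied in $\tau$, so there is a pivot $x_1<_\tau x$ with $A_{x_1x}>A_{x_1y}$; one then checks (using Lemma~\ref{thm:inequalities among layers}, the Robinson ordering $\pi$, and two further applications of (PAL) to \emph{paths of length three} such as $(x_1,x,z)$, not to the original triple) that $(x_1,y,z)$ satisfies exactly the same invariant, with $x_1\in L_i$, $y<_\pi x_1<_\pi z$ and $x_1<_{\tau''}y$. Iterating produces an infinite strictly $\tau$-decreasing chain, which is the contradiction. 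This construction of a new splitting vertex and preservation of the invariant is the crux of the lemma and is absent from your proposal. A smaller but similar issue occurs in (iii): there (PAL) again does not apply directly (one gets $y<_\pi x<_\pi b$, the wrong pattern); the paper instead uses the similarity layers rooted at $b$, their compatibility with $\tau''$, and the layer-flipping property between $\tau''$ and $\tau$ to derive $y<_\tau x$ and contradict $x<_\tau y$.
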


\begin{proof}
Let $\tau'$ be a good SFS order  such that  $\tau''=\SFS_+(A,\tau')$, $\tau=\SFS_+(A,\tau'')$. 
Let  $\mathcal{L}''=\{L''_0=\{b\}, L_1'',\dots \}$ denote the similarity layer structure of $A$ rooted at $b$, which is compatible with $\tau''$.

\textit{(i)}
Let  $x<_{\tau} y<_{\tau} z$ such that  $x,y,z$ do not all belong to the same layer of~$\mathcal{L}$ and assume that  $(x,y,z)$ is not Robinson.
Then $A_{xz} > \min \{A_{xy},A_{yz}\}$ and the path $(x,z)$ avoids $y$.
Let $\pi$ be a Robinson ordering and assume, without loss of generality, that $x <_{\pi} z$. 
Then, since $(x,z)$ avoids $y$, in view of Lemma \ref{thm:path avoiding and Robinson ordering} $y$ cannot appear between $x$ and $z$ in any Robinson ordering.
If $y$ appears after $z$ in $\pi$ then we have $x <_{\pi} z <_{\pi} y$ and $x <_{\tau} y <_{\tau} z$, and we get a contradiction with Lemma \ref{thm:PAL} (PAL) as there cannot exists a path from $x$ to $z$ avoiding $y$.
Therefore $y <_\pi x <_\pi z$ and thus $A_{xz}>\min\{A_{xy},A_{yz}\}= A_{yz}.$
In view of Lemma \ref{thm:layers and SFS/Robinson ordering},  $x,y,z$ do not belong to three distinct layers of $\mathcal L$ (since otherwise $(x,y,z)$ would be Robinson). Moreover, one cannot have $x\in L_i$ and $y,z\in L_j$ with $i<j$ (since this would imply $A_{xy}=A_{xz}\le A_{yz}$, a contradiction). 
Hence we must have  $x,y \in L_i$ and $z\in L_j$ with $i<j$.

Consider now $\tau''$;  applying Corollary \ref{thm:flipping layers for 3-good SFS}, we derive that $z <_{\tau''} x,y.$ 
Moreover, we cannot have that $z<_{\tau''} y <_{\tau''} x$, since we would get a contradiction with Lemma \ref{thm:PAL}~(PAL) as $z<_{\pi^{-1}} x <_{\pi^{-1}} y$ and the path $(x,z)$ avoids $y$.  Hence we have
$ z<_{\tau''} x <_{\tau''} y.$
Summarizing, the triple $(x,y,z)$ satisfies the properties:
\begin{equation}\label{eqxyz}
x,y\in L_i,\quad \ z\in L_j,\quad \ x<_{\tau} y<_{\tau} z, \quad \ x<_{\tau''} y, \quad\ y<_\pi x <_\pi z.
\end{equation}
We will now show that the properties in (\ref{eqxyz}) (together with the inequality $A_{xz}>A_{yz}$) permit to find an element $x_1<_{\tau}x$ for which the  triple $(x_1,y,z)$ again satisfies  the properties of (\ref{eqxyz}), replacing $x$ by $x_1$.
Then, iterating this construction leads to a contradiction.

We now proceed to show the existence of such an element $x_1$. As $x<_{\tau''} y$ and $x<_{\tau} y$,   $x,y$ are not  tied in $\tau$ and thus there exists $x_1<_{\tau} x$ such that $$A_{x_1x}>A_{x_1y}.$$ 
This implies  $x_1\in L_i$ (recall Lemma \ref{thm:inequalities among layers}).
Moreover, the path $(x_1,x,z)$ avoids $y$, since $A_{x_1x}>A_{x_1y}$ and $A_{xz}>A_{yz}.$
By construction we have:
$x_1<_{\tau} x <_{\tau} y <_{\tau} z.$
We claim that 
$$y <_\pi x_1 <_\pi z.$$
Indeed, if  $x_1<_\pi y$, then  $x_1<_\pi y<_\pi x$ and thus $A_{x_1x}\le A_{x_1y}$, a contradiction.
Moreover, if $z<_\pi x_1$ then $y<_\pi x<_\pi z<_\pi x_1$, which implies
$A_{x_1z}\ge A_{x_1x} > A_{x_1y}$ and thus the triple $(x_1,y,z)$ is not Robinson.
Then $A_{x_1z} > \min \{A_{x_1y},A_{yz}\}$ and the path $(x_1,z)$ avoids $y$.
Now, as $x_1<_{\tau} y <_{\tau} z$ and $x_1 <_{\pi^{-1}} z <_{\pi^{-1}} y$, we get a contradiction with Lemma \ref{thm:PAL} (PAL). 
So we have shown that $y <_\pi x_1 <_\pi z.$

Next we claim that $x_1<_{\tau''} y$. Indeed, if $y<_{\tau''} x_1$ then $z<_{\tau''} y<_{\tau''} x_1$, which together with $z<_{\pi^{-1}} x_1 <_{\pi^{-1}} y$ and the fact that the path $(x_1,x,z)$ avoids $y$,  contradicts Lemma~\ref{thm:PAL} (PAL).
Hence we have shown that the triple $(x_1,y,z)$ satisfies~(\ref{eqxyz}), which concludes the proof of \textit{(i)}.

\textit{(ii)}
follows directly from \textit{(i)}, since any triple containing  $a$ is not contained in a unique layer, and thus it must be Robinson.

\textit{(iii)}
Assume for contradiction that $(x,y,b)$ is not Robinson for some $x <_{\tau} y$, i.e., $A_{bx} > \min \{A_{by},A_{xy}\}$. Then the path $(b,x)$ avoids $y$.
If $\pi$ is a Robinson ordering ending at $b$ (which exists since $b$ is an anchor) then we must have $y<_\pi x<_\pi b$ because,
 in view of Lemma \ref{thm:path avoiding and Robinson ordering}, $y$ cannot appear between $x$ and $b$ in any Robinson ordering.
Hence, $A_{bx}>A_{by}$.
Since $\tau''$ is compatible with $\mathcal L''$ which is rooted at $b$,  we must have $b <_{\tau''} x <_{\tau''} y$ and  moreover  $x,y$ belong to distinct layers of $\mathcal L''$.
Thus  $x \in L''_i, y \in L''_j$ with $i < j$ which,  in view of Theorem \ref{thm:flipping layers theorem}, implies  $y <_{\tau} x$, a contradiction.
$\qquad$
\end{proof}

As a first direct application  of Lemma \ref{thm:3-good-sweep basic fact not-Robinson triple}(i), we can conclude that the multisweep algorithm terminates in at most four steps when applied to a matrix $A$ whose similarity layers rooted at the end-vertex of the first sweep $\sigma_0$ all have cardinality at most 2.

\medskip
Consider a 3-good SFS ordering  $\tau$  of a Robinsonian matrix  $A$ with end-vertices $a$ and $b$ and consider the induced order $\tau[S]$ of the submatrix $A[S]$ indexed by the subset $S=V\setminus\{a,b\}$. In the next lemmas we  show some properties of $\tau[S]$. First, we show that $\tau[S]$ is a good SFS ordering of $A[S]$ (Lemma \ref{thm:3-good-sweep without endpoints}). Second, we show that applying  the $\SFS_+$ algorithm to $\tau$ and then deleting $a$ and $b$ yields the same order as applying the $\SFS_+$ algorithm 
to the induced order $\tau[S]$ (Lemma \ref{thm:3-good-sweep and successive sweep}). These properties will be used in the induction step for the proof of correctness of the multisweep algorithm 
 in the next subsection. We start with showing a `flipping property' of  the second smallest  element of $\tau$.

\begin{lemma}\label{thm:2-good-sweep and successive sweep}
Assume  $A \in \MS^n$ is a Robinsonian matrix.
Let $\tau'$ be a good SFS ordering of $A$, $\tau''=\SFS_+(A,\tau')$ and $\tau=\SFS_+(A,\tau'')$. 
Let $a$ be the first vertex of $\tau$.
Then the successor $a_1$ of $a$ in $\tau$ is the predecessor of $a$ in $\tau''$. 
\end{lemma}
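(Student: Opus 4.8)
The plan is to identify both vertices in question — the successor $a_1$ of $a$ in $\tau$ and the predecessor of $a$ in $\tau''$ — as one and the same explicit vertex, namely the element of the first similarity layer $L_1$ (rooted at $a$) that appears last in $\tau''$. Once this common description is isolated, the two coincide and we are done.

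First I would fix the setup. Since $\tau=\SFS_+(A,\tau'')$ begins with the last vertex of its input ordering, and we are told that $\tau$ begins with $a$, the vertex $a$ must be the last vertex of $\tau''$; hence the predecessor of $a$ in $\tau''$ is well defined (assuming $n\ge 2$). As $\tau$ is a $3$-good, hence good, SFS ordering, its first vertex $a$ is an anchor of $A$, so by Lemma~\ref{thm:layers structure} the similarity layer structure $\mathcal L=(L_0=\{a\},L_1,\dots,L_r)$ rooted at $a$ partitions $V$, and $L_1\neq\emptyset$. I would then analyse the two ends separately. For $a_1$: in the run of $\SFS_+(A,\tau'')$ producing $\tau$, once $a$ is chosen as first pivot the queue $(V\setminus\{a\})$ is refined by the similarity partition $\psi_a$, so the new first class — the slice from which $a_1$ is drawn — is exactly the set of vertices of maximum similarity to $a$, which by the definition in~(\ref{eq:layers}) equals $L_1$. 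By the $\SFS_+$ tie-breaking rule, $a_1$ is therefore the vertex of $L_1$ occurring last in $\tau''$. For the predecessor of $a$: applying Corollary~\ref{thm:flipping layers for 3-good SFS} with $\mathcal L$ rooted at the first vertex of $\tau$, the layers appear reversed in $\tau''$, that is $L_r<_{\tau''}\cdots<_{\tau''}L_1<_{\tau''}L_0=\{a\}$. In particular the block of $\tau''$ immediately preceding $a$ is precisely $L_1$, so the predecessor of $a$ in $\tau''$ is again the vertex of $L_1$ occurring last in $\tau''$. Comparing the two descriptions yields $a_1=$ predecessor of $a$ in $\tau''$.

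The argument is short, and I do not expect a deep obstacle; the only points needing care are the two identifications of $L_1$. The first is matching the initial SFS slice after $a$ with the first similarity layer, which is immediate from the definitions of $\psi_a$ and of $L_1$ and remains valid even in the degenerate case where $a$ has no neighbour (then both the slice and $L_1$ equal $V\setminus\{a\}$). The second is invoking the layer-flipping property for $3$-good orderings (Corollary~\ref{thm:flipping layers for 3-good SFS}) to place $L_1$ as the $\tau''$-block lying just before $a$; here one should check that the reversed ordering of the layers is complete enough to pin down that the very last non-$a$ block of $\tau''$ is $L_1$, which follows since $L_0=\{a\}$ is last and every layer $L_{\ge 2}$ precedes all of $L_1$.
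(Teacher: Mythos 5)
Your proposal is correct and follows essentially the same route as the paper's proof: both identify the slice of $a$ in $\tau$ with the first similarity layer $L_1$, use the $\SFS_+$ tie-breaking rule to describe $a_1$ as the element of $L_1$ appearing last in $\tau''$, and invoke the layer-flipping property (Corollary~\ref{thm:flipping layers for 3-good SFS}) to conclude that this same element is the predecessor of $a$ in $\tau''$. The extra care you take with the degenerate case and with locating $a$ as the last vertex of $\tau''$ is fine but not needed beyond what the paper already assumes.
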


\begin{proof}
As before, $\mathcal{L}=\{L_0=\{a\}, L_1,\dots, L_r\}$ is the layer structure of $A$ rooted at~$a$, which is compatible with $\tau$.
The slice of $a$ in $\tau$ is precisely the first layer $L_1$ in~$\mathcal{ L}$. 
By definition, $a_1$ is the element of $L_1$ coming last in $\tau''$. 
By the flipping property in Corollary \ref{thm:flipping layers for 3-good SFS},
we know that the layer $L_1$ comes last but one in $\tau''$, just before the layer $L_0=\{a\}$. 
Then $a_1$ is the element of $L_1$ appearing last in $\tau''$, and thus it coincides with the predecessor of $a$ in $\tau''$.
$\qquad$
\end{proof}

\begin{lemma}\label{thm:3-good-sweep without endpoints}
Assume $A \in \MS^n$ is a Robinsonian matrix.
Let $\tau$ be a 3-good SFS ordering of $A$ with end-vertices $a$ and $b$ and set $S=V\setminus \{a,b\}$. 
Then $\tau [S] $ is a good SFS ordering of $A[S]$.
\end{lemma}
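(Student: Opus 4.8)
The plan is to verify the two defining ingredients of a \emph{good} SFS ordering of $A[S]$ separately: first that $\tau[S]$ is a genuine SFS ordering of $A[S]$, and second that its first vertex is an anchor of $A[S]$. Throughout I may use that $A[S]$ (and $A[V\setminus\{a\}]$) is again Robinsonian, being a principal submatrix of the Robinsonian matrix $A$. Since $\tau$ starts at $a$ and ends at $b$, removing $a$ and $b$ leaves the order $\tau[S]$ whose first vertex is $a_1$, the successor of $a$ in $\tau$ (assuming $n\ge 3$, otherwise $S=\emptyset$ and the claim is vacuous). I would also record at the outset that, writing $\tau'$ for the good SFS ordering with $\tau''=\SFS_+(A,\tau')$ and $\tau=\SFS_+(A,\tau'')$, the anchors flipping property (Theorem~\ref{thm:flipping theorem}) forces $\tau''$ to start at $b$ and end at $a$.

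For the first ingredient I would invoke the $3$-point characterization of SFS orderings (Theorem~\ref{thm:SFS ordering characterization}) in its sufficiency direction. Take $x,y,z\in S$ with $x<_\tau y<_\tau z$ and $A_{xz}>A_{xy}$; I must produce $u\in S$ with $u<_\tau x$ and $A_{uy}>A_{uz}$. Since $\tau$ is itself a SFS ordering of $A$, such a witness $u$ exists in $V$. The obstacle is that a priori $u$ could be $a$ or $b$; as $u<_\tau x$ and $b$ is last in $\tau$, only $u=a$ is dangerous. This is where I expect the real work, and the resolution is to exploit that $A_{xz}>A_{xy}$ makes the triple $(x,y,z)$ \emph{non}-Robinson (indeed $A_{xz}>A_{xy}\ge\min\{A_{xy},A_{yz}\}$). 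By Lemma~\ref{thm:3-good-sweep basic fact not-Robinson triple}(i), all of $x,y,z$ then lie in a single layer $L_i$ with $i\ge 1$ of the similarity layer structure rooted at $a$. But $a\in L_0$ is homogeneous to any later layer, so Lemma~\ref{thm:inequalities among layers} gives $A_{ay}=A_{az}$, which rules out $u=a$ as a witness. Hence the witness $u$ lies in $S$, and since restriction preserves relative order, $u<_{\tau[S]}x$; this establishes the $3$-point condition and thus that $\tau[S]$ is a SFS ordering of $A[S]$.

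For the second ingredient I must show $a_1$ is an anchor of $A[S]$. Here I would use Lemma~\ref{thm:2-good-sweep and successive sweep}, which tells us that $a_1$ is precisely the \emph{predecessor} of $a$ in $\tau''$; since $\tau''$ ends at $a$, the vertex $a_1$ is last in $\tau''[V\setminus\{a\}]$. Deleting the last vertex of a SFS ordering trivially preserves the $3$-point condition (any witness already precedes the deleted vertex), so $\tau''[V\setminus\{a\}]$ is a SFS ordering of $A[V\setminus\{a\}]$ ending at $a_1$. By Corollary~\ref{corSFSRobinson}, $a_1$ is then an anchor, hence a valid vertex (Theorem~\ref{thm:anchor admissible}), of $A[V\setminus\{a\}]$. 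Validity is hereditary to principal submatrices containing the vertex, so $a_1$ remains valid in $A[S]$, and applying Theorem~\ref{thm:anchor admissible} once more to the Robinsonian matrix $A[S]$ yields that $a_1$ is an anchor of $A[S]$.

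Combining the two parts, $\tau[S]$ is a SFS ordering of $A[S]$ starting with the anchor $a_1$, which is exactly the definition of a good SFS ordering of $A[S]$ (Definition~\ref{defgoodSFS}), completing the proof. The only genuinely delicate point is the first ingredient, and the crucial observation making it go through is that the SFS-violating triples are automatically non-Robinson and therefore confined to a single layer, forcing the would-be problematic witness $a$ to be homogeneous to $y$ and $z$.
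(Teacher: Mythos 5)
Your proposal is correct and follows essentially the same route as the paper's proof: the SFS property of $\tau[S]$ via the 3-point characterization together with Lemma~\ref{thm:3-good-sweep basic fact not-Robinson triple}(i) to confine the non-Robinson triple to one layer and rule out $a$ as a witness, and the goodness via Lemma~\ref{thm:2-good-sweep and successive sweep} identifying $a_1$ as the predecessor of $a$ in $\tau''$ and hence an anchor of $A[V\setminus\{a\}]$, which restricts to $A[S]$. Your write-up is merely a bit more explicit than the paper about why $\tau''[V\setminus\{a\}]$ is an SFS ordering and why anchors are hereditary.
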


\begin{proof}
Say that $a$ is the first element of $\tau$ and that $b$ is its last element.
Let $\mathcal L=(L_0, L_1,\ldots, L_r)$ be the similarity layer structure rooted at $a$, which is compatible with $\tau$.
First we show that $\tau[S]$ is a SFS ordering of $A[S]$. For this consider elements $x,y,z \in S$ such that $A_{xz} > A_{xy}$. Then $(x,y,z)$ is not Robinson and thus $x,y,z \in L_{i}$ with $i \ge 1$ in view of Lemma \ref{thm:3-good-sweep basic fact not-Robinson triple}.
As $\tau$ is a SFS ordering, then in view of Theorem \ref{thm:SFS ordering characterization} there exists $u <_{\tau} x$ such that $A_{uy} > A_{uz}$. We have $u\ne a$ (since  $u = a$ would imply $A_{uy}=A_{uz}$)
and thus  $u \in  S$. This shows that $\tau[S]$ is a SFS ordering of $A[S]$.
Finally~$\tau[S]$ is good since, in view of Lemma \ref{thm:2-good-sweep and successive sweep}, it starts at $a_1$, the successor of $a$ in~$\tau$, which  is an anchor of $A[V\setminus \{a\}]$ (and thus also of $A[S]$) using Theorem~\ref{thm:last vertex of SFS is an anchor}.
\end{proof}

\begin{lemma}\label{thm:3-good-sweep and successive sweep}
Assume  $A \in \MS^n$ is a Robinsonian matrix.
Let  $\tau$ be a 3-good SFS ordering with end-vertices $a$ and $b$. 
Let $\tau_+=\SFS_+(A,\tau)$ and $S=V \setminus \{a,b\}$.
Then $\tau_+[S]=\SFS_+(A[S], \tau[S])$. 
\end{lemma}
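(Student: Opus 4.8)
The plan is to prove the identity of the two linear orders $\sigma := \tau_+[S]$ and $\rho := \SFS_+(A[S],\tau[S])$ by a first-difference (minimal counterexample) argument, exploiting that both are SFS orderings of the \emph{same} submatrix $A[S]$. First I would record the structural facts. Since $\tau$ is $3$-good and $\SFS_+$ sends good orderings to good orderings, $\tau_+=\SFS_+(A,\tau)$ is again a $3$-good SFS ordering, and by the anchors flipping property (Theorem~\ref{thm:flipping theorem}) its end-vertices are again $a$ and $b$, now with $b$ first and $a$ last. Hence Lemma~\ref{thm:3-good-sweep without endpoints} applies to $\tau_+$ and guarantees that $\sigma=\tau_+[S]$ is a (good) SFS ordering of $A[S]$; of course $\rho$ is one as well. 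Because partition refinement on $A[S]$ is deterministic, two SFS orderings of $A[S]$ that agree on an initial pivot sequence induce the same slice, from which the next pivot must be chosen.

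Now suppose $\sigma\neq\rho$ and let $k$ be the first position at which they differ: they share a prefix $s_1,\dots,s_{k-1}$ and induce a common slice $T\subseteq S$; say $\rho$ selects $p\in T$ while $\sigma$ selects $q\in T$ with $p\neq q$. Since $\rho$ breaks ties by choosing the $\tau$-last vertex of $T$, we get $q<_\tau p$; and $\sigma$ choosing $q$ first gives $q<_{\tau_+}p$. The $S$-vertices preceding $q$ in $\tau_+$ are exactly $s_1,\dots,s_{k-1}$, each homogeneous to $\{p,q\}$ because $p,q$ lie in the common slice, and the only further vertex of $V$ preceding $q$ in $\tau_+$ is $b$ (as $a$ is last). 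I then split on $A_{bp}$ versus $A_{bq}$. If $A_{bp}=A_{bq}$, every vertex preceding $q$ in $\tau_+$ is homogeneous to $\{p,q\}$, so $p,q$ are tied in $\tau_+=\SFS_+(A,\tau)$ with $q<_{\tau_+}p$; the tie-breaking rule then forces $p<_\tau q$, contradicting $q<_\tau p$. If $A_{bp}\neq A_{bq}$, then since $b$ is the first pivot of $\tau_+$ and class order is preserved under refinement, $q<_{\tau_+}p$ forces $A_{bq}>A_{bp}$, so the path $(b,q)$ avoids $p$. But $q<_\tau p$ together with $b$ being the last vertex of $\tau$ makes $(q,p,b)$ a triple of the form treated in Lemma~\ref{thm:3-good-sweep basic fact not-Robinson triple}(iii), hence Robinson, which yields $A_{bq}\le A_{bp}$, a contradiction. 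Since both cases fail, no first difference can occur and $\sigma=\rho$.

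The main obstacle is the second case: the vertex $b$ is present in $A$ and may split the pair $p,q$ there, while it is absent from $A[S]$, so a priori the two runs could diverge exactly on such pairs. Resolving this is precisely where the $3$-goodness of $\tau$ enters through Lemma~\ref{thm:3-good-sweep basic fact not-Robinson triple}(iii): it guarantees that whenever $b$ would order $q$ before $p$ in the full matrix, the input order $\tau$ already orders $q$ after $p$, so the submatrix tie-break reproduces the full-matrix order. The remaining points are routine: that $\tau_+$ is $3$-good with the same end-vertices $\{a,b\}$ (so Lemma~\ref{thm:3-good-sweep without endpoints} is applicable to it), and that membership of $p,q$ in a common slice is equivalent to homogeneity of all earlier pivots toward $\{p,q\}$, both being immediate from the deterministic refinement and the results already established. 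I would also note that the argument covers $k=1$ uniformly, with $T=S$ and the homogeneity condition vacuous.
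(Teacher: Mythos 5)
Your proof is correct, and while it shares the paper's overall skeleton --- a first-difference argument on $\tau_+[S]$ versus $\SFS_+(A[S],\tau[S])$, reducing any possible divergence to the single extra vertex (the first pivot of $\tau_+$) that precedes the divergent pair in $\tau_+$ but is absent from $A[S]$ --- it resolves that divergence by a genuinely different and shorter route. The paper distinguishes three cases according to whether the pair is tied in $\tau_+$ and/or in $\SFS_+(A[S],\tau[S])$, and in the two nontrivial cases it invokes the similarity-layer machinery: Corollary~\ref{thm:flipping layers for 3-good SFS} together with part (i) of Lemma~\ref{thm:3-good-sweep basic fact not-Robinson triple} (non-Robinson triples lie in a common layer). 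You instead split on whether the first pivot of $\tau_+$ distinguishes $p$ from $q$: if not, the pair is tied in $\tau_+$ and the two tie-breaking rules force $p<_\tau q$ and $q<_\tau p$ simultaneously; if so, you obtain $A_{bq}>A_{bp}$ together with $q<_\tau p$, which directly contradicts part (iii) of the same lemma (triples ending at the last vertex of $\tau$ are Robinson). This bypasses the layer structure entirely, and by treating the first position uniformly (empty prefix, vacuous homogeneity) you also dispense with Lemma~\ref{thm:2-good-sweep and successive sweep}, which the paper needs only to show that both orders begin at the same vertex. The one point worth stating explicitly in a write-up is the observation you make in passing: since the $\SFS_+$ rule always selects the input-order-last vertex of the current slice, the inequality $q<_\tau p$ holds irrespective of whether $p,q$ are tied in the run on $A[S]$, which is what lets you collapse the paper's second and third cases into one.
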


\begin{proof}
Say $b$ is the first element of $\tau$ and $a$ be its last element. Then $a$ is the first element of $\tau_+$ and $b$ is its last element (Theorem \ref{thm:flipping theorem}).
Let consider the similarity layer structure $\mathcal{L}=(L_0=\{a\}, L_1, \ldots, L_r)$ of $A$ rooted at $a$, which is compatible with~$\tau_+$ (and thus we denote here by $\mathcal{L}_{+}$).

Set $\sigma=\SFS_+(A[S], \tau[S])$.
Let $a_1$ the predecessor of $a$ in $\tau$. 
As $\tau_+$ is clearly also a 3-good SFS ordering then, in view of Lemma~\ref{thm:2-good-sweep and successive sweep}, $a_1$ is the successor of $a$ in $\tau_+$ and thus both  $\tau_+[S]$ and $\sigma$ start  at $a_1$.
Assume that $\sigma$ and $\tau_+[S]$ agree on their first $p$ elements $a_1,\ldots,a_p$, but not at the next $(p+1)$th element. That is,
$\tau_+[S] = (a_1,\ldots, a_p, x, \ldots, y,\ldots)$, while $\sigma=(a_1,\ldots,a_p, y, \ldots, x, \ldots)$, where $x,y$ are distinct elements.
We distinguish three cases.

Assume first that  $x,y$ are tied in $\tau_+$ (and thus in $\sigma$ too).
Then one must have $y<_{\tau} x$ (to place $x$ before $y$ in $\tau_{+}[S]$) and $x<_{\tau} y$ (to place $y$ before $x$ in $\sigma$), a contradiction.
Assume now that $x,y$ are not tied in $\tau_+$, but they are tied in $\sigma$.
Then $A_{a x}>A_{a y}$.
Hence, since the similarity layer structure $\mathcal{L}$ of $A$ is rooted at $a$, then we have  $x\in L_j$, $y\in L_k$ for some $j<k$. This implies $y<_\tau x$ (by Corollary \ref{thm:flipping layers for 3-good SFS}) and thus, since $x,y$ are tied in $\sigma$, one would place $x$ before $y$ in $\sigma$, a contradiction.

Assume finally that $x,y$ are not tied in $\tau_+$ and also not in $\sigma$.
Let $a_j$ be the pivot splitting  $x$ and $y$ in $\sigma$ so that $A_{a_jy}>A_{a_jx}$, with $1 \leq j \leq p$.
We claim that  $a$ is the pivot splitting $x$ and $y$ in $\tau_+[S]$.
For this,  suppose that  $a_i$ is the pivot splitting $x$ and~$y$ in $\tau_+[S]$ for some $1 \leq i \leq p$, so that $A_{a_i x}>A_{a_i y}$ and $i\ne j$.
It is now easy to see that $i>j$ would imply $y<_{\tau_+} x$, while $i<j$  would imply $x<_\sigma y$, a contradiction in both cases.
Hence,  $a$ is the pivot splitting $x,y$ in $\tau_+[S]$ and thus $A_{a x}>A_{a y}$. Then, as $\mathcal{L}_{+}$ is the similarity layer structure of $A$ rooted at $a$,  $x$ and $y$ belong to distinct layers of~$\mathcal{L}^+$.  
Moreover, $a_j<_{\tau_+}  x <_{\tau_+}  y$ and  the triple $(a_j,x,y)$ is not Robinson. As~$\tau_+$ is a 3-good SFS ordering, we can apply  Lemma \ref{thm:3-good-sweep basic fact not-Robinson triple} and  conclude that $a_j,x,y$ must belong to a common layer of $\mathcal L$, contradicting the fact that $x,y$ belong to distinct layers of~$\mathcal L^+$.
\end{proof}

\subsection{Proof of correctness of the multisweep algorithm} \label{sec:5-final proof}

We can finally put all ingredients together and show the correctness of our multisweep algorithm.
We show the following result, which implies  directly  Theorem~\ref{thm:final theorem}.

\begin{theorem}\label{thm:good sweep converges in n-2}
Let $A \in \MS^n$ be a Robinsonian matrix, let $\tau_1$ be a good SFS ordering of $A$ and let  $\tau_i =\SFS_+(A,\tau_{i-1})$ for $i \geq 2$.
Then $\tau_{n-2}$ is a Robinson ordering of~$A$.
\end{theorem}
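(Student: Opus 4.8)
The plan is to prove the statement by induction on $n$: for any Robinsonian $A\in\MS^n$ and any good SFS ordering $\tau_1$, the sweep $\tau_{n-2}$ obtained by iterating $\SFS_+$ is a Robinson ordering. The base cases are $n\in\{3,4\}$, and since each reduction step will delete two vertices (dropping $n$ to $n-2$), the recursion bottoms out on one of these according to the parity of $n$. For $n=3$ the good ordering $\tau_1$ starts at an anchor and, by Theorem~\ref{thm:end points of good SFS are opposite anchor}, its end-vertices $a,b$ are opposite anchors; the associated Robinson ordering from $a$ to $b$ has the single remaining vertex in the middle, forcing $\tau_1$ to coincide with it. The case $n=4$ is a short finite verification (and is where the extra prepared lemmas below are not yet available, so it must be checked by hand).

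For the inductive step assume $n\ge 5$. Let $a,b$ be the end-vertices of $\tau_1$; they are opposite anchors (Theorem~\ref{thm:end points of good SFS are opposite anchor}), and by the anchors flipping property (Theorem~\ref{thm:flipping theorem}) every later sweep $\tau_i$ still has $\{a,b\}$ as its end-vertices. From $\tau_3$ on, each sweep is a \emph{3-good} SFS ordering by construction. Set $S=V\setminus\{a,b\}$, so that $|S|=n-2$ and $A[S]$ is again Robinsonian (restricting a Robinson ordering to $S$ gives one of $A[S]$).

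The heart of the argument is to show that, once restricted to $S$, the sweeps from $\tau_3$ onward reproduce a genuine multisweep run of the algorithm on $A[S]$. First, Lemma~\ref{thm:3-good-sweep without endpoints} gives that $\rho_1:=\tau_3[S]$ is a good SFS ordering of $A[S]$. Defining $\rho_j:=\SFS_+(A[S],\rho_{j-1})$, I would then prove by induction on $j$ that $\rho_j=\tau_{j+2}[S]$: the inductive step is precisely Lemma~\ref{thm:3-good-sweep and successive sweep} applied to the 3-good ordering $\tau_{j+2}$, which states that restricting $\SFS_+(A,\tau_{j+2})$ to $S$ equals $\SFS_+(A[S],\tau_{j+2}[S])$. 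Matching indices, the induction hypothesis applied to $A[S]$ (of size $n-2$, started from the good ordering $\rho_1$) yields that $\rho_{(n-2)-2}=\rho_{n-4}=\tau_{n-2}[S]$ is a Robinson ordering of $A[S]$. Hence every triple of $\tau_{n-2}$ contained in $S$ is Robinson.

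It remains to handle the triples of $\tau_{n-2}$ that meet an end-vertex. Since $\tau_{n-2}$ is a 3-good SFS ordering whose global first and last elements are exactly the two elements of $\{a,b\}$, Lemma~\ref{thm:3-good-sweep basic fact not-Robinson triple}(ii) makes every triple whose first element is the start of $\tau_{n-2}$ Robinson, and part~(iii) makes every triple whose last element is the end of $\tau_{n-2}$ Robinson; together these cover all triples containing $a$ or $b$. Combined with the previous step, every triple of $\tau_{n-2}$ is Robinson, so $\tau_{n-2}$ is a Robinson ordering, closing the induction. I expect the main obstacle to be the exact bookkeeping of sweep indices: one must verify that the restricted run reaches a Robinson ordering \emph{precisely} at index $n-2$ (this is what pins down the $n-1$ sweep bound), and the two prepared lemmas — that the restriction stays good and that it commutes with $\SFS_+$ — are exactly what make this index-matching rigorous. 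The only other delicate point is the parity split forcing two separate base cases.
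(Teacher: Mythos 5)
Your proposal follows essentially the same route as the paper's proof: induction on $n$ via restriction to $S=V\setminus\{a,b\}$, using Lemma~\ref{thm:3-good-sweep without endpoints} to see that $\tau_3[S]$ is a good SFS ordering of $A[S]$, Lemma~\ref{thm:3-good-sweep and successive sweep} to commute restriction with $\SFS_+$ and match the sweep indices, and Lemma~\ref{thm:3-good-sweep basic fact not-Robinson triple}(ii)--(iii) for the triples meeting $a$ or $b$. The only loose end is the deferred $n=4$ base case, which does need its own short argument (there $\tau_{n-2}=\tau_2$ is not necessarily a 3-good ordering, so the lemmas you invoke are unavailable, exactly as you note) --- a point on which your version is, if anything, more careful than the paper, which runs its inductive step from $n=4$ without comment.
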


\begin{proof}
The proof is by induction on the size $n$ of $A$. For $n <3$ there is nothing to prove and for $n=3$ the result  holds trivially.
Hence, suppose $n \geq 4$. Then, by the induction assumption, we know that the following holds:
\begin{equation*}
\begin{array}{l}
\text{If }  \sigma_1 \text{ is a good SFS ordering of a Robinsonian matrix } A' \in \MS^k
\text{ with } k \leq n-1\\
\text{and  } \sigma_i = \text{SFS}_+(A',\sigma_{i-1}) \text{ for } i\ge 2
\text{, then } \sigma_{k-2}  \text{ is a Robinson ordering of } A'.
\end{array}
\end{equation*}
Suppose $\tau_1$ starts with $a$ and ends with $b$.
By Lemma \ref{thm:flipping theorem}, the end-vertices of any $\tau_i$ with $i \geq 2$ are $a$ and $b$ (flipped at every consecutive sweep).
For any $i\ge 3$, $\tau_i$ is a 3-good SFS ordering of $A$. 
Hence, setting $S=V\setminus \{a,b\}$, in view of  Lemma \ref{thm:3-good-sweep and successive sweep}, 
we obtain that $\tau_{i+1}[S] =\SFS_+(A[S],\tau_i[S])$ for each $i\ge 3$.
 
Consider the order $\sigma_1:=\tau_3[S]$ and the successive sweeps $\sigma_{i}=\SFS_+(A[S],\sigma_{i-1})$  ($i\ge 2$) returned by the multisweep algorithm applied to $A[S]$ starting from $\sigma_1$.

As $\tau_3$ is a 3-good SFS ordering of $A$, in view of Lemma~\ref{thm:3-good-sweep without endpoints} we know that  $\sigma_1$ is a good SFS ordering of $A[S]$.
Hence, using the induction assumption applied to $A[S]$ and $\sigma_1$, we can conclude that the sweep $\sigma_{|S|-2}=\sigma_{n-4}$ (returned by the multisweep algorithm applied to $A[S]$ with $\sigma_1$ as first sweep)
is a Robinson ordering of $A[S]$.

We now observe   that equality $\tau_{i+2}[S]=\sigma_i$ holds for all $i\ge 1$, using induction on $i\ge 1$. This is true for $i=1$ by the definition of $\sigma_1$.
Inductively, if $\tau_{i+2}[S]=\sigma_i$ then $\tau_{i+3}[S]= \SFS_+(A[S],\tau_{i+2}[S])= \SFS_+(A[S], \sigma_i)=\sigma_{i+1}$.
Hence, we can conclude that $\tau_{n-2}[S]=\sigma_{n-4}$ is a Robinson ordering of $A[S]$.

Finally, using  with Lemma \ref{thm:3-good-sweep basic fact not-Robinson triple}, we can conclude that all triples $(x,y,z)$ in $\tau_{n-2}$ that contain $a$ or $b$ are Robinson.
Therefore we have shown that $\tau_{n-2}$ is a Robinson ordering of $A$, which concludes the proof.
$\qquad$
\end{proof}

In other words, starting with a good SFS ordering of a Robinsonian matrix~$A\in~\mathcal S^n$, after at most $n-2$ sweeps we find a Robinson ordering of $A$.
Finally, we can now prove Theorem \ref{thm:final theorem}, since the last vertex of the first sweep $\sigma_0$ in Algorithm \ref{alg:Robinson_recognition} is an anchor of $A$ (Theorem \ref{thm:last vertex of SFS is an anchor}) and thus the second sweep $\sigma_1$ is a good SFS ordering.
Hence, if $A \in \MS^n$ is a Robinsonian matrix, in view of Theorem \ref{thm:good sweep converges in n-2}, the multisweep algorithm returns a Robinson ordering in at most $n-2$ sweeps starting from $\sigma_1$, and thus in at most $n-1$ sweeps counting also the initialization sweep $\sigma_0$.

\subsection{Worst case instances}\label{sec:5-worst case instances}

We present a class of $n \times n$ Robinsonian matrices, communicated to us by S. Tanigawa, for which the SFS multisweep algorithm (Algorithm~\ref{alg:Robinson_recognition}) needs $n-1$ sweeps to terminate.

\begin{definition}\label{def:worst matrix}
Let $A\in \MS^n$ be the Robinson matrix defined as follows:
\begin{align}
A_{1n}=0,\ A_{1i}=1 \quad & \text{ for } 2\le i\le n-1, \\
A_{2n}= 1,\ A_{in}=2 \quad & \text{ for } 3\le i\le n-1, \\
A_{ij}=A_{i-1,j+1} + 1 \quad & \text{ for } 2\le i<j\le n-1. \label{eqshift}
\end{align}
We will refer to (\ref{eqshift}) as the {\em shifting property}.
\end{definition}

We give below an example of such a matrix $A$ for $n=11$:

\begin{equation*}
A=
\bordermatrix{
~ & \textbf{1} & \textbf{2} & \textbf{3} & \textbf{4} & \textbf{5} & \textbf{6} & \textbf{7} & \textbf{8} & \textbf{9} & \! \! \textbf{10} & \!\!  \textbf{11}\cr
 \textbf{1}  & * & 1 & 1 & 1 & 1 & 1 & 1 & 1 & 1 & 1 & 0 \cr
 \textbf{2}  & & * & 2 & 2 & 2 & 2 & 2 & 2 & 2 & 1 & 1\cr
 \textbf{3}  & & & * & 3 & 3 & 3 & 3 & 3 & 2 & 2 & 2 \cr
 \textbf{4}  & & & & *  & 4 & 4 & 4 & 3 & 3 & 3 & 2\cr
 \textbf{5}  & & & & & * & 5 & 4 & 4 & 4 & 3 & 2\cr
 \textbf{6}  & & & & & & * & 5 & 5 & 4 & 3 & 2\cr
 \textbf{7}  & & & & & & & * & 5 & 4 & 3 & 2 \cr 
 \textbf{8}  & & & & & & & & * & 4 & 3 & 2\cr
 \textbf{9}  & & & & & & & & & *& 3 & 2 \cr
 \textbf{10}  & & & & & & & & & & * & 2\cr
 \textbf{11}  & & & & & & & & & & & * \cr
}.
\end{equation*}
Note that the matrix $A$ in Definition~\ref{def:worst matrix} is Robinson by construction, and therefore $(1,\dots,n)$ is a Robinson ordering of~$A$.
We consider the following ordering
\begin{equation}\label{eq:0}
\sigma_0=(2,3,\cdots,n,1),
\end{equation}
which can easily be checked to be a SFS ordering of $A$.  
We will consider  the SFS multisweep algorithm (Algorithm~\ref{alg:Robinson_recognition})
applied to the matrix~$A$ when taking  the ordering
$\sigma_0$ as initial ordering. 
As we show below,  the algorithm needs~$n-1$ sweeps to terminate.

\begin{theorem}\label{theoA}
Let $A$ be as in Definition~\ref{def:worst matrix},  let $\sigma_0=(2,\cdots,n,1)$ and let $\sigma_{i}=\SFS_+(A, \sigma_{i-1})$ for $1 \le i\le n-2$.
 Then the smallest index $j$ for which $\sigma_j$ is a Robinson ordering of $A$ is $j=n-2$. 
\end{theorem}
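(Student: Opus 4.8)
The strategy is to track the SFS$_+$ sweeps explicitly and show that each sweep "corrects" exactly one more position toward the Robinson ordering $(1,\dots,n)$, so that convergence cannot happen before sweep $n-2$. I would first analyze the similarity layer structure of $A$ rooted at the end-vertices of each sweep, exploiting the shifting property~(\ref{eqshift}). The key feature of this matrix is that it is almost ``generic'': the strict inequalities among the entries $A_{ij}$ force the similarity layers to be (mostly) singletons, so that by Lemma~\ref{thm:3-good-sweep basic fact not-Robinson triple}(i) any non-Robinson triple in a 3-good sweep must live inside a single layer, and singleton (or very small) layers leave almost no room for ties. This is what makes the flipping behavior of Theorem~\ref{thm:flipping layers theorem} and Corollary~\ref{thm:flipping layers for 3-good SFS} propagate the disorder slowly rather than all at once.

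Concretely, the plan is to prove by induction on $i$ an explicit formula for the sweep $\sigma_i$. I would conjecture (and verify on the $n=11$ example) that the sweeps interpolate between $\sigma_0=(2,3,\dots,n,1)$ and the Robinson order, with the two end-vertices $a,b$ being flipped at each step per Theorem~\ref{thm:flipping theorem}, while the ``defect'' from being Robinson shrinks by exactly one vertex per sweep. The induction would use Lemma~\ref{thm:3-good-sweep without endpoints} and Lemma~\ref{thm:3-good-sweep and successive sweep}: after deleting the two end-vertices $a,b$ and restricting to $S=V\setminus\{a,b\}$, the induced sweep $\tau[S]$ is again a good SFS ordering of the principal submatrix $A[S]$, and the multisweep process commutes with this restriction. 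Since $A[S]$ is again (up to relabeling) a worst-case matrix of size $n-2$, the induction hypothesis gives that the restricted process needs $n-4$ sweeps, and the two deleted endpoints account for the remaining two, recovering $n-2$.

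For the two directions of the claim I would argue as follows. That $\sigma_{n-2}$ \emph{is} Robinson is immediate from Theorem~\ref{thm:good sweep converges in n-2}, since $\sigma_0$ is a SFS ordering and hence $\sigma_1=\SFS_+(A,\sigma_0)$ is a good SFS ordering, so at most $n-2$ further sweeps suffice; one must only check that the count lines up so that $\sigma_{n-2}$ (rather than a later index) is reached. That no $\sigma_j$ with $j<n-2$ is Robinson is the substantive half: here I would exhibit, for each such $j$, a specific non-Robinson triple surviving in $\sigma_j$. The natural candidate is a triple of the form $(x,y,z)$ with $x<_{\sigma_j} y<_{\sigma_j} z$ but $x<_\pi z<_\pi y$ in the true Robinson order, located near the ``unconverged'' end of the sweep; by Lemma~\ref{thm:SFS+(a, pi) and reversed Robinson triple}(ii), once a sweep is Robinson the next is just its reverse, so it suffices to pinpoint one persistent inversion at each early stage.

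The main obstacle I anticipate is the explicit bookkeeping of the intermediate sweeps: unlike the clean three-sweep behavior of the binary case (Theorem~\ref{thm: our proof for 3sweep algorithm uig}), here the layers are not cliques and the $\SFS_+$ tie-breaking interacts delicately with the shifting property, so writing down a closed form for $\sigma_i$ and verifying the refinement step at each iteration is where the real work lies. I would manage this by phrasing the induction through the restriction lemmas (Lemmas~\ref{thm:3-good-sweep without endpoints} and~\ref{thm:3-good-sweep and successive sweep}) rather than by brute-force simulation, so that the combinatorial core reduces to: (a) identifying $a$ and $b$ as the end-vertices at each stage, and (b) checking that $A[S]$ retains the defining structure of Definition~\ref{def:worst matrix} after relabeling. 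Establishing (b) precisely — that deleting the first and last elements yields a scaled copy of the same family — is the crux, and the shifting property~(\ref{eqshift}) is exactly the invariance that should make this work.
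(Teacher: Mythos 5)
Your overall strategy --- an induction on $n$ that deletes the two end-vertices via Lemmas~\ref{thm:3-good-sweep without endpoints} and~\ref{thm:3-good-sweep and successive sweep} and uses the fact that $A[S]$ is again (up to adding a multiple of the all-ones matrix) a member of the family of Definition~\ref{def:worst matrix} --- is genuinely different from the paper's proof, which instead computes a closed-form expression for \emph{every} sweep $\sigma_i$ (Lemma~\ref{theosweep}) by directly simulating the partition refinement with the inequalities (\ref{eqP1}), (\ref{eqP4}), (\ref{eqP5}), and then reads off both halves of the claim from these formulas (the lower bound via $\sigma_{j+1}\ne\sigma_j^{-1}$ and Lemma~\ref{thm:SFS+(a, pi) and reversed Robinson triple}(ii)). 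Your observation that the shifting property~(\ref{eqshift}) makes $A[V\setminus\{1,n\}]$ a shifted copy of the size-$(n-2)$ family member is correct and is an attractive structural insight the paper does not exploit. However, as written the plan has concrete gaps. First, your structural intuition is backwards: the layers rooted at the anchor $1$ are $L_0=\{1\}$, $L_1=\{2,\dots,n-1\}$, $L_2=\{n\}$, so far from being ``mostly singletons'' there is one layer of size $n-2$; indeed the remark after Lemma~\ref{thm:3-good-sweep basic fact not-Robinson triple} shows that small layers would force termination in at most four sweeps, so large layers are precisely what makes this family slow.

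Second, the induction does not close in the form you state it. Lemma~\ref{thm:3-good-sweep and successive sweep} tells you that $\tau_{i+2}[S]$ tracks the restricted multisweep process started from $\tau_3[S]$, but $\tau_3[S]$ is \emph{not} the relabelled initial ordering $(2,\dots,n-2,1)$ of the smaller instance (one computes that it corresponds to the smaller instance's sweep $\sigma'_1$, not $\sigma'_0$), so the induction hypothesis as stated in Theorem~\ref{theoA} does not apply to the restricted process; you must reformulate it as a statement about the first Robinson sweep starting from a specific \emph{good} ordering, and identifying that ordering already forces you to compute $\sigma_1,\sigma_2,\sigma_3$ explicitly --- which is most of the bookkeeping you hoped to avoid. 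Third, the lower bound (no $\sigma_j$ with $j<n-2$ is Robinson) is only gestured at: to ``exhibit a persistent non-Robinson triple'' or to verify $\sigma_{j+1}\ne\sigma_j^{-1}$ you need explicit knowledge of the sweeps, and the strengthened induction hypothesis must assert that the \emph{smallest} Robinson index of the restricted process is $n-4$ (not merely that index $n-4$ works), together with the observation that a non-Robinson triple inside $S$ remains non-Robinson in the full order. None of these steps is carried out, so the proposal is a promising alternative outline rather than a proof.
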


We first group  properties of the matrix $A$ 
needed for the proof of Theorem~\ref{theoA}.

\begin{lemma}
The following relations hold for the matrix  $A$ from Definition~\ref{def:worst matrix}:
\begin{align}
A_{j,j+1}=\cdots = A_{j,n-j} > A_{j,n-j+1}\ \   \text{ for  } 1\le j\le (n-1)/2, 
\label{eqP1}\\
A_{j,n-j+1}<A_{j+1,n-j+1}< A_{j+2,n-j+1} = \ignore{A_{j+3,n-j+1}=} \cdots = A_{n-j,n-j+1} \ \ \text{ for } 1\le j\le n/2, 
 \label{eqP4}\\
A_{j,n-j+3}>A_{j,n-j+4} \ \ \text{ for } 5\le j\le (n+2)/2. 
 \label{eqP5}
\end{align}
\end{lemma}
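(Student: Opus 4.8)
The plan is to reduce all three relations to a single closed-form formula for the entries of $A$, after which each relation becomes a one-line substitution. The crucial structural remark is that the shifting property (\ref{eqshift}) moves an entry along the anti-diagonal $\{(i,j):i+j=\text{const}\}$: passing from $A_{ij}$ to $A_{i-1,j+1}$ keeps $i+j$ fixed while decreasing the value by $1$. Thus, to evaluate an interior entry $A_{ij}$ with $2\le i<j\le n-1$, I would unroll the recursion as $A_{ij}=A_{i-k,j+k}+k$ and iterate until the pair $(i-k,j+k)$ first hits a boundary of the admissible region, i.e. until the row index reaches $1$ or the column index reaches $n$. The recursion stays valid for exactly $K=\min\{i-1,\,n-j\}$ steps, after which the landing entry $(i-K,j+K)$ lies either in the first row or in the last column, where its value is prescribed explicitly by Definition~\ref{def:worst matrix}.

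Carrying out this bookkeeping and substituting the boundary values $A_{1c}$ (equal to $1$ for $c\le n-1$ and $0$ for $c=n$) and $A_{rn}$ (equal to $1$ for $r=2$ and $2$ for $3\le r\le n-1$) gives the unified closed form
\begin{equation*}
A_{ij}=
\begin{cases}
i & \text{if } i+j\le n,\\
i-1 & \text{if } i+j=n+1,\\
n-j+1 & \text{if } i+j=n+2,\\
n-j+2 & \text{if } i+j\ge n+3,
\end{cases}
\end{equation*}
which I would prove to hold for all pairs $1\le i<j\le n$. One checks directly that the formula also reproduces the boundary data (for $i=1$ and for $j=n$), so no separate treatment of boundary entries is needed once the formula is established.

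With this formula in hand, the three relations follow immediately. For (\ref{eqP1}), fixing the row $j$ and letting the column $c$ run over $j+1,\dots,n-j$, each sum $j+c$ lies in $\{2j+1,\dots,n\}$, which is $\le n$ precisely because $j\le(n-1)/2$; hence $A_{j,c}=j$ throughout, while at $c=n-j+1$ the sum equals $n+1$ and the value drops to $j-1$. For (\ref{eqP4}), fixing the column $n-j+1$ and letting the row $i$ increase from $j$ to $n-j$, the sum $i+(n-j+1)$ takes the values $n+1,\,n+2,\,n+3,\dots$, so the entries read $j-1,\,j,\,j+1,\,j+1,\dots,j+1$, which is exactly the stated strict-then-constant chain. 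For (\ref{eqP5}), fixing the row $j$, the columns $n-j+3$ and $n-j+4$ produce sums $n+3$ and $n+4$, both in the regime $i+j\ge n+3$, so the values are $j-1$ and $j-2$, giving the strict inequality.

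The main obstacle is the two-sided boundary bookkeeping in the unrolling: the recursion may terminate against the first row or against the last column (with the corner case $i+j=n+1$ landing exactly on $(1,n)$), and one must track which boundary is reached as a function of $i+j$ relative to $n$. This is precisely what produces the four regimes of the closed form, and the transition case $i+j=n+2$ (landing on $(2,n)$, where $A_{2n}=1$ differs from the generic $A_{rn}=2$) is the delicate point that must be handled separately. Once the closed form is verified I would also check that the index ranges stated in each relation keep all entries within $1\le i<j\le n$, so that the formula applies; this is a routine inequality check (e.g. $n-j+3\le n-1$ forces $j\ge 4$, consistent with the hypothesis $j\ge 5$ in (\ref{eqP5})).
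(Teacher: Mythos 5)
Your proposal is correct, but it takes a genuinely different route from the paper. The paper's own proof is a two-line induction on $j$ \emph{within each relation}: it checks (\ref{eqP1}) and (\ref{eqP4}) at $j=1$ and (\ref{eqP5}) at $j=5$, and then observes that the shifting property (\ref{eqshift}) maps every entry appearing in the relation at level $j$ to the corresponding entry at level $j+1$ (shifted one step along the anti-diagonal, with all values increased by $1$), so each chain of equalities and strict inequalities is preserved verbatim. You instead unroll the recursion globally to obtain a closed form for $A_{ij}$ in terms of $i+j$, and then read off all three relations by substitution. Your closed form is correct — I verified it against the recursion (the four regimes correspond to the unrolled path terminating in the first row with $A_{1,c}=1$, on the corner $A_{1n}=0$, on the exceptional boundary entry $A_{2n}=1$, and on the generic last-column entries $A_{rn}=2$), and it reproduces the $n=11$ example — and the three substitutions give exactly the stated chains $j=\cdots=j>j-1$, $j-1<j<j+1=\cdots=j+1$, and $j-1>j-2$. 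What the paper's approach buys is brevity: it never needs an explicit formula, only the invariance of each relation under the anti-diagonal shift. What your approach buys is a complete description of the matrix, which makes all three relations (and any further ones one might need, e.g.\ the slight slack in the range of (\ref{eqP5}), which as you note actually holds already for $j\ge 4$) immediate; the price is the careful two-sided boundary bookkeeping, which you correctly identify and handle, in particular the delicate case $i+j=n+2$ landing on $A_{2n}=1$.
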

\begin{proof}
Relations (\ref{eqP1}) and (\ref{eqP4}) hold for $j=1$ 
 and  we use the shifting property~(\ref{eqshift}) to get the general case.
Analogously for relation (\ref{eqP5}), since it holds for~$j=5$.
$\qquad$
\end{proof}

As  key ingredient for proving Theorem \ref{theoA} we give the explicit description of the successive orderings $\sigma_1,\cdots, \sigma_{n-2}$ returned by  the multisweep algorithm.

\begin{lemma}\label{theosweep}
Let $A$ be as in Definition~\ref{def:worst matrix} and let $\sigma_0=(2,\cdots,n,1)$.  Then the  successive orderings $\sigma_{i}=\SFS_+(A,\sigma_{i-1})$  for $1\le i\le n-2$, returned by the multisweep algorithm applied to the matrix $A$,  have the  form
\begin{equation}\label{eqeven}
\sigma_{2k}= ((n,n-1,\cdots, n-k+1), (k+2,\cdots, n-k-1, n-k),(k+1,\cdots, 1)),
\end{equation}
for even order  $2k$ and the form 
\begin{equation}\label{eqodd}
\sigma_{2k+1}=((1,\cdots,k+1),(n-k-1,\cdots, k+2), (n-k,n-k+1,\cdots, n-1,n)),
\end{equation}
for odd order $2k+1$.
\end{lemma}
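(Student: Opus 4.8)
The plan is to establish the two formulas (\ref{eqeven}) and (\ref{eqodd}) simultaneously by induction on $i$, directly simulating the partition refinement carried out by $\SFS_+$. The base case is $i=0$: by (\ref{eq:0}) the starting order $\sigma_0=(2,\dots,n,1)$ is precisely the right-hand side of (\ref{eqeven}) for $k=0$, since then the first block $(n,\dots,n-k+1)$ is empty, the middle block is $(2,\dots,n)$, and the last block is $(1)$. For the inductive step I would treat two transitions separately: passing from the even form (\ref{eqeven}) to the odd form (\ref{eqodd}) by the sweep $\sigma_{2k}\mapsto\sigma_{2k+1}$, and from the odd form to the even form with $k$ increased by one by the sweep $\sigma_{2k+1}\mapsto\sigma_{2k+2}$. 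Since $\sigma_{2k}$ ends at $1$ and $\sigma_{2k+1}$ ends at $n$, Theorem~\ref{thm:flipping theorem} already guarantees that consecutive sweeps flip the anchors $1$ and $n$, which is consistent with the two target forms and serves as a useful sanity check throughout.

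The heart of each step is to run $\SFS_+(A,\sigma_i)$ while maintaining explicitly the queue, i.e. the ordered partition of unvisited vertices produced by the Refine operation of Definition~\ref{def:refine}, recalling that the pivot is always the vertex of the leading class (the slice) appearing last in the tie-breaking order $\sigma_i$. Consider the transition $\sigma_{2k}\mapsto\sigma_{2k+1}$. I would first prove, by a secondary induction on $j$ with $0\le j\le k$, that after the pivots $1,2,\dots,j$ have been selected the queue has the shape $(\{j+1,\dots,n-j\},\{n-j+1\},\dots,\{n\})$. Two facts drive this. First, relation (\ref{eqP1}) applied at index $j+1$ says that $A_{j+1,\ell}$ is constant and maximal for $\ell\in\{j+2,\dots,n-j-1\}$ and strictly drops at $\ell=n-j$; hence refining the leading slice by the similarity partition of $N(j+1)$ (Definition~\ref{def:layer partition}) peels off exactly the vertex $n-j$ into a fresh singleton placed just behind the slice, reproducing the same shape with $j$ replaced by $j+1$. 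Second, the tie-break selects $j+1$: among the slice $\{j+1,\dots,n-j\}$ the indices $\{j+1,\dots,k+1\}$ lie in the final block $(k+1,\dots,1)$ of $\sigma_{2k}$, and $j+1$ is the last of them. This accounts for the increasing front block $(1,\dots,k+1)$ of (\ref{eqodd}), and simultaneously shows that the vertices $n,n-1,\dots,n-k$ are peeled, in increasing index order, into the trailing singletons.

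Next I would analyse the transition at $j=k+1$, where the slice has become $\{k+2,\dots,n-k-1\}$. These indices now lie in the increasing middle block $(k+2,\dots,n-k)$ of $\sigma_{2k}$, so the tie-break picks the largest, $n-k-1$. Iterating, relations (\ref{eqP4}) and (\ref{eqP5}) control the similarity partitions of $N(n-k-1),N(n-k-2),\dots$, which successively trim the active slice from the bottom while the split-off vertices queue up in decreasing index order; one checks that the pivots therefore emerge as the decreasing run $n-k-1,n-k-2,\dots,k+2$, giving the middle block of (\ref{eqodd}). Once this run is exhausted, only the trailing singletons $n-k,n-k+1,\dots,n$ remain, and they are visited in increasing order, producing the tail block of (\ref{eqodd}). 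The transition $\sigma_{2k+1}\mapsto\sigma_{2k+2}$ is handled by an entirely analogous computation rooted at the other anchor $n$ (now the first pivot), with the roles of small and large indices interchanged, building the decreasing front block $(n,\dots,n-k)$, the increasing middle block and the decreasing tail of (\ref{eqeven}) for $k+1$.

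The main obstacle is purely the bookkeeping: one must track how every class of the queue splits under each similarity partition and verify, at each of the $n$ pivot selections, that the tie-break rule really selects the claimed vertex, all of this uniformly in $k$. This is delicate because the ranges of validity of (\ref{eqP1}), (\ref{eqP4}) and (\ref{eqP5}) depend on whether an index sits below or above $n/2$, so the middle indices and the four boundary vertices $1,2,n-1,n$ --- where the defining entries of $A$ in Definition~\ref{def:worst matrix} take exceptional values --- require separate attention, as do the extreme values of $k$ for which some of the three blocks degenerate to being empty or a singleton. I expect the cleanest write-up to isolate the explicit ``shape of the queue after $j$ pivots'' as an invariant maintained by a secondary induction, thereby reducing the whole sweep to checking one refinement and one tie-break at a time.
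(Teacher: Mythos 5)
Your proposal follows essentially the same route as the paper's proof: an induction on the sweep index with the two parity transitions treated separately, each driven by a secondary induction that maintains the explicit shape of the queue after $j$ pivots and verifies the tie-break and the splits via relations (\ref{eqP1}), (\ref{eqP4}) and (\ref{eqP5}). The only minor discrepancy is that (\ref{eqP5}) is actually needed in the odd-to-even transition (to separate the pair $\{n-k-1,n-k\}$, which initially lands in a common block), not in the even-to-odd one, but this does not affect the soundness of the plan.
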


\begin{proof}
We show that the successive sweeps have the desired form using induction on the order  of the sweep.
In a first step,  let  $0\le k\le (n-3)/2$ and assume that $\sigma_{2k}$ has the form (\ref{eqeven}), i.e., 
$$\sigma_{2k}= (\underbrace{(n,n-1,\cdots, n-k+1)}_{\tau}, \underbrace{(k+2,\cdots, n-k-1)}_{\pi}, (n-k),\underbrace{(k+1,\cdots, 1)}_{\psi});$$
we show that $\sigma_{2k+1}=SFS_+(A,\sigma_{2k})$ has the form (\ref{eqodd}), i.e., 
$$\sigma_{2k+1}=(\underbrace{(1,\cdots,k+1)}_{\psi^{-1}},\underbrace{(n-k-1,\cdots, k+2)}_{\pi^{-1}}, (n-k),\underbrace{(n-k+1,\cdots, n)}_{\tau^{-1}}).$$
First we claim that for any $1\le j\le k+1$  the $j$th pivot is $p_j=j$, with corresponding  ordered partition:
$$((1,2,\cdots,j), \underbrace{\{j+1,\cdots, n-j\}, (n-j+1, \cdots, n)}_{\phi(j)}).$$
Recall from Section~\ref{sec:3-description algorithm} that, given a pivot $p_j$ at the current iteration of the SFS algorithm, we let~$\phi(p_j)$ denote the queue  of unvisited nodes induced by~$p_j$.
Hence, the set $\{j+1,\cdots, n-j\}$ represents the current slice, i.e., the first block of $\phi(p_j)$, whose elements are not yet ordered. 
The claim is  true for $j=1$ (easy to see). 
Assume this is true for $j\le k$, we show this also holds for $j+1$. Indeed the next pivot is $j+1$
(the vertex in the slice $ \{j+1,\cdots, n-j\}$ appearing last in $\sigma_{2k}$), 
which splits the vertex $n-j$ from the rest of the slice, leading to the new slice
$\{j+2,\cdots, n-j-1\}$ (since 
$A_{j+1,j+2}=\cdots = A_{j+1,n-j-1}>A_{j+1,n-j}$ by (\ref{eqP1})). Hence, the ordered partition becomes: 
$$((1,2,\cdots,j,j+1), \underbrace{\{j+2,\cdots, n-j-1\}, (n-j, \cdots, n)}_{\phi(j+1)}).$$
Hence, after the selection of the first $k+1$ pivots, the ordered partition is
$$((1,\cdots,k+1) \{k+2,\cdots, n-k-1\},(n-k,\cdots,n))=(\psi^{-1}, \{k+2,\cdots, n-k-1\}, (n-k), \tau^{-1}).$$
The next pivot is $n-k-1$ (the vertex in the slice appearing last in $\sigma_{2k}$).
 Using (\ref{eqP4}) (applied to $j=k+2$) we know that
$A_{n-k-1,k+2}<A_{n-k-1,k+3}<A_{n-k-1,k+4}=\cdots = A_{n-k-1,n-k-2}$, so that the ordered partition becomes
$$(\psi^{-1}, (n-k-1), \{k+4,\cdots, n-k-2\}, (k+3,k+2), (n-k), \tau^{-1}).$$
The next pivot is $n-k-2$. As  $A_{n-k-2,k+4}<A_{n-k-2,k+5}=\cdots=A_{n-k-2,n-k-3}$ 
(using again (\ref{eqP4}), now applied to $j=k+3$), the next ordered partition is
$$(\psi^{-1}, (n-k-1, n-k-2), \{k+5,\cdots, n-k-3\}, (k+4, k+3,k+2), (n-k), \tau^{-1}).$$
Iterating this process one can easily see that  the ordering returned by   the $\SFS_+$ algorithm has indeed the form (\ref{eqodd}).
\ignore{
Using~(\ref{eqP1}) and the tie-breaking rule of SFS$_{+}$, we yield the following similarity partition after the $k$-th pivot is chosen:
$$\left((1,2\cdots,k)\{k+1,\cdots,n-k\},(n-k+1,\cdots,n) \right) \ = \ (\psi^{-1}, \{k+1,\cdots,n-k\},  \tau^{-1}).$$
After $k$, we will choose as next pivot $n-k$, because it appear last in $\sigma_{2k}$ with respect to the other vertices in the slice $\{k+1,\cdots,n-k\}$. 
In view of~(\ref{eqP4}) applied to $j=k+1$, we have that 
$A_{n-k-1,n-k}=\cdots=A_{k+3,n-k}>A_{k+2,n-k}>A_{k+1,n-k}$. 
Hence when $n-k$ becomes a pivot, the elements $k+2$ and $k+1$ are split, with $k+2$ placed before $k+1$ in the queue of unvisited vertices. So the next ordered partition  is
$$(\psi^{-1}, (n-k), \{k+3,\cdots,n-k-1\},(k+2),(k+1),\tau^{-1}).$$
The next pivot is $n-k-1$, and $k+3$ is split from the slice, since by (\ref{eqP4}) applied to $j=k+2$ we have that 
$A_{n-k-2,n-k-1}=\cdots=A_{k+4,n-k-1}>A_{k+3,n-k-1}$. Hence we get next
$$(\psi^{-1}, (n-k),(n-k-1), \{k+4,\cdots,n-k-2\},(k+3), (k+2),(k+1),\tau^{-1}).$$
Iterating, at each next step we have that the last vertex in the slice becomes the next pivot, while the first vertex is split from the slice. 
Therefore we obtain that $\sigma_{2k+1}$ has indeed the form
$(\psi^{-1}, \pi^{-1}, (n-k+1), \tau^{-1})$.
}

In a second step, let $1\le k\le (n-2)/2$ and assume  $\sigma_{2k-1}$ is as  in (\ref{eqodd}) (after shifting indices), i.e., 
$$\sigma_{2k-1}=\underbrace{(1,\cdots,k)}_{\tau},\underbrace{ (n-k,\cdots, k+2)}_{\pi}, (k+1), \underbrace{(n-k+1,\cdots, n)}_{\psi}.$$
We show that $\SFS_+(A,\sigma_{2k-1})$ has the form (\ref{eqeven}), i.e., that it is equal to
$$\sigma_{2k}=(\underbrace{(n,\cdots,n-k+1)}_{\psi^{-1}}, \underbrace{(k+2,\cdots, n-k)}_{\pi^{-1}},(k+1), \underbrace{(k,\cdots, 1)}_{\tau^{-1}}.$$
First we claim that for $j\le k$ the $j$th pivot is $p_j=n-j+1$ with ordered  partition
$$((n,n-1,n-j+1), \underbrace{\{n-j,\cdots, j+2\}, (j+1,j,\cdots, 2,1)}_{\phi(n-j+1)}) .$$
This is true for $j=1$, because $n$  appears last in $\sigma_{2k-1}$ and $A_{n,1}<A_{n,2}<A_{n,3}=\cdots=A_{n,n-1}$. Assume this is true for some $j\le k-1$, we show this also holds for $j+1$. Indeed the next pivot is $n-j$. Moreover,  by (\ref{eqP4}), 
$A_{n-j,n-j-1}=\cdots =A_{n-j,j+3}>A_{n-j,j+2}$. 
Hence the new pivot $n-j$ splits the element $j+2$ from the rest of the slice, and the next ordered partition has the claimed form.
Hence, after $k$ steps, we have the following ordered partition:
$$(\psi^{-1},  \{k+2,\cdots, n-k\}, (k+1), \tau^{-1}).$$
Remains to show that the current slice $\{k+2,\cdots, n-k\}$ gets reordered as $\pi^{-1}$ in the next steps.
The next pivot is $k+2$.
By (\ref{eqP1}) (applied to $j=k+2$), $A_{k+2,k+3}= A_{k+2,k+4}=\cdots= A_{k+2,n-k-2} > A_{k+2,n-k-1}\ge A_{k+2,n-k}$.
Hence the two elements $n-k-1$ and $n-k$ are split by $k+2$ but as we cannot yet decide on their relative order they are both  placed in the same block  after the new slice  in the queue of unvisited vertices.
(Note indeed that, e.g.,  if $k=1$, then $A_{3,n-1}=A_{3,n-2}=2$.)
So we get the ordered partition:
$$ (\psi^{-1}, (k+2), \{k+3,\cdots, n-k-2\}, \{n-k-1,n-k\}, (k+1), \tau^{-1}).$$
The next pivot is $k+3$.
By (\ref{eqP1}) (applied to $j=k+3$),  $A_{k+3,k+4}=\cdots= A_{k+3,n-k-3}>A_{k+3,n-k-2}$. 
Hence  $k+3$ splits $n-k-2$ from the rest of the slice and the next ordered partition is
$$ (\psi^{-1}, (k+2,k+3),\{k+4,\cdots,n-k-3\}, (n-k-2) \{n-k-1,n-k\},\tau^{-1}).$$
The next pivot is $k+4$, which splits $n-k-3$ from the rest of the slice (using again~(\ref{eqP1})).
Moreover, by (\ref{eqP5}) (applied to $j=k+4$), $A_{k+4,n-k-1}>A_{k+4,n-k}$ and thus 
 the two elements $n-k-1$ and $n-k$ get ordered with $n-k-1$ coming before~$n-k$. So the ordered partition becomes
 $$ (\psi^{-1}, (k+2, k+3, k+4),\{k+5,\cdots,n-k-4\}, (n-k-3,n-k-2,n-k-1,n-k),\tau^{-1}).$$
Iterating one can easily conclude that the final ordering returned by the $\SFS_+$ algorithm indeed has the form (\ref{eqeven}). 
\end{proof}

We can now prove Theorem \ref{theoA}.

\begin{proof} {\em of Theorem \ref{theoA}.}
Using Lemma~\ref{theosweep}, we find that
$\sigma_{n-2}=(n, n-1,\cdots,1)$ for even $n$ and 
$\sigma_{n-2}=(1,2,\cdots,n)$ for odd $n$. Hence 
$\sigma_{n-2}$ is a Robinson ordering of~$A$ in both cases. 
Furthermore, observe that  $\sigma_{2k}\ne \sigma_{2k-1}^{-1}$ if $2k-1\le n-3$ (because~$n-k$ comes before $k+1$ in both $\sigma_{2k-1}$ and $\sigma_{2k}$), and $\sigma_{2k+1}\ne \sigma_{2k}^{-1}$ if $2k\le n-3$ (because $k+2$ comes before $n-k$ in both 
$\sigma_{2k+1}$ and $\sigma_{2k}$). 
Therefore, $\sigma_1,\cdots, \sigma_{n-3}$ cannot be Robinson orderings of $A$  in view of Lemma~\ref{thm:SFS+(a, pi) and reversed Robinson triple}.
This implies that the first index $j$ for which~$\sigma_j$ is Robinson is indeed $j=n-2$, which concludes the proof.
\end{proof}

It is important to remark that, for the class of matrices from Definition~\ref{def:worst matrix}, the fact that $n-1$ sweeps are required depends strongly on the choice of the initial ordering~$\sigma_0$ from (\ref{eq:0}).

\section{Complexity} \label{sec:6-complexity}

In this section we discuss the complexity of the SFS algorithm. 
Throughout we  assume that $A \in \MS^n$ is a nonnegative symmetric matrix,   given as  adjacency list of an undirected weighted graph $G=(V=[n],E)$. So $G$ is the support graph of $A$, whose edges are the pairs $\{x,y\}$ such that  $A_{xy}>0$ with edge weight $A_{xy}$, and $N(x)=\{y\in V:A_{xy}>0\}$ is the neighborhood of $x\in V$.
We assume that  each vertex $x \in V=[n]$ is linked to the list of vertex/weight pairs $(y,A_{xy})$ for its  neighbors $y\in N(x)$ and 
we let $m$ denote the number of nonzero entries of $A$.

\medskip
\begin{theorem}\label{thm:complexity_SFS}
The SFS algorithm (Algorithm \ref{alg:SFS}) applied to an $n\times n$  symmetric  nonnegative matrix  with $m$ nonzero entries runs in $O(n+m \log n)$ time.
\end{theorem}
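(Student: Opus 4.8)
The plan is to show that, apart from an initial sorting phase, the SFS algorithm is just the standard partition-refinement routine of \cite{Paige87,Habib00} carried out with a multiway (rather than binary) split, and that each refinement can be performed in time linear in the degree of the current pivot. The only place where SFS pays more than Lex-BFS is in having to order, for each pivot $p$, its neighbors by decreasing similarity $A_{px}$; this is precisely what produces the extra $\log n$ factor.

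First I would fix the data structures. Following \cite{Habib00}, the ordered partition $\phi$ is stored as a doubly linked list of classes, each class being a doubly linked list of its vertices, with every vertex carrying a pointer to the class containing it. With these pointers, testing membership, deleting a vertex, creating a new class, and splicing it into $\phi$ all cost $O(1)$. As a preprocessing step I would sort, once and for all, each adjacency list so that the neighbors of every $p\in V$ appear by decreasing weight $A_{px}$; this costs $\sum_{p}O(\deg(p)\log\deg(p))$, and since $\deg(p)\le n$ and $\sum_p\deg(p)=O(m)$, the total preprocessing time is $O(m\log n)$. After this, at the moment $p$ becomes the pivot the similarity partition $\psi_p$ of Definition~\ref{def:layer partition} is available implicitly: scanning the sorted neighbor list of $p$ already encounters its classes $C_1,\dots,C_s$ in decreasing order of value, so no further sorting is needed inside the main loop.

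The key step is to argue that one refinement (Definition~\ref{def:refine}) costs $O(\deg(p))$. I would scan the sorted list of $p$'s still-unvisited neighbors in order of decreasing weight. For each such neighbor $w$ I locate its current class $B$ in $O(1)$ via the vertex-to-class pointer; the first time a class $B$ is touched I create, just before $B$, a new (initially empty) sub-class and record in auxiliary fields of $B$ both this sub-class and the value $A_{pw}$ that opened it; thereafter, when another neighbor $w'$ in the same $B$ is met, I either append it to the current sub-class (if $A_{pw'}$ equals the recorded value) or, since the scan is monotone in weight, open a fresh sub-class immediately after the previous one (if $A_{pw'}$ is strictly smaller). Because the global scan is decreasing in weight, the sub-classes produced inside each $B$ come out exactly in the order $B\cap C_1,\dots,B\cap C_s$, all inserted before the untouched remainder $B\setminus N(p)$, which is precisely $\mathrm{Refine}(\phi,\psi_p)$. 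Moving $w$ out of $B$ into its sub-class, recording each touched class so that its auxiliary fields can be reset by a final pass over $N(p)$, and discarding empty classes are all $O(1)$ per neighbor, so the whole refinement is $O(\deg(p))$. Pivot selection (any vertex of the first class) and deletion of $p$ are $O(1)$, so iteration $i$ costs $O(\deg(p_i))$.

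Summing over the $n$ iterations, the main loop runs in $O(n+\sum_p\deg(p))=O(n+m)$, and adding the $O(m\log n)$ preprocessing gives the claimed $O(n+m\log n)$ bound. The step I expect to require the most care is the linear-time multiway refinement: one must ensure that only the classes actually met by $N(p)$ are touched (so the cost is $O(\deg(p))$ and not $O(|\phi|)$), and that the auxiliary ``current sub-class / current value'' fields attached to each touched class are correctly reset between pivots, so that a class split at one iteration is not mistakenly merged with a stale sub-class at a later one.
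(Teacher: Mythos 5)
Your proposal is correct and follows essentially the same route as the paper: the same linked-list-of-classes data structure with vertex-to-class pointers, the same $O(1)$ pivot selection, a multiway refinement costing $O(\deg(p))$ per pivot, and the $\log n$ factor coming entirely from sorting neighborhoods by similarity. The only organizational difference is that you sort all adjacency lists once in an $O(m\log n)$ preprocessing phase so that the main loop is linear, whereas the paper sorts $N_\phi(p)$ inside the loop at each pivot (and additionally re-sorts each class of $\psi_p$ by an initial order $\tau$, which is irrelevant for this theorem but is what makes the same analysis carry over to $\SFS_+$, where ties must be broken by the previous sweep).
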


\begin{proof}
As in \cite{Corneil04} For Lex-BFS, we may assume that we are given an initial order~$\tau$ of $V$ and that the vertices and their neighborhoods  are ordered according to  $\tau$ (in increasing order). This assumption is useful also for the discussion of the implementation of $\SFS_+$.

In order to run Algorithm \ref{alg:SFS}, we need to update the queue $\phi$ consisting of the unvisited vertices  at each iteration. 
The update consists in computing the similarity partition $\psi_p$  with respect to the current pivot $p$ and then refining $\phi$ by $\psi_p$. 

To maintain the priority among the unvisited vertices, the queue $\phi=(B_1,\dots,B_p)$ is stored in a linked list, whose elements are the classes  $B_1,\ldots,B_p$.
Moreover each vertex has a pointer to the class $B_i$ containing it and a pointer to its position in the class, which are updated throughout the algorithm.
This data structure permits constant time insertion and deletion of a vertex in $\phi$.

Initially,  the queue $\phi$ has only one class, namely the full set $V$.
At an  iteration of  Algorithm \ref{alg:SFS}, there are three main tasks to be performed: choose the next pivot, compute the similarity partition $\psi_p$ and refine the queue $\phi$ by $\psi_p$. 
\begin{enumerate}[(1)]
\item Choose the new pivot $p=p_i$. Since in Algorithm \ref{alg:SFS} the choice of the new pivot is arbitrary in case of ties, we will choose the first vertex of the first block in $\phi$. This operation can be done in constant time.
We then remove $p$ from the queue $\phi$  of unvisited vertices  and we update the queue $\phi$ by deleting $p$ from the class~$B_1$.

\item Compute the similarity partition $\psi_p=(C_1,\ldots,C_s)$ of the set $N_\phi(p)$
with respect to $p=p_i$ (as defined in Definition \ref{def:layer partition}).
Here $N_\phi(p)=N(p)\cap \phi$ denotes the set of unvisited vertices in the neighborhood $N(p)$ of $p$.
First we order the vertices~$y$ in~$N_\phi(p)$ for nonincreasing values of their similarities $A_{py}$ with respect  to $p$,
which  can be done in  in $O(|N_\phi(p)| \log |N_\phi(p)|)$ time using a sorting algorithm. 
Then we  create the similarity partition $\psi_p=(C_1,\ldots,C_s)$  simply by passing through the  elements in $N_\phi(p)$  in the order of nonincreasing similarities to $p$ which has just been found.
This task can be done in $O(|N_\phi(p)|)$ time. 
Finally we order the elements in each class $C_j$ (increasingly) according to $\tau$, which can be done in $O(|N_\phi(p)| \log |N_\phi(p)|)$.
So we have constructed  the ordered partition $\psi_p=(C_1,\dots,C_s)$ of $N_{\phi}(p)$ as a linked list, where  all classes  of $\psi_p$ are ordered according to $\tau$.
To conclude, the overall complexity of this second task is bounded by $O(|N_\phi(p)| \log |N_\phi(p)|)$.

\item The last task is to refine $\phi=(B_1,\dots, B_p)$ by $\psi_p=(C_1,\dots,C_s)$ (as defined in Definition \ref{def:refine}).
In order to obtain the new queue of unvisited vertices we proceed as follows:  
starting from $j=1$, for each class  $C_j$ of $\psi$,  we simply remove each vertex of $C_j$ from its corresponding class (say) $B_i$  in $\phi$  and we place it in a new class  $B'_i$ which we position immediately before $B_i$ in $\phi$.
Since both $C_j$ and $B_i$ are ordered according to $\tau$, the initial order $\tau$ in the new block $B'_i$ is preserved.
Using the above described data structure, such tasks can be performed in $O(|C_j|)$. Once a vertex is relocated in $\phi$, its pointers to the corresponding block and position in $\phi$ are updated accordingly.
Hence this last task  can be performed in time $O(\sum_{j=1}^{s}{|C_j|})=O(|N_\phi(p)|)$. 
\end{enumerate}

\smallskip Recall that at iteration $i$ we set $p=p_i$.
Then the complexity at the $i$th  iteration is  $O(1+|N_\phi(p_i)| \log |N_\phi(p_i)|)$.
Since we repeat the above three tasks for each vertex, then the  overall complexity  of Algorithm \ref{alg:SFS} is 
$O(\sum_{i=1}^n{(1+N_\phi(p_i)| \log |N_\phi(p_i)|}))=
O(n+m \log n)$.
$\qquad$
\end{proof}

Using the same data structure as above, we can show that  the SFS$_+$ algorithm can be implemented in the same running time as the SFS algorithm.
In fact, the only difference between the SFS algorithm and the $\SFS_+$ algorithm lies in the tie-breaking rule.
In the $\SFS_+$ algorithm, in case of ties we choose as next pivot the vertex in the slice appearing last in the given order $\sigma$.
Recall we assumed $V$ to be initially ordered according to a given linear order~$\tau$, which can be easily done in linear time in the size of the graph.
Then, we showed in the proof of Theorem~\ref{thm:complexity_SFS} that the initial order $\tau$ is always preserved in the classes of $\phi$ throughout the algorithm. Therefore, if we choose $\tau = \sigma^{-1}$, then the first vertex in each slice $S$ is exactly the vertex of $S$ appearing last in $\sigma$. 

\begin{corollary}\label{thm:complexity SFS_+}
The $\SFS_+$ algorithm 
\ignore{(Algorithm \ref{alg:SFS+})} 
applied to an $n\times n$  symmetric nonnegative matrix with $m$ nonzero entries runs in $O(n+m \log n)$ time.
\end{corollary}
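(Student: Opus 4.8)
The plan is to reduce the analysis to Theorem~\ref{thm:complexity_SFS} by observing that the $\SFS_+$ algorithm is nothing but the SFS implementation run with a carefully chosen initial ordering. First I would set $\tau=\sigma^{-1}$ as the initial order of $V$ and preprocess so that both the vertex set and each neighborhood list are sorted according to $\tau$. Since $\tau$ is given explicitly, this bucketing step costs only $O(n+m)$ time and will be absorbed in the final bound.

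The crux of the argument is the order-preservation property already established in the proof of Theorem~\ref{thm:complexity_SFS}: when a class $B_i$ of the queue $\phi$ is refined, the elements relocated into the new block $B_i'$ retain their relative $\tau$-order, so every class of $\phi$ stays sorted according to $\tau$ throughout the execution. Consequently, at every iteration the first element of the first block $B_1$ (the current slice $S$) is exactly the $\tau$-smallest vertex of $S$, which by $\tau=\sigma^{-1}$ is precisely the vertex of $S$ appearing last in $\sigma$. This is exactly the pivot prescribed by the $\SFS_+$ tie-breaking rule. Hence selecting the first vertex of $B_1$---a constant-time operation, just as in the SFS implementation---automatically realizes the $\SFS_+$ pivot choice, with no additional bookkeeping.

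Since pivot selection still runs in $O(1)$ and the two remaining tasks (computing the similarity partition $\psi_p$ and refining $\phi$ by $\psi_p$) are carried out verbatim as in Theorem~\ref{thm:complexity_SFS}, the per-iteration cost is unchanged at $O(1+|N_\phi(p_i)|\log|N_\phi(p_i)|)$. Summing over all $n$ iterations and adding the $O(n+m)$ preprocessing yields the total running time $O(n+m\log n)$, as claimed. The only point requiring care---and thus the main obstacle---is verifying that the $\tau$-order is genuinely maintained inside every class after each refinement; but this is exactly the invariant guaranteed by the refinement procedure analyzed in Theorem~\ref{thm:complexity_SFS}, so no new difficulty arises.
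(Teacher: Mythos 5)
Your proposal is correct and follows essentially the same route as the paper: choose the preprocessing order $\tau=\sigma^{-1}$, invoke the invariant from the proof of Theorem~\ref{thm:complexity_SFS} that each class of $\phi$ stays sorted according to $\tau$, so the first vertex of the slice is exactly the $\SFS_+$ pivot, and the per-iteration costs are unchanged. Nothing is missing.
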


\smallskip
It follows directly from Corollary  \ref{thm:complexity SFS_+} that any SFS multisweep algorithm with $k$ sweeps  can be implemented in 
$O(k(n+m \log n))$. Indeed the only additional tasks we need to do are the following: when we start a new $\SFS_+$ sweep we need to  reorder the vertices and their neighborhoods according to the reversal of the previous sweep,  and we need to check if the current sweep $\sigma_i$ is a Robinson ordering, which can  both be done in $O(m + n)$ time.

Alternatively, one can check at each iteration $i \geq 1$ if $\sigma_i = \sigma_{i-1}^{-1}$ holds, which requires only $O(n)$ time.
Then, if this is the case, in view of Corollary~\ref{thm:corollary alternative Robinson check} we know that $A$ is Robinsonian if and only $\sigma_i$ is a Robinson ordering.
Hence, one only needs to check once whether a given sweep is a Robinson ordering,
namely when it  is the reversed of the previous one.

Another similar approach is inspired by the one used in~\cite{Dusart15} for their multisweep algorithm to recognize cocomparability graphs.
Specifically, every time we compute an SFS ordering $\sigma_i$ with $i \geq 2$, we check if $\sigma_i = \sigma_{i-2}$. 
If this is the case, then we stop (because the algorithm will loop between $\sigma_{i-2}$ and $\sigma_{i-1}$) and we check whether $\sigma_i$ is Robinson.

In any case, as the multisweep algorithm (Algorithm \ref{alg:Robinson_recognition}) needs $k\le n-1$ sweeps, it runs in time $O(n^2+nm \log n)$.

\medskip
As already mentioned in Section \ref{sec:3-SFS algorithm}, if the matrix has only $0/1$ entries, then there is no need to order the neighborhood $N(p)$ of a given pivot $p$, because the similarity partition $\psi_{p}$ has only one class, equal to $N(p)$.
For this reason, 
in this case  the SFS algorithm can be implemented in linear time $O(m+n)$.
Furthermore, as shown in Theorem \ref{thm: our proof for 3sweep algorithm uig}, three sweeps suffice in the multisweep algorithm to find a Robinson ordering.
Therefore,  if $A$ is a binary matrix, the multisweep algorithm in~Algorithm \ref{alg:Robinson_recognition} has an overall running time of $O(m+n)$.
This is coherent with the fact that in the~$0/1$ case SFS reduces to  Lex-BFS. 

\medskip
When the graph $G$ associated to the matrix $A$ is connected the complexity of SFS and $\SFS_+$  is  $O(m\log n)$. Of course we may assume without loss of generality 
that we are in the connected case since we may deal with the connected components independently.
Indeed a matrix $A$ is Robinsonian if and only if the submatrices $A[C]$ are Robinsonian for all connected components $C$ of $G$, and Robinson orderings of the connected components $A[C]$ can be concatenated to give a Robinson ordering of the full matrix $A$.

\medskip
Finally we  observe that we may also exploit the potential  sparsity induced by  the {\em largest} entries of $A$.
While  $G$ is the graph whose edges are the pairs $\{x,y\}$ with entry $A_{xy}>0$ (where 0 is the smallest possible entry as $A$ is assumed to be nonnegative), we can
 also consider the graph $G'$ whose edges are the pairs $\{x,y\}$ with entry $A_{xy}<A_{\max}$, where $A_{\max}$ is the largest possible entry of $A$.
 Let $N'(p)$ denote the neighborhood of a vertex $p$ in $G'$ and 
let $m'$ denote the number of entries with $A_{xy}<A_{\max}$.
We claim that the SFS ($\SFS_+$) algorithm can also be implemented in time $O(n+m'\log n)$.

For this we  modify the definition of the similarity partition of a vertex $p$, which is now a partition of $N'(p)$ (so that the vertices $y\not\in N'(p)$ have entry $A_{py}=A_{\max}$) and the refinement of the queue $\phi$ by it:
while we previously build the queue $\phi$ of unvisited vertices using  a `push-first' strategy (put the vertices with highest similarity first) we now build the queue with a `push-last' strategy (put the vertices with lowest similarity last).

\section{Conclusions} \label{sec:7-conclusions}

In this paper we have introduced the new search algorithm \textit{Similarity-First Search} (SFS) and its variant $\SFS_+$, which are generalizations to weighted graphs  of the classical Lex-BFS algorithm and its variant Lex-BFS$_+$.
The algorithm is entirely based on the main task of partition refinement, it is conceptually simple and easy to implement. 
We have  shown that a multisweep algorithm can be designed using  SFS and $\SFS_+$, which permits to recognize if a symmetric $n\times n$ matrix  is Robinsonian and if so to return  a Robinson ordering after at most $n-1$ sweeps. We believe that this recognition algorithm is substantially simpler than the other existing algorithms. 
Moreover, to the best of our knowledge, this is the first work extending multisweep graph search algorithms to the setting of weighted graphs (i.e., matrices). 

\medskip
Our algorithm can also be used to recognize Robinsonian dissimilarities.
Recall that a  matrix $D\in \mathcal S^n$ is  a {\em Robinson dissimilarity matrix}  if 
$D_{xz} \geq \max\{D_{xy},D_{yz}\} $ for all $1\le x < y < z \le n$, and 
 a {\em Robinsonian dissimilarity} if its rows and columns can be simultaneously reordered to get a Robinson dissimilarity matrix. Clearly $D$ is a Robinsonian dissimilarity matrix if and only if the matrix $A=-D$ is a Robinsonian similarity matrix.
Therefore, one can check whether $D$ is a Robinsonian dissimilarity by applying the SFS-based multisweep algorithm to the matrix $A$. 

Alternatively one may also modify the SFS algorithm so that it can deal directly with dissimilarity matrices.
Say $D$ is a nonnegative dissimilarity matrix and $G$ is the corresponding weighted graph with edges the pairs $\{x,y\}$ with $D_{xy}>0$.
Then we can modify the SFS algorithm as follows. First, we now order the vertices in the neighborhood $N(p)$ of a vertex $p$ for {\em nondecreasing values} of the dissimilarities $D_{py}$ (instead of nonincreasing values of the similarities $A_{py}$ as was the case in SFS). Then we  construct the (dis)similarity partition $\psi_p$ of $N(p)$ by grouping the vertices with the same dissimilarity to $p$, in increasing values of the dissimilarities. 
Finally, when refining the queue $\phi$ by $\psi_p$, we apply a `push-first' strategy and place the vertices with lowest dissimilarity first. 
The resulting algorithm, which we name $\DiSFS$, standing for {\em Dissimilarity-Search First}, has the same running time in $O(n+m\log n)$. 
Moreover, as explained above at the end of Section \ref{sec:6-complexity}, it can also be implemented in time $O(n+m'\log n)$, where $m'$ denotes the number of entries of $D$ satisfying $D_{xy}<D_{\max}$ and  $D_{\max}$ denotes the largest entry of $D$.
Using $\DiSFS$ we can define the analogous multisweep algorithm for recognizing Robinsonian dissimilarities in time $O(n^2+nm\log n)$ (or $O(n^2+nm'\log n)$).

\medskip
As we have seen in Subsection \ref{sec:5-worst case instances}, there exists a family of $n \times n$ Robinsonian matrices where $n-1$ sweeps are  needed. 
It is an open question whether the multisweep algorithm can be modified in such a way  that it would need only a constant number of sweeps, in which case it might become competitive with the optimal algorithm of \cite{Prea14}.
For this one would  need to define another variant of SFS. 
A possible variant is  when  ties are broken using the SFS orderings returned by two previous sweeps (instead of only one as in the $\SFS_+$ variant). 
This approach has been succesfully applied to Lex-BFS in \cite{Corneil09} for the recognition of interval graphs in five Lex-BFS sweeps; there the last sweep used is the variant Lex-BFS$_*$,  which breaks ties using the linear order returned by two previous sweeps.
Dusart and Habib \cite{Dusart15} conjecture that a similar approach applies to recognize  cocomparability graphs with a constant number of sweeps.
Investigating  whether such an approach applies to Robinsonian matrices will be the subject of future work.
 
\medskip   
Finally, it will be interesting to investigate whether the new SFS algorithm can be used to study other classes of structured matrices and in the general area of similarity search and clustering analysis.

\section*{Acknowledgements}

This work was supported by the Marie Curie Initial Training Network ``Mixed Integer Nonlinear Optimization" (MINO)  grant no. 316647.
We are grateful to the  anonymous referees for their useful comments and to Shinichi Tanigawa for communicating to us  the  family of instances presented in Subsection~\ref{sec:5-worst case instances}, allowing us to include it in the paper.
The second author also thanks Michel Habib for his useful comment about the complexity of the algorithm and Mohammed El Kebir for useful discussions.

\bibliographystyle{plain}

\end{document}